\theoremstyle{plain}
\theoremstyle{plain}
\theoremstyle{plain}
\theoremstyle{plain}
\newtheorem{proposition}{Proposition}[section]
\theoremstyle{plain}
\theoremstyle{remark}
\theoremstyle{definition}
\newcommand{\was}{\mathcal{W}}
\newcommand{\argmin}{\operatornamewithlimits{argmin\,}}
\mathchardef\mhyphen="2D
\providecommand{\algorithmname}{Algorithm}
\title{Schr\"odinger Bridge Samplers}
\author{Espen Bernton\thanks{Department of Statistics, Columbia University, USA.}~,
Jeremy Heng\thanks{ESSEC Business School, Singapore.}~, Arnaud Doucet\thanks{Department of Statistics, University of Oxford, UK.}~~and Pierre E. Jacob\thanks{Department of Statistics, Harvard University, USA.}}
\date{}
\begin{document}

\maketitle

\begin{abstract}
\noindent Consider a reference Markov process with initial distribution $\pi_{0}$ and transition kernels $\{M_{t}\}_{t\in[1:T]}$, for some $T\in\mathbb{N}$. Assume that you are given distribution $\pi_{T}$, which is not equal to the marginal distribution of the reference process at time $T$. In this scenario, Schr\"odinger addressed the problem of identifying the Markov process with initial distribution $\pi_{0}$ and terminal distribution equal to $\pi_{T}$ which is the closest to the reference process in terms of Kullback--Leibler divergence. This special case of the so-called Schr\"odinger bridge problem can be solved using iterative proportional fitting, also known as the Sinkhorn algorithm.
We leverage these ideas to develop novel Monte Carlo schemes, termed Schr\"odinger bridge samplers, to approximate a target distribution $\pi$ on $\mathbb{R}^{d}$ and to estimate its normalizing constant. This is achieved by iteratively modifying the transition kernels of the reference Markov chain to obtain a process whose marginal distribution at time $T$ becomes closer to $\pi_T = \pi$, via regression-based approximations of the corresponding iterative proportional fitting recursion. We report preliminary experiments and make connections with other problems arising in the optimal transport, optimal control and physics literatures.
\end{abstract}
{\bf Keywords:} Annealed importance sampling; iterative proportional fitting; normalizing constant; sequential Monte Carlo samplers; Schr\"odinger bridge; Sinkhorn's algorithm; optimal transport.


\section{Introduction}\label{sec:introduction}
\subsection{Outline and literature review}\label{sec:problem}
Let $\pi$ be a distribution which admits a density, with respect to some dominating measure on a measurable space $(\mathsf{E},\mathcal{E})$, that can only be evaluated pointwise up to a normalizing constant $Z$. We are interested in approximating expectations with respect to $\pi$ as well as the value of $Z$.
State-of-the-art Monte Carlo methods to address this problem include Annealed Importance Sampling \citep[AIS;][]{crooks1998nonequilibrium,neal:2001} and Sequential Monte Carlo \citep[SMC;][]{del2006sequential}. The basis of these methods is to simulate $N$ non-homogeneous Markov chains with initial distribution $\pi_{0}$ and transition kernels $\{M_{t}\}_{t\in[1:T]}$, designed such that the marginal distribution of each Markov chain at time $T$ is approximately equal to $\pi$\footnote{We can also use an interacting particle system instead of independent Markov chains \citep{del2006sequential}.}. However,  this marginal distribution is typically not analytically available, prohibiting its direct application as a proposal distribution within importance sampling. In AIS and SMC, this intractability is circumvented by introducing an appropriate auxiliary target distribution on the path space $\mathsf{E}^{T+1}$ whose marginal at time $T$ coincides with $\pi$ and with respect to which importance weights can be calculated. This allows us to obtain consistent estimates of expectations with respect to $\pi$ and of its normalizing constant $Z$.

These methods have found many applications in physics and statistics, but can perform poorly when the marginal distribution of the samples at time $T$ differs significantly from $\pi$, resulting in importance weights with high variance. Building upon previous contributions for inference in partially observed diffusions and state-space models \citep{richard2007efficient,kappen2016adaptive,guarniero2017iterated}, the controlled SMC sampler methodology of \citet{heng2017controlled} uses ideas from optimal control to iteratively modify the initial distribution and transition kernels of the reference Markov process to reduce the Kullback--Leibler divergence between the induced path distribution and the auxiliary target distribution on $\mathsf{E}^{T+1}$. When applicable, controlled SMC samplers demonstrate clear improvements over AIS and SMC. However, a limitation of this approach is that one must be able to sample from a modified initial distribution.
Practically, this means that $\pi_{0}$ must be conjugate with respect to the policy of the underlying optimal control problem. Additionally, the transition kernels of the reference Markov process must also be conjugate with respect to the chosen policy.

We propose here an alternative approach which is more widely applicable. First, we only modify the transition kernels and not the initial distribution, and relax the requirement that these kernels have to be conjugate with respect to the policy. Second, instead of minimizing the KL divergence on path space $\mathsf{E}^{T+1}$ with respect to a \textit{fixed} auxiliary target distribution, the auxiliary target is itself being optimized across iterations.  We describe our algorithm as an approximation of iterative proportional fitting (IPF), an algorithm introduced in various forms by \citet{deming1940least,sinkhorn1967,ireland1968contingency,kullback1968probability}. In finite state-spaces, this algorithm is also known as Sinkhorn's algorithm and has recently gained much attention in machine learning \citep{cuturi2013sinkhorn,peyre2018computational}. Under weak regularity conditions, IPF is known to converge to the solution of the Schr\"odinger bridge problem in both finite and continuous state-spaces; see, e.g., \citep{sinkhorn1967,ruschendorf1995convergence}. However, whereas in finite state-spaces, the steps of IPF can be computed exactly, these steps are intractable in all but trivial scenarios in continuous state-spaces. Recent computational approaches proposed to approximate the IPF recursion in continuous state-spaces either rely on deterministic \citep{chen2016entropic} or stochastic \citep{reich2018data} discretization of the space using $N$ atoms, and then fall back on the finite state-space IPF algorithm. 
We propose here an approximate IPF scheme which instead relies on regression-based approximations in the spirit of \citet{heng2017controlled}. We demonstrate experimentally its performance on various problems.

The rest of this paper is organized as follows. In the remaining part of Section \ref{sec:introduction}, we define our notation, formalize the problem statement, review SMC samplers and their limitations, and give a brief overview of the proposed method. In Section \ref{sec:sb}, we discuss Schr\"odinger bridges and their various formulations, and introduce numerical algorithms to approximate them. In Section \ref{sec:ssb_samplers}, we discuss our main computational contribution, which we term the sequential Schr\"odinger bridge sampler. In Section \ref{sec:connections}, we discuss connections between the Schr\"odinger bridge problem and various other topics. Section \ref{sec:numerical_experiments} contains numerical experiments, and Section \ref{sec:discussion} concludes.

\subsection{Notation}\label{sec:notation}
Given integers $n\leq m$ and a sequence $\{x_{t}\}_{t\in\mathbb{N}}$, we define the set $[n:m]=\left\{ n,\ldots,m\right\}$  and write the subsequence $x_{n:m}=(x_{n}, x_{n+1}, \ldots,x_{m})$. Let $(\mathsf{E},\mathcal{E})$ be an arbitrary measurable space, $\mathcal{P}(\mathsf{E})$ and $\mathcal{M}(\mathsf{E})$ denote the set of all probability measures and Markov transition kernels on $\mathsf{E}$, respectively. Given $\mu,\nu\in\mathcal{P}(\mathsf{E})$, we write $\mu\ll\nu$ if $\mu$ is absolutely continuous with respect to $\nu$, and denote the corresponding Radon--Nikodym derivative as $\mathrm{d}\mu\mathrm{/d}\nu$. The Kullback--Leibler (KL) divergence from $\nu\in\mathcal{P}(\mathsf{E})$ to $\mu\in\mathcal{P}(\mathsf{E})$ is defined as
$$\mathrm{KL}(\mu|\nu)=\int_{\mathsf{E}}\log\frac{\mathrm{d}\mu}{\mathrm{d}\nu}(x)\mu(\mathrm{d}x)$$
if the integral is finite and $\mu\ll\nu$, and $\mathrm{KL}(\mu|\nu)=\infty$ otherwise. The set of all real-valued, $\mathcal{E}$-measurable and bounded functions on $\mathsf{E}$ is denoted by $\mathcal{B}(\mathsf{E})$. Given $\mu\in\mathcal{P}(\mathsf{E})$, $M\in\mathcal{M}(\mathsf{E})$ and $\varphi\in\mathcal{B}(\mathsf{E})$, we define the integral $\mu(\varphi)=\int_{\mathsf{E}}\varphi(x)\mu(\mathrm{d}x)$ and the function $M(\varphi)(\cdot)=\int_{\mathsf{E}}\varphi(y)M(\cdot,\mathrm{d}y)\in\mathcal{B}(\mathsf{E}).$ For ease of presentation, we will often assume that measures and transition kernels admit densities with respect to a $\sigma$-finite dominating measure $\mathrm{d}x$, in which case we write the densities of $\mu\in\mathcal{P}(\mathsf{E})$ and $M\in\mathcal{M}(\mathsf{E})$ as $\mu(\mathrm{d}x) = \mu(x)\mathrm{d}x$ and $M(x,\mathrm{d}y) = M(x,y)\mathrm{d}y$, respectively.

\subsection{Problem formulation and SMC samplers}\label{sec:problem}
In this article, we will restrict ourselves to $\mathsf{E}:=\mathbb{R}^{d}$, with $\mathcal{E}$ being the corresponding Borel $\sigma$-algebra. We are interested in sampling from a target distribution $\pi$ on $\mathsf{E}$ which admits a density $\pi(x) = Z^{-1}\gamma(x) \in \mathcal{P}(\mathsf{E})$ with respect to the Lebesgue measure $\mathrm{d}x$, assuming that we can evaluate $\gamma(x)$ pointwise.  We are also interested in estimating its normalizing constant $Z = \int_{\mathsf{E}}\gamma(x)\mathrm{d}x$. A standard strategy is to introduce a collection of auxiliary probability measures $\{\pi_t\}_{t\in[0:T]}\subset \mathcal{P}(\mathsf{E})$ that interpolate between an easy-to-sample distribution $\pi_0\in \mathcal{P}(\mathsf{E})$ and the target distribution $\pi_T = \pi$,  for some $T\in\mathbb{N}$. A typical example is the geometric interpolation where the auxiliary distributions admit densities of the following form:
\begin{equation}\label{eq:geometric_path}
\gamma_t(x_t) := \pi_0(x_t)^{1-\lambda_t}\gamma(x_t)^{\lambda_t}, \quad \pi_t(x_t)  :=  \gamma_t(x_t)/Z_t, \quad t\in[0:T],
\end{equation}
where $Z_t = \int_{\mathsf{E}}\gamma_t(x)\mathrm{d}x$ and $\{\lambda_t\}_{t\in[0:T]} \subset [0,1]$ is an increasing sequence satisfying $\lambda_0 = 0$ and $\lambda_T = 1$.

The rationale for introducing the sequence $\{\pi_t\}_{t\in[0:T]}$ is that if neighboring distributions $\pi_{t-1}$ and $\pi_t$ are not too different, it might be possible to construct \textit{forward} Markov transition kernels $\{M_t\}_{t\in[1:T]} \subset \mathcal{M}(\mathsf{E})$ such that samples from $\pi_{t-1}$ are approximately distributed as $\pi_t$ when moved with $M_t$. This sampling strategy results in a non-homogeneous Markov chain with initial distribution $\pi_0$ and transition kernels $\{M_t\}_{t\in[1:T]}$, giving rise to the path measure
\begin{equation}\label{eq:Q}
\mathbb{Q}(\mathrm{d}x_{0:T}) = \pi_{0}(\mathrm{d}x_0)\prod_{t=1}^T M_{t}(x_{t-1}, \mathrm{d}x_{t}).
\end{equation}
The distribution $\mathbb{Q}$ is such that the marginal distribution $q_T$ of $X_T$ is an approximation of the target $\pi_T$. This idea motivates using $q_T$ as a proposal distribution targeting $\pi_T$ in importance sampling, but the corresponding Radon--Nikodym derivative $\mathrm{d}\pi_T\mathrm{/d}q_T$ cannot be computed even up to a normalizing constant, as $q_T$ is typically intractable.

SMC samplers \citep{del2006sequential} avoid this intractability by instead performing importance sampling on path space $\left(\mathsf{E}^{T+1},\mathcal{E}^{T+1}\right)$ by defining the extended target distribution
\begin{equation}\label{eq:extended_target}
\mathbb{P}(\mathrm{d}x_{0:T}) = \pi_{T}(\mathrm{d}x_T)\prod_{t=1}^T L_{t-1}(x_t, \mathrm{d}x_{t-1}),
\end{equation}
where $\{L_t\}_{t\in [0:T-1]}$ is a sequence of arbitrary \textit{backward} Markov transition kernels, selected such that $\mathbb{P}\ll\mathbb{Q}$ and $\mathrm{d}\mathbb{P}\mathrm{/d}\mathbb{Q}$ can be evaluated pointwise up to a normalizing constant. As the distribution $p_T$ of $X_T$ under $\mathbb{P}$ is such that $p_T = \pi_T$, an importance sampling approximation of $\mathbb{P}$ provides directly an approximation of $\pi_T$ and an unbiased Monte Carlo estimate of $Z_T/Z_0$ using samples from $\mathbb{Q}$, thanks to the identity
\begin{equation}\label{eq:norm_const_identity}
\frac{Z_T}{Z_0}=\frac{Z_T}{Z_0}\mathbb{E}_{\mathbb{Q}}\left[\frac{\mathrm{d}\mathbb{P}}{\mathrm{d}\mathbb{Q}}(X_{0:T})\right]=\mathbb{E}_{\mathbb{Q}}\left[\prod_{t=1}^T w_{t}(X_{t-1},X_t)\right],
\end{equation}
where the \textit{incremental} importance weights are given by
\begin{equation}
w_{t}(x_{t-1},x_t):=\frac{\mathrm{d} L_{t-1} \otimes \gamma_t}{\mathrm{d}\gamma_{t-1} \otimes M_{t}}(x_{t-1}, x_t),
\end{equation}
and
\begin{align*}
L_{t-1} \otimes \gamma_t(\mathrm{d}x_{t-1},\mathrm{d}x_{t})&:=\gamma_t(\mathrm{d}x_{t})L_{t-1}(x_{t},\mathrm{d}x_{t-1}), \\
\gamma_{t-1} \otimes M_{t}(\mathrm{d}x_{t-1},\mathrm{d}x_t)&:=\gamma_{t-1}(\mathrm{d}x_{t-1})M_{t}(x_{t-1},\mathrm{d}x_t).
\end{align*}

However, the performance of any importance sampling scheme depends on the Kullback-Leibler discrepancy between the target $\pi_T$ and the proposal distributions \citep{chatterjee2018sample}, which can only increase when extending the domain of integration from $\mathsf{E}$ to $\mathsf{E}^{T+1}$. This can be seen from the decomposition
\begin{align}
\mathrm{KL}(\mathbb{P}|\mathbb{Q}) 
&=  \mathrm{KL}(\pi_T | q_T) + \int_{\mathsf{E}} \mathrm{KL}\left(\mathbb{P}(\mathrm{d}x_{0:T-1} | x_T) | \mathbb{Q}(\mathrm{d}x_{0:T-1} | x_T)\right) \pi_T(\mathrm{d}x_{T}),  \label{eq:decompose_kl}
\end{align}
in which the first term captures the discrepancy between $q_T$ and $\pi_T$, and the second captures the additional discrepancy arising from the introduction of the backward kernels $\{L_t\}_{t\in[0:T-1]}$. In other words, making the importance weights tractable comes at a cost.

The formula in \eqref{eq:decompose_kl} also shows that $\mathrm{KL}(\mathbb{P}|\mathbb{Q})$ is minimized by choosing the backward kernels to make the equality
\begin{equation}\label{eq:optimal_P}
\mathbb{P}(\mathrm{d}x_{0:T}) = \frac{\mathrm{d}\pi_T}{\mathrm{d}q_T}(x_T) \mathbb{Q}(\mathrm{d}x_{0:T})
\end{equation}
hold, as this is equivalent to $\mathbb{P}(\mathrm{d}x_{0:T-1} | x_T) = \mathbb{Q}(\mathrm{d}x_{0:T-1} | x_T)$ for $\pi_T$-almost every $x_T$. The optimal backward kernels $\{L^\text{opt}_t\}_{t\in [0:T-1]}$ must therefore satisfy the forward-backward relation
\begin{equation}\label{forward_backward}
\pi_0(\mathrm{d}x_0) \prod_{t=1}^T M_t(x_{t-1}, \mathrm{d}x_t) = q_T(\mathrm{d}x_T) \prod_{t=1}^T L^\text{opt}_{t-1}(x_{t}, \mathrm{d}x_{t-1}),
\end{equation}
where $q_t(\mathrm{d}x) L^\text{opt}_{t-1}(x,\mathrm{d}x')=q_{t-1}(\mathrm{d}x') M_t(x',\mathrm{d}x)$, as noticed by \citet{del2006sequential}. They also showed that these kernels minimize the variance of the weights $\prod_{t=1}^T w_{t}(X_{t-1},X_t)$ under $X_{0:T} \sim \mathbb{Q}$. Note that this choice remains intractable, and one must typically resort to approximations of the optimal backward kernels in practice.

Both \citet{crooks1998nonequilibrium} and \citet{neal:2001} restrict themselves to scenarios where $M_t$ is selected to be $\pi_t$-invariant and $L_{t-1}$ is the reversal of $M_t$\footnote{The kernel $L_{t-1}$ is the reversal of the $\pi_t$-invariant kernel $M_t$ if $\pi_t(\mathrm{d}x)M_t(x,\mathrm{d}x')=\pi_t(\mathrm{d}x')L_{t-1}(x',\mathrm{d}x)$. In particular, $L_{t-1}=M_t$ if $M_t$ is $\pi_t$-reversible}. Under these conditions, \eqref{eq:norm_const_identity} corresponds to a discrete-time version of the celebrated Jarzynski's identity \citep{jarzynski1997nonequilibrium}. The generalized framework of \citet{del2006sequential} presented here is key to the developments in this article, as we will use forward transition kernels that are only approximately invariant with respect to the measures $\pi_t$.

Imagine for a moment that we flip the roles of $\pi_0$ and $\pi_T$ and that we could sample from the path measure $\mathbb{P}^{(1)} := \mathbb{P}$ defined in \eqref{eq:optimal_P} to target $\pi_0$. Following the reasoning above, but applied backwards in time, the corresponding optimal forward kernels would give rise to a path measure $\mathbb{Q}^{(1)}$ such that $\mathbb{Q}^{(1)}(\mathrm{d}x_{1:T} | x_0) = \mathbb{P}^{(1)}(\mathrm{d}x_{1:T} | x_0)$. In turn, the path measure $\mathbb{Q}^{(1)}$ could hypothetically be used as a  path space proposal targeting $\pi_T$. Furthermore, this process could be iterated to construct forward and backward kernels $\{M^{(i)}_t\}_{t\in[1:T]}$ and $\{L^{(i)}_t\}_{t\in[0:T-1]}$ and corresponding path measures $\mathbb{Q}^{(i)}$ and $\mathbb{P}^{(i)}$. In what follows, we will describe this algorithm as \textit{iterative proportional fitting} (IPF). Under certain regularity conditions, the measures $\mathbb{Q}^{(i)}$ and $\mathbb{P}^{(i)}$ both converge to the solution of the so-called Schr\"odinger bridge problem. By numerically approximating the steps of IPF, we will leverage the reduction in discrepancy between $\mathbb{Q}^{(i)}$ and $\mathbb{P}^{(i)}$ to yield efficient SMC sampling from the target distribution $\pi_T$.

\subsection{Outline of the proposed method}
The formulas in equations \eqref{eq:decompose_kl} and \eqref{eq:optimal_P} are at the core of our approach. In Section \ref{sec:sb}, we will describe \eqref{eq:optimal_P} as providing the solution to the \textit{forward} Schr\"odinger half-bridge problem
\begin{equation}\label{eq:shb_forward}
\mathbb{P}^{(i)}(\mathrm{d}x_{0:T}) = \argmin_{\mathbb{H} \in \mathcal{P}_T(\pi_T)} \mathrm{KL}(\mathbb{H} | \mathbb{Q}^{(i-1)}),
\end{equation}
where for any $t \in [0:T]$, $\mathcal{P}_t(\pi_t)$ denotes the set of path measures on $\left(\mathsf{E}^{T+1},\mathcal{E}^{T+1}\right)$ that have $\pi_t$ as their $t$-th marginal distribution. Approximating its solution allows us to implicitly estimate the optimal backward kernels for a given set of forward kernels.
By alternating this step with solving the analogous \textit{backward} Schr\"odinger half-bridge problem
\begin{equation}\label{eq:shb_backward}
\mathbb{Q}^{(i)}(\mathrm{d}x_{0:T}) = \argmin_{\mathbb{H} \in \mathcal{P}_0(\pi_0)} \mathrm{KL}(\mathbb{H} | \mathbb{P}^{(i)}),
\end{equation}
we will also make refinements on the forward kernels.

Iterating between the forward and backward half-bridge problems can be seen as an instance of IPF \citep{deming1940least,ireland1968contingency,kullback1968probability}. Under the weak regularity conditions detailed by  \citet{ruschendorf1995convergence}, the iterates $\mathbb{P}^{(i)}$ and $\mathbb{Q}^{(i)},$ initialized at $\mathbb{Q}^{(0)} = \mathbb{Q}$, are known to converge as $i\to \infty$ to the solution of the Schr\"odinger bridge problem
\begin{equation} \label{eq:sb}
\mathbb{S}(\mathrm{d}x_{0:T}) = \argmin_{\mathbb{H} \in \mathcal{P}_{0,T}{(\pi_{0,T})}} \mathrm{KL}(\mathbb{H} | \mathbb{Q}),
\end{equation}
where $\mathcal{P}_{0,T}{(\pi_{0,T})} =  \mathcal{P}_0{(\pi_0)}\cap \mathcal{P}_T{(\pi_T)}$. The IPF algorithm is illustrated in Figure \ref{fig:IPF_illustration}, and a comparison between a reference process and the associated Schr\"odinger bridge is provided in Figure \ref{fig:langevin_ridgeplot}.

\begin{figure}[t]
\centering
\includegraphics[width=0.6\textwidth]{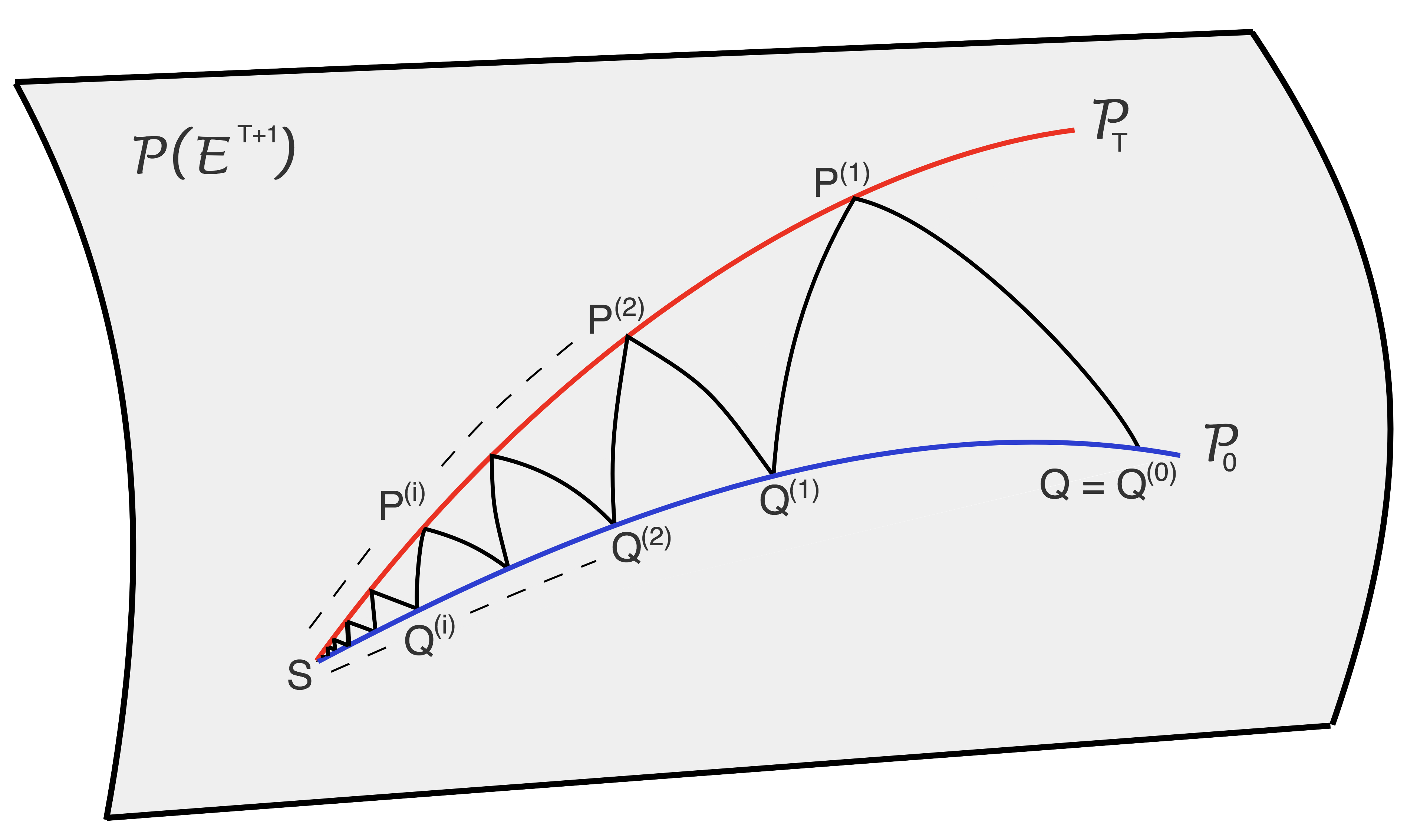}
        \caption{\small Illustration of the iterative proportional fitting procedure. The blue line represents $\mathcal{P}_0(\pi_0) \subset \mathcal{P}(\mathsf{E}^{T+1})$, denoted $\mathcal{P}_0$ in the figure, while the red line represents $\mathcal{P}_T(\pi_T) \subset \mathcal{P}(\mathsf{E}^{T+1})$, denoted $\mathcal{P}_T$. The black line illustrates that the alternating KL projections $\mathbb{Q}^{(i)} \in \mathcal{P}_0(\pi_0)$ and $\mathbb{P}^{(i)} \in \mathcal{P}_T(\pi_T)$ converge towards the Schr\"odinger bridge $\mathbb{S}$.}
        \label{fig:IPF_illustration}
\end{figure}

\begin{figure}[h]
\centering
        \begin{subfigure}[t]{\textwidth}
            \centering
            \includegraphics[width=0.85\textwidth]{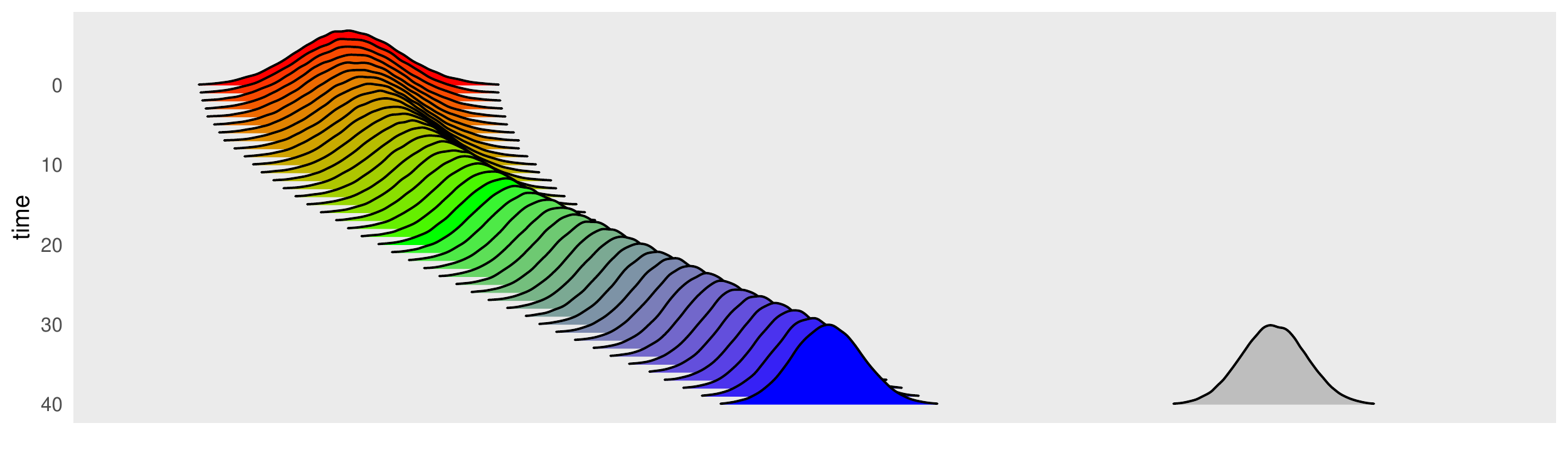}
            \caption{{ Marginals $q_t$ of a reference process $\mathbb{Q}(\mathrm{d}x_{0:T})$ with $T = 40$ and the target $\pi$.}}
            \label{fig:langevin_init_ridgeplot}
        \end{subfigure}

        \vspace*{0.7cm}

        \begin{subfigure}[t]{\textwidth}
            \centering
            \includegraphics[width=0.85\textwidth]{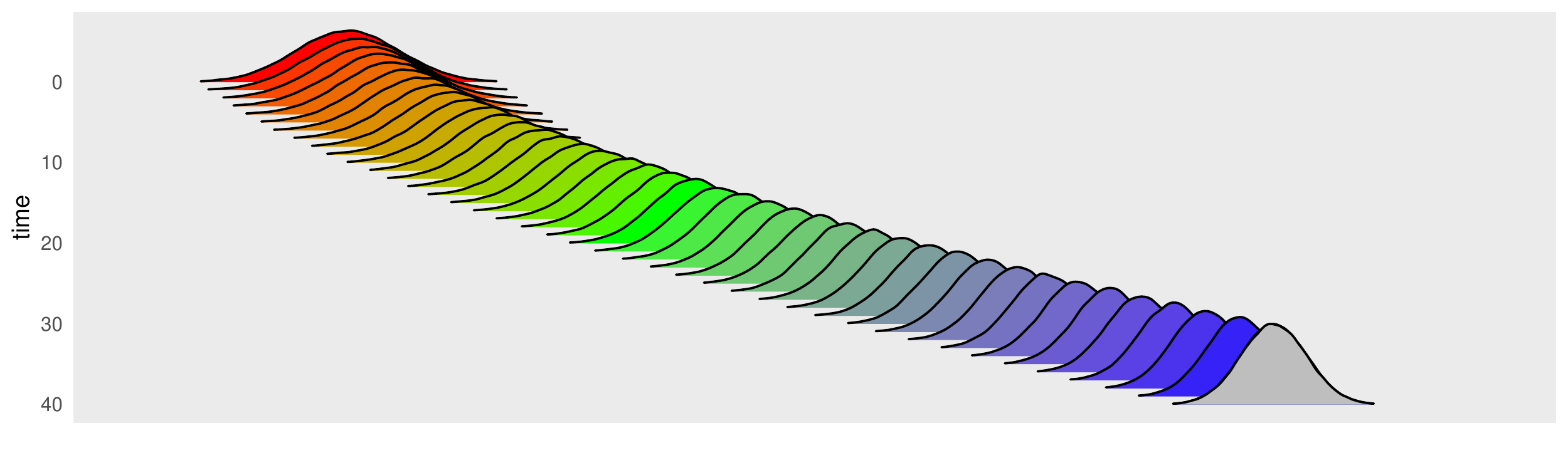}
            \caption{{ Marginals $s_t$ of the Schr\"odinger bridge $\mathbb{S}(\mathrm{d}x_{0:T})$  corresponding to the reference process  $\mathbb{Q}(\mathrm{d}x_{0:T})$.}}
            \label{fig:langevin_sb_ridgeplot}
        \end{subfigure}
        \caption{\small Illustration of the evolutions of the marginal distributions $q_t$ and $s_t$ of the reference process $\mathbb{Q}(\mathrm{d}x_{0:T})$ and the associated Schr\"odinger bridge $\mathbb{S}(\mathrm{d}x_{0:T})$, respectively. The colors transition from red to green to blue as $t$ increases from $0$ to $T = 40$. The Schr\"odinger bridge corresponds to the minimal modification of the reference process in terms of KL that interpolates between the initial distribution $\pi_0$ and the target distribution $\pi_T = \pi$, here given in grey.}
        \label{fig:langevin_ridgeplot}
\end{figure}

The Schr\"odinger bridge problem has a long history in physics, and was first studied by its namesake \citet{schrodinger1931umkehrung,schrodinger1932theorie} because of its connections to quantum mechanics. It was later rediscovered and posed in its modern formulation \eqref{eq:sb} in probability \citep[e.g.][]{dawson1987large, follmer1988random} and control theory \citep[e.g.]{mikami1990variational, dai1991stochastic}; see the recent survey of \citet{leonard2014survey} for a thorough overview of these links. Of particular importance in our method will be the latter of these formulations, namely that the full- and half-bridges are solutions to finite-horizon minimum KL control problems \citep[see e.g.][]{todorov2009efficient}. This perspective will lend computational approaches that allow us to approximate the iterations \eqref{eq:shb_forward} and \eqref{eq:shb_backward} in continuous spaces, where they typically cannot be implemented exactly.

Specifically, we utilize the class of path measures that arise as $\psi$-twisted versions of $\mathbb{Q}$, i.e.~measures of the form
\begin{equation}
\mathbb{Q}^\psi(\mathrm{d}x_{0:T}) = \pi_{0}^\psi(\mathrm{d}x_0)\prod_{t=1}^T M^\psi_{t}(x_{t-1}, \mathrm{d}x_{t}),
\end{equation}
where $\psi = \{\psi_t\}_{t\in[0:T]}$ is a collection of strictly positive functions on $\mathsf{E}$ referred to as a \emph{policy}, and
\begin{equation}\label{eq:twisted_kernel}
\pi_0^\psi(\mathrm{d}x_0) = \frac{\pi_0(\mathrm{d}x_0)\psi_0(x_0)}{\pi_0(\psi_0)}, \quad  M^\psi_{t}(x_{t-1}, \mathrm{d}x_t) = \frac{M_t(x_{t-1}, \mathrm{d}x_t)\psi_{t}(x_t)}{M_t(\psi_t)(x_{t-1})}, \quad t\in[1:T].
\end{equation}
In Section \ref{sec:sb}, we show that the Schr\"odinger bridge itself belongs to this class, i.e.~that $\mathbb{S} = \mathbb{Q}^{\psi^\star}$ for some policy $\psi^\star$. Similarly, the half-bridge problems \eqref{eq:shb_forward} and \eqref{eq:shb_backward} can be expressed as iterative refinements of a policy approximating $\psi^\star$, which can be estimated using parametric policy classes and approximate dynamic programming in the spirit of \citet{heng2017controlled}.

In practice, this approach requires to be able to sample from path measures of the form $\mathbb{Q}^\psi$, which places restrictions on the initial Markov process $\mathbb{Q}$ and the policy $\psi$. In our numerical examples, the reference process is obtained by discretizing the continuous-time (overdamped) Langevin dynamics
\begin{equation}\label{eq:langevin}
\mathrm{d}X_s = \frac{1}{2}\nabla \log \pi_s(X_s) \mathrm{d}s + \mathrm{d}W_s, \quad s\in[0,\tau],
\end{equation}
where $W_s$ denotes the standard Brownian motion, and $(\pi_s)_{s\in[0,\tau]}$ is a smooth curve of distributions such that $\pi_\tau = \pi$, e.g.~a continuous version of the geometric interpolation in \eqref{eq:geometric_path}. There are two main motivations for this choice. Firstly, if $\tau$ is large enough, the distribution of $X_s$ in \eqref{eq:langevin} is close to $\pi_s$ for any $s$; see e.g. \citet{chiang1987diffusion}. Provided the discretization of $[0,\tau]$ is fine enough, we expect the marginal distribution of the discretized process to be close to $\pi_s$ also. In contrast to using a Brownian motion as the reference process like in \citet{mikami2004monge,tzen2019theoretical}, such a reference process provides a terminal marginal not too different from the target $\pi_T$ and the corresponding Schr\"odinger bridge is typically easier to approximate numerically. Secondly, this perspective will allow us to more readily estimate the optimal backward kernels, and to make use of flexible function estimation methods to approximate the optimal policy.

We can exploit further the property that it is easier to approximate the Schr\"odinger bridge when the terminal marginal of the reference process is not too different from the target $\pi_T$ by composing the solutions of a sequence of intermediate Schr\"odinger bridge problems, for instance between adjacent distributions in the geometric interpolation \eqref{eq:geometric_path}. This underpins the sequential Schr\"odinger bridge (SSB) sampler proposed in Section \ref{sec:ssb}. In Figure \ref{fig:sb_ssb_comparison} we illustrate a continuous-time process and its corresponding discretization, and compare these reference processes to their associated (sequential) Schr\"odinger bridges.

\begin{figure}[t]
\centering
\includegraphics[width=0.7\textwidth]{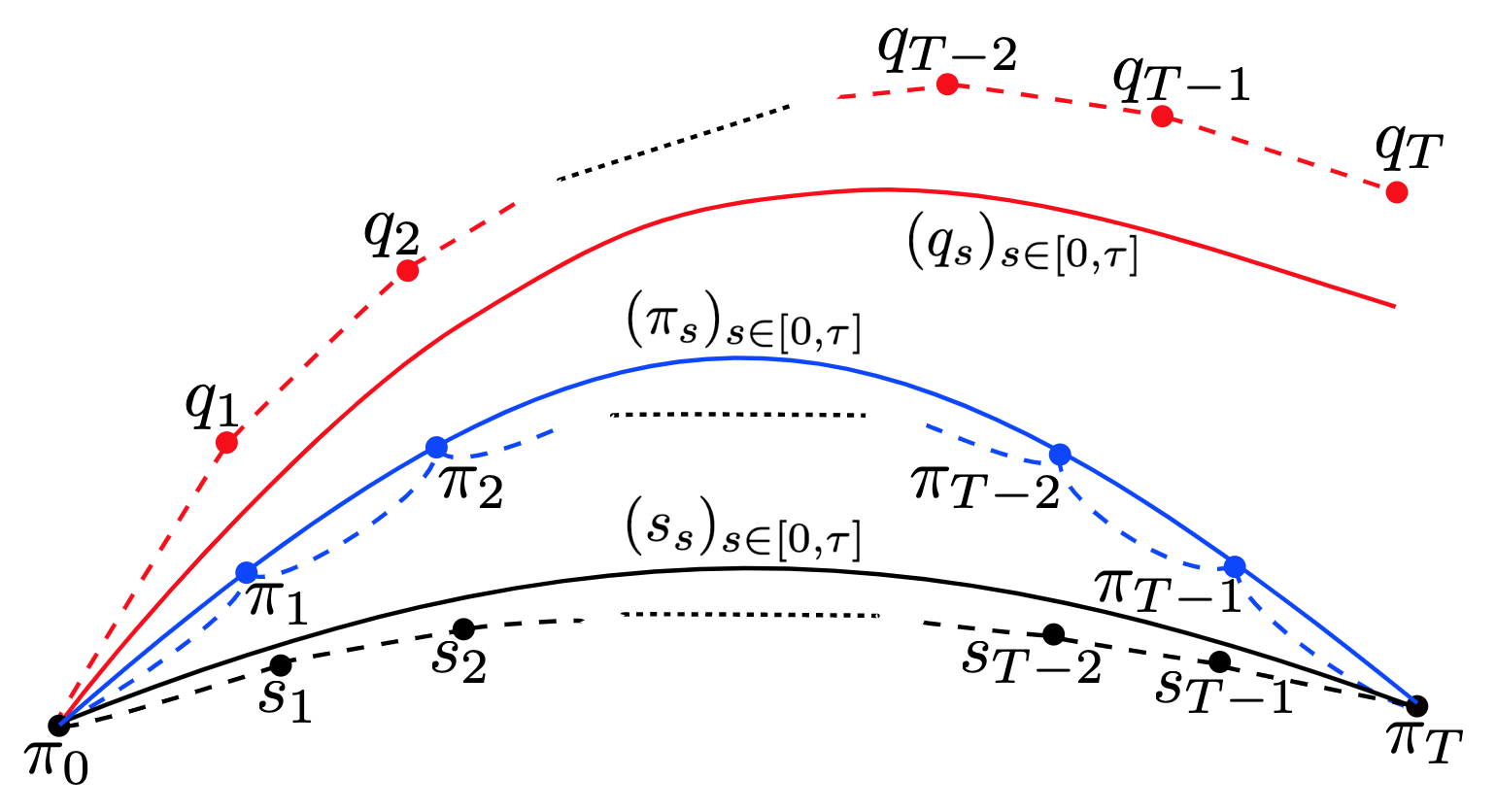}
        \caption{\small Illustration of the marginals of a continuous-time reference process $(q_s)_{s\in[0,\tau]}$ initialized at $\pi_0$ and its discretization $\{q_t\}_{t\in[0:T]}$, shown as solid and dashed red lines respectively. Due to the discretization error, the two paths do not overlap exactly. The solid and dashed black lines correspond to the marginal distributions of the continuous-time and discrete-time Schr\"odinger bridges $(s_s)_{s\in[0,\tau]}$ and $\{s_t\}_{t\in[0:T]}$ associated with the aforementioned reference processes. Respectively, they form continuous and discrete interpolations between the marginal distributions $\pi_0$ and $\pi_T = \pi_\tau$. The blue dashed line corresponds to the discrete-time sequential Schr\"odinger bridge, based on solving Schr\"odinger bridge problems between adjacent distributions in the discrete interpolation $\{\pi_t\}_{t\in[0:T]}$. The blue solid line corresponds to the smooth interpolation $(\pi_s)_{s\in[0,\tau]}$, which is also equal to the marginals of the continuous-time version of the sequential Schr\"odinger bridge in Section \ref{sec:ssb_samplers}.}
        \label{fig:sb_ssb_comparison}
\end{figure}

\section{Schr\"odinger bridges}\label{sec:sb}
We review different formulations of the Schr\"odinger bridge problem introduced in \eqref{eq:sb} and IPF to find its solution. We then reformulate IPF as an algorithm that acts in the space of policies. Parameterizing this space allows us to develop a numerical algorithm to approximate the Schr\"odinger bridge based on approximate dynamic programming. In the numerical experiments we work with kernels that arise from discretizing continuous-time processes, and thus also review some properties of the continuous-time formulation of the Schr\"odinger bridge problem.

\subsection{Dynamic and static formulations}
Recall that for a reference path measure $\mathbb{Q}$ and initial and terminal constraints $\pi_0$ and $\pi_T$ respectively, the \textit{dynamic} Schr\"odinger bridge problem takes the form \eqref{eq:sb}.
It can equivalently be expressed in its \textit{static} form as follows: notice that for any $\mathbb{H} \in \mathcal{P}(\mathsf{E}^{T+1})$ we have the decomposition
\begin{equation}\label{eq:decompose_kl_dynamic}
\small{\mathrm{KL}(\mathbb{H}|\mathbb{Q}) =  \mathrm{KL}(h_{0,T} | q_{0,T}) + \int_{\mathsf{E}^2} \mathrm{KL}\left(\mathbb{H}(\mathrm{d}x_{1:T-1} | x_0, x_T) | \mathbb{Q}(\mathrm{d}x_{1:T-1} | x_0, x_T)\right) h_{0,T}(\mathrm{d}x_{0},\mathrm{d}x_{T}),}
\end{equation}
where for any $0\leq s < t \leq T$, $h_{s,t}$ and $q_{s,t}$ denote the two-time marginals of $\mathbb{H}$ and $\mathbb{Q}$. Since the constraint in \eqref{eq:sb} applies only to the first term in the sum in \eqref{eq:decompose_kl_dynamic}, the second term can be minimized by setting $\mathbb{H}(\mathrm{d}x_{1:T-1} | x_0, x_T)  =  \mathbb{Q}(\mathrm{d}x_{1:T-1} | x_0, x_T)$ for $h_{0,T}$-almost every $(x_0,x_T)$. Hence, \eqref{eq:sb} can be reduced to the static problem
\begin{equation} \label{eq:static_sb}
s_{0,T}(\mathrm{d}x_{0},\mathrm{d}x_{T}) = \argmin_{h_{0,T} \in \mathcal{C}(\pi_0,\pi_T)} \mathrm{KL}(h_{0,T} | q_{0,T}),
\end{equation}
where $\mathcal{C}(\pi_0,\pi_T)\subset \mathcal{P}(\mathsf{E}^2)$ is the set of couplings on $\mathsf{E}^2$ with marginal distributions $\pi_0$ and $\pi_T = \pi$ respectively.

If $q_{0,T}\ll  q_0 \otimes q_T$ and there exists $h_{0,T} \in \mathcal{C}(\pi_0,\pi_T)$ such that $\mathrm{KL}(h_{0,T} | q_{0,T})< \infty$, \citet{ruschendorf1993note} showed that the minimum in \eqref{eq:static_sb} exists, is unique, and of the form
\begin{equation}\label{eq:static_sb_solution}
s_{0,T}(\mathrm{d}x_{0},\mathrm{d}x_{T}) = \varphi^\circ(x_0) q_{0,T}(\mathrm{d}x_{0},\mathrm{d}x_{T}) \varphi^\star(x_T),
\end{equation}
for two functions $\varphi^\circ, \varphi^\star: \mathsf{E} \to \mathbb{R}_+$. These functions are often called potentials, and are themselves unique up to a multiplicative constant. From \eqref{eq:decompose_kl_dynamic} and \eqref{eq:static_sb_solution}, we obtain
\begin{equation}
\mathbb{S}(\mathrm{d}x_{0:T}) = s_{0,T}(\mathrm{d}x_{0},\mathrm{d}x_{T})\mathbb{Q}(\mathrm{d}x_{1:T-1} | x_0, x_T) = \varphi^\circ(x_0) \mathbb{Q}(\mathrm{d}x_{0:T}) \varphi^\star(x_T).
\end{equation}
The marginal constraints on $\mathbb{S}(\mathrm{d}x_{0:T}) $ give rise to the so-called Schr\"odinger equations
\begin{align}
\pi_0(\mathrm{d}x_0) &= \varphi^\circ(x_0) q_0(\mathrm{d}x_0) \int_{\mathsf{E}} \mathbb{Q}(\mathrm{d}x_T|x_0)\varphi^\star(x_T), \label{eq:schrodinger_eq1}\\
\pi_T(\mathrm{d}x_T) &= \varphi^\star(x_T) q_T(\mathrm{d}x_T)\int_{\mathsf{E}} \mathbb{Q}(\mathrm{d}x_0|x_T)\varphi^\circ(x_0), \label{eq:schrodinger_eq2}
\end{align}
where we recall that, from $(\ref{eq:Q})$, $\pi_0=q_0$ in our context. Finding the solution to the Schr\"odinger bridge problem can therefore be reduced to finding the potentials $\varphi^\circ$ and $\varphi^\star$ that solve the Schr\"odinger equations \eqref{eq:schrodinger_eq1} and \eqref{eq:schrodinger_eq2}.

To succinctly express the transition probabilities and marginal distributions under $\mathbb{S}$, we define the harmonic functions
\begin{align}
\psi_T^\star(x_T) &= \varphi^\star(x_T), \label{eq:harmonic_T} \\
\psi_t^\star(x_t) &= \int_\mathsf{E} \psi_T^\star(x_T) \mathbb{Q}(\mathrm{d}x_T | x_t) , \quad t\in[0:T-1], \label{eq:harmonic_t}
\end{align}
and the co-harmonic functions
\begin{align}
\psi_0^\circ(x_0) &= \varphi^\circ(x_0), \label{eq:coharmonic_0}\\
\psi_t^\circ(x_t) &= \int_\mathsf{E} \psi_0^\circ(x_0) \mathbb{Q}(\mathrm{d}x_0 | x_t), \quad t\in[1:T]. \label{eq:coharmonic_t}
\end{align}
Using this notation, the one- and two-time marginals of $\mathbb{S}$ can be expressed as
$$s_{t}(\mathrm{d}x_{t}) = \psi_t^\circ(x_t) \psi_t^\star(x_t) q_{t}(\mathrm{d}x_{t})\quad \text{and} \quad s_{s,t}(\mathrm{d}x_{s},\mathrm{d}x_{t}) = \psi_s^\circ(x_s) \psi_t^\star(x_t) q_{s,t}(\mathrm{d}x_s,\mathrm{d}x_t)$$
for any $t\in[0:T]$ and $0\leq s < t \leq T$, respectively. Thus, the transition probabilities under $\mathbb{S}$ take the form
\begin{equation}
\mathbb{S}(\mathrm{d}x_t | x_s) = \frac{\mathbb{Q}(\mathrm{d}x_t | x_s)\psi_t^\star(x_t)}{\psi_s^\star(x_s)} , \quad 0\leq s < t \leq T.
\end{equation}
From this representation we can see that if $\mathbb{Q}$ is Markovian, then so is $\mathbb{S}$. In this case, we also have that $\psi_s^\star(x_s) = \int_{\mathsf{E}}\psi_t^\star(x_t)\mathbb{Q}(\mathrm{d}x_t | x_s)$. In particular, if $\mathbb{Q}$ is of the form \eqref{eq:Q}, then $\mathbb{Q}(\mathrm{d}x_t | x_{t-1}) = M_t(x_{t-1},\mathrm{d}x_t)$ and $\psi_{t-1}^\star(x_{t-1}) = M_t(\psi^\star_t)(x_{t-1})$. In other words, $\mathbb{S}$ is the $\psi^\star$-twisted version of $\mathbb{Q}$, where the policy $\psi^\star = \{\psi^\star_t\}_{t\in[0:T]}$ is the collection of harmonic functions.

In fact, the Schr\"odinger bridge is a special case of a \textit{reciprocal process}:  $\{Y_t\}_{t\in[0:T]}$ is said to be reciprocal if for any $0\leq t_1 < t_2\leq T$, $\{Y_t\}_{t\in[t_1+1:t_2-1]}$ is conditionally independent of $\{Y_t\}_{t\in[0:t_1] \cup [t_2:T]}$ given $(Y_{t_1}, Y_{t_2})$. \citet{jamison1974reciprocal} showed that a \textit{class} of reciprocal processes can be obtained from a Markov reference process $\{X_t\}_{t\in[0:T]}\sim\mathbb{Q}$ by pinning the endpoints $X_0 = x_0$ and $X_T = x_T$, and assigning the pair $(x_0,x_T)$ different distributions \citep[see also][for further discussions of reciprocal processes in the discrete-time setting]{beghi1996relative}. As a consequence,
if two different Markov reference processes $\mathbb{Q}$ and $\bar{\mathbb{Q}}$ induce the same dynamics between the pinned endpoints $x_0$ and $x_T$, they also define the same class of reciprocal processes. \citet{jamison1974reciprocal} also showed that $\mathbb{S}$ is the unique reciprocal process in this class that is simultaneously Markov and satisfies the marginal constraints $s_0 = \pi_0$ and $s_T = \pi_T$. Thus, since the reference processes $\mathbb{Q}$ and $\bar{\mathbb{Q}}$ define the same reciprocal class, they also define the same Schr\"odinger bridge between $\pi_0$ and $\pi_T$. This is for instance the case for reference processes of the form $\bar{\mathbb{Q}}(\mathrm{d}x_{0:T}) = \bar{\varphi}^\circ(x_0) \mathbb{Q}(\mathrm{d}x_{0:T}) \bar{\varphi}^\star(x_T)$, where $\bar{\varphi}^\circ$ and $\bar{\varphi}^\star$ are approximations of the Schr\"odinger potentials. As we will see in the next section, this representation holds for the IPF iterates and plays an important role in establishing their convergence to the Schr\"odinger bridge.

\subsection{Iterative proportional fitting}

In our notation, each iteration of IPF initialized at  $\mathbb{Q}^{(0)} = \mathbb{Q}$ can be expressed as solving two Schr\"odinger half-bridge problems
\begin{align}\label{eq:IPF}
\mathbb{P}^{(i)}(\mathrm{d}x_{0:T}) &= \argmin_{\mathbb{H} \in \mathcal{P}_T(\pi_T)} \mathrm{KL}(\mathbb{H} | \mathbb{Q}^{(i-1)}),\\
\mathbb{Q}^{(i)}(\mathrm{d}x_{0:T}) &= \argmin_{\mathbb{H} \in \mathcal{P}_0(\pi_0)} \mathrm{KL}(\mathbb{H} | \mathbb{P}^{(i)}),
\end{align}
for $i \geq 1$; in each KL projection, only one of the two marginal constraints are enforced.
Let $\mathbb{S}^{(2i+1)} = \mathbb{P}^{(i+1)}$ and $\mathbb{S}^{(2i)} = \mathbb{Q}^{(i)}$ for any $i \geq 0$. Then, if the minimizer in \eqref{eq:static_sb} exists, \citet[][Proposition 2.1]{ruschendorf1995convergence} shows that the initial and terminal marginals of $\mathbb{S}^{(i)}$, denoted $s_0^{(i)}$ and $s_T^{(i)}$ respectively,  converge to $\pi_0$ and $\pi_T = \pi$ in both KL divergence and Total Variation (TV) as $i\to \infty$. In the following proposition, which is inspired by \citet[][Theorem 2]{altschuler2017near}, we provide further insight into the convergence properties of IPF. The proof can be found in Appendix \ref{appendix:proofs}.
\begin{proposition}\label{prop:IPF_convergence}
For any $\varepsilon > 0$, IPF returns a distribution $\mathbb{S}^{(i)}$ satisfying $\mathrm{KL}(\pi_0 | s_0^{(i)}) + \mathrm{KL}(\pi_T | s_T^{(i)}) < \varepsilon$ in fewer than $\left\lceil\mathrm{KL}(\mathbb{S} | \mathbb{Q})/\varepsilon\right\rceil$ iterations.
\end{proposition}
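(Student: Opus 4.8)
The plan is to track the single scalar quantity $D_i := \mathrm{KL}(\mathbb{S} \mid \mathbb{S}^{(i)})$, the divergence from each IPF iterate to the Schr\"odinger bridge, and to show that it decreases at each step by exactly the marginal mismatch we wish to control. The crucial structural fact is that $\mathbb{S}$ lies in \emph{both} constraint sets, $\mathbb{S} \in \mathcal{P}_0(\pi_0) \cap \mathcal{P}_T(\pi_T)$ by \eqref{eq:sb}, so it is a feasible competitor in every half-bridge projection in \eqref{eq:IPF}. Since each half-bridge constrains only one marginal, its solution is an explicit reweighting of the incoming measure: for a forward step enforcing $\pi_T$, the analogue of \eqref{eq:optimal_P} gives $\mathbb{S}^{(i)}(\mathrm{d}x_{0:T}) = (\mathrm{d}\pi_T/\mathrm{d}s_T^{(i-1)})(x_T)\, \mathbb{S}^{(i-1)}(\mathrm{d}x_{0:T})$, and symmetrically for a backward step enforcing $\pi_0$.

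First I would establish a Pythagorean identity. Writing the chain-rule factorization $\mathrm{d}\mathbb{H}/\mathrm{d}\mathbb{S}^{(i-1)} = (\mathrm{d}\mathbb{H}/\mathrm{d}\mathbb{S}^{(i)})\,(\mathrm{d}\pi_T/\mathrm{d}s_T^{(i-1)})(x_T)$ for any $\mathbb{H}$ with terminal marginal $\pi_T$ and integrating $\log$ against $\mathbb{H}$, the cross term collapses because the log-density depends only on $x_T$, whose law under $\mathbb{H}$ is $\pi_T$. This yields
\[
\mathrm{KL}(\mathbb{H} \mid \mathbb{S}^{(i-1)}) = \mathrm{KL}(\mathbb{H} \mid \mathbb{S}^{(i)}) + \mathrm{KL}(\pi_T \mid s_T^{(i-1)}), \qquad \mathbb{H} \in \mathcal{P}_T(\pi_T),
\]
and the analogous statement with $(\pi_0, s_0^{(i-1)})$ for backward steps. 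Taking $\mathbb{H} = \mathbb{S}$ makes $D_{i-1} - D_i$ equal to the marginal mismatch corrected at step $i$, while taking $\mathbb{H} = \mathbb{S}^{(i)}$ confirms $\mathrm{KL}(\mathbb{S}^{(i)} \mid \mathbb{S}^{(i-1)})$ equals the same mismatch.

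The second step is to recognize the target quantity as a single telescoping increment. Immediately after a projection the enforced marginal matches exactly, so one of the two terms in $\mathrm{KL}(\pi_0 \mid s_0^{(i)}) + \mathrm{KL}(\pi_T \mid s_T^{(i)})$ vanishes; the surviving term is precisely the mismatch that the \emph{next} projection will correct, so that for every $i \geq 0$ (using $s_0^{(0)} = \pi_0$, since $\mathbb{Q}$ starts at $\pi_0$),
\[
\mathrm{KL}(\pi_0 \mid s_0^{(i)}) + \mathrm{KL}(\pi_T \mid s_T^{(i)}) = \mathrm{KL}(\mathbb{S}^{(i+1)} \mid \mathbb{S}^{(i)}) = D_i - D_{i+1}.
\]
Telescoping over $i = 0, \ldots, N-1$ and using $D_N \geq 0$ with $D_0 = \mathrm{KL}(\mathbb{S} \mid \mathbb{Q})$ gives $\sum_{i=0}^{N-1}\big(\mathrm{KL}(\pi_0 \mid s_0^{(i)}) + \mathrm{KL}(\pi_T \mid s_T^{(i)})\big) \leq \mathrm{KL}(\mathbb{S} \mid \mathbb{Q})$. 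A pigeonhole argument then finishes: if none of the first $N = \lceil \mathrm{KL}(\mathbb{S}\mid\mathbb{Q})/\varepsilon\rceil$ iterates met the tolerance, every summand would be $\geq \varepsilon$, forcing $N\varepsilon \leq \mathrm{KL}(\mathbb{S}\mid\mathbb{Q})$ and hence $N \leq \mathrm{KL}(\mathbb{S}\mid\mathbb{Q})/\varepsilon$, contradicting the definition of $N$.

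The main obstacle is justifying the Pythagorean \emph{equality}, rather than the inequality one gets for free: this requires that all divergences in play be finite and that the Radon--Nikodym factorizations be valid, which is where I would invoke the existence and absolute-continuity hypotheses of \citet{ruschendorf1993note} that guarantee a finite-divergence bridge $\mathbb{S}$ with $D_0 < \infty$. A secondary, purely bookkeeping subtlety is the strictness needed to obtain \emph{fewer than} $\lceil \mathrm{KL}(\mathbb{S}\mid\mathbb{Q})/\varepsilon\rceil$ iterations in the degenerate case where $\mathrm{KL}(\mathbb{S}\mid\mathbb{Q})/\varepsilon$ is an integer; here one uses that $D_N > 0$ for finite $N$, so that the telescoped sum is strictly below $\mathrm{KL}(\mathbb{S}\mid\mathbb{Q})$ and the pigeonhole inequality becomes strict, with the trivial alternative that if $\mathbb{S}^{(N)} = \mathbb{S}$ exactly then the mismatch is already zero.
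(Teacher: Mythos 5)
Your proof is correct and follows essentially the same route as the paper's: both establish the Pythagorean identity $\mathrm{KL}(\mathbb{H}|\mathbb{S}^{(i)}) - \mathrm{KL}(\mathbb{H}|\mathbb{S}^{(i+1)}) = \mathrm{KL}(\pi_0|s_0^{(i)}) + \mathrm{KL}(\pi_T|s_T^{(i)})$ for each half-bridge projection (you via the Radon--Nikodym factorization, the paper via the conditional KL decomposition \eqref{eq:decompose_kl}, which is the same fact), then telescope with $\mathbb{H}=\mathbb{S}$ and count. Your explicit handling of the integer-valued $\mathrm{KL}(\mathbb{S}|\mathbb{Q})/\varepsilon$ edge case is in fact slightly more careful than the paper's, which concludes only $k^\star \leq \mathrm{KL}(\mathbb{S}|\mathbb{Q})/\varepsilon$.
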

\noindent This result illustrates the benefit of starting with an initial distribution $\mathbb{Q}$ that is close to $\mathbb{S}$. Under the additional assumption that $\mathbb{Q}(\mathrm{d}x_T | x_0) \geq c \pi_T(\mathrm{d}x_T)$ for $q_0$-almost every $x_0$ and some $c>0$, the sequence $\mathbb{S}^{(i)}$  converges to $\mathbb{S}$ in both KL and TV \citep[][Theorem 3.5]{ruschendorf1995convergence}.

Unlike the problem in \eqref{eq:sb}, the half-bridge problems have explicit solutions. Using the KL decomposition in \eqref{eq:decompose_kl}, we know that
\begin{equation}\label{eq:update_P}
\mathbb{P}^{(i)}(\mathrm{d}x_{0:T}) = \frac{\mathrm{d}\pi_T}{\mathrm{d}q^{(i-1)}_T}(x_T) \mathbb{Q}^{(i-1)}(\mathrm{d}x_{0:T}).
\end{equation}
By analogous reasoning, we also have that
\begin{equation}\label{eq:update_Q}
\mathbb{Q}^{(i)}(\mathrm{d}x_{0:T}) = \frac{\mathrm{d}\pi_0}{\mathrm{d}p^{(i)}_0}(x_0) \mathbb{P}^{(i)}(\mathrm{d}x_{0:T}).
\end{equation}

Next, we show that these iterations can be expressed as policy refinements. To set up an inductive argument, assume that $\mathbb{Q}^{(i-1)} = \mathbb{Q}^{\psi^{(i-1)}}$ for some policy $\psi^{(i-1)} = \{\psi_t^{(i-1)}\}_{t\in[0:T]}$ and $i\geq 1$, i.e. $\mathbb{Q}^{(i-1)}$ is the $\psi^{(i-1)}$-twisted version of $\mathbb{Q}$ in \eqref{eq:Q}. This is trivially true for $i=1$, with  $\psi_t^{(0)} =1$ for all $t\in[0:T]$. By the representation  \eqref{eq:update_P} and \citet[][Proposition 1]{heng2017controlled}, we have that $\mathbb{P}^{(i)} = \mathbb{Q}^{\psi^{(i-1)}\cdot \phi^{(i)}}$, where we have used the notation $\psi\cdot\phi = \{\psi_t \cdot \phi_t\}_{t\in[0:T]}$ in which $\psi_t \cdot \phi_t$ denotes pointwise multiplication, and $\phi^{(i)}$ satisfies the backward recursion (in time):
\begin{align}
\phi_T^{(i)}(x_T) &= \frac{\mathrm{d}\pi_T}{\mathrm{d}q^{\psi^{(i-1)}}_T}(x_T),\label{eq:policy_recursion_end}\\
\phi_t^{(i)}(x_t) &= M_{t+1}^{\psi^{(i-1)}}(\phi_{t+1}^{(i)})(x_t),
\quad t\in[0:T-1].\label{eq:policy_recursion}
\end{align}
In other words, we have the representation
\begin{equation}
\mathbb{P}^{(i)}(\mathrm{d}x_{0:T}) = \pi_{0}^{\psi^{(i-1)}\cdot \phi^{(i)}}(\mathrm{d}x_0)\prod_{t=1}^T M^{\psi^{(i-1)}\cdot \phi^{(i)}}_{t}(x_{t-1}, \mathrm{d}x_{t}).
\end{equation}
With $\mathbb{P}^{(i)}$ expressed this way, the solution of the second half-bridge problem \eqref{eq:update_Q} does not require any additional calculation as it is given by
\begin{equation}
\mathbb{Q}^{(i)}(\mathrm{d}x_{0:T}) = \pi_0(\mathrm{d}x_0) \prod_{t=1}^T M^{\psi^{(i-1)}\cdot \phi^{(i)}}_{t}(x_{t-1}, \mathrm{d}x_{t}).
\end{equation}
Hence, $\mathbb{Q}^{(i)} = \mathbb{Q}^{\psi^{(i)}}$, where $\psi^{(i)}_0 = 1$ and $\psi_t^{(i)} = \psi_t^{(i-1)}\cdot\phi_t^{(i)}$ for $t \in[1:T]$.
Since $\psi^{(i)}_0 = 1$ for all $i\geq 0$, we only have to store $\psi_t^{(i)}, t\in[1:T]$ to define $\mathbb{Q}^{(i)}$ at each iteration (see Section \ref{sec:AIPF}).

Using the policy refinement perspective, the convergence of IPF can be stated informally as follows. As $i\to \infty$, the sequence of policies $\psi^{(i)}$ converges to the policy $\psi^\star$ defined by the harmonic functions \eqref{eq:harmonic_T} and \eqref{eq:harmonic_t}. Alternatively, the convergence can also be expressed in terms of the convergence of a fixed point method to solve \eqref{eq:schrodinger_eq1} and \eqref{eq:schrodinger_eq2} and obtain the corresponding Schr\"odinger potentials. For $i\geq 1$, define
\begin{align}
\alpha^{(i)}(x_0) &=\alpha^{(i-1)}(x_0)\frac{\mathrm{d}\pi_0}{\mathrm{d}p_0^{(i)}}(x_0) = \alpha^{(i-1)}(x_0)
\frac{\pi_0(\phi_0^{(i)})}{\phi_0^{(i)}(x_0)},\label{eq:IPF_potential1}\\  
\beta^{(i)}(x_T) & =\beta^{(i-1)}(x_T) \frac{\mathrm{d}\pi_T}{\mathrm{d}q_T^{(i-1)}}(x_T) = \beta^{(i-1)}(x_T) \phi_T^{(i)}(x_T),\label{eq:IPF_potential2} 
\end{align}
with initialization at $\alpha^{(0)}=1$ and $\beta^{(0)} = \mathrm{d}\pi_T/\mathrm{d}q_T$. These potentials are related to $\mathbb{Q}^{(i+1)}$ and $\mathbb{P}^{(i+1)}$ via $\mathbb{Q}^{(i+1)}(\mathrm{d}x_{0:T}) = \alpha^{(i+1)}(x_0)\mathbb{Q}(\mathrm{d}x_{0:T})\beta^{(i)}(x_T)$ and $\mathbb{P}^{(i+1)}(\mathrm{d}x_{0:T}) = \alpha^{(i)}(x_0)\mathbb{Q}(\mathrm{d}x_{0:T})\beta^{(i)}(x_T)$ for every $i\geq 0 $. The convergence of IPF to the Schr\"odinger bridge is then equivalent to $\alpha^{(i)}$ and $\beta^{(i)}$ converging to $\varphi^\circ$ and $\varphi^\star$ as $i \to \infty$, respectively.

In all but simple cases, such as the Gaussian setting treated in Section \ref{sec:lqg}, the policies $\psi^{(i)}$ are not available in closed form due to the intractability of both the Radon--Nikodym derivative \eqref{eq:policy_recursion_end} and the recursion \eqref{eq:policy_recursion}. In what follows, we utilize the connection to optimal control that underlies the framework developed by \citet{heng2017controlled} to build numerical approximations of the policies.

\subsection{Approximate iterative proportional fitting}\label{sec:AIPF}
To simplify notation, let the policy $\psi = \{\psi_t\}_{t\in[1:T]}$ denote our current approximation of
the optimal policy $\psi^\star$ defined in \eqref{eq:harmonic_T}-\eqref{eq:harmonic_t}. In terms of path measures,
$\mathbb{Q}^{\psi}$ with $\psi_0=1$ forms our approximation of the Schr\"odinger bridge $\mathbb{S}=\mathbb{Q}^{\psi^\star}$.
From \eqref{eq:policy_recursion_end} and \eqref{eq:policy_recursion}, the refinement of our current policy $\psi$ is defined by
the backward recursion $\phi_T=\mathrm{d}\pi_T/\mathrm{d}q_T^\psi$, and $\phi_t=M_{t+1}^{\psi}(\phi_{t+1})$ for $t\in[1:T-1]$.
Assuming that the Radon--Nikodym derivative $\phi_T$ can be evaluated pointwise,
we detail how one can use approximate dynamic programming methods
to approximate this recursion in Section \ref{sec:ADP}.
However, $q_T^{\psi}$ is typically intractable, making $\phi_T$ intractable also.
To circumvent this difficulty, we propose estimators of $\phi_T$ in Section \ref{sec:estimate_RN}.
An algorithmic description of the resulting approximate IPF procedure
is summarized in Algorithm \ref{algorithm:aIPF}. The computational cost of step (a) and step (b) is linear in $N$ and $T$. However the computational 
cost of step (c) is strongly dependent on the choice of the regression basis. In  particular, rich function classes might lead to significant overhead.

\subsubsection{Approximate dynamic programming}\label{sec:ADP}
The approximate dynamic programming (ADP) method we will consider estimates the intractable policy refinement
$\phi=\{\phi_t\}_{t\in[1:T]}$ by least squares projections (in log-scale) onto a collection of function classes $\{\mathsf{F}_t\}_{t\in[1:T]}$.
Given $N\in\mathbb{N}$ trajectories $X_{0:T}^n=(X_0^n,\ldots,X_T^n),n\in[1:N]$ sampled from $\mathbb{Q}^{\psi}$,
ADP performs the backward recursion
\begin{align}
\hat{\phi}_T &= \argmin_{f\in\mathsf{F}_T} \sum_{n = 1}^N\left| \log f(X_T^n) - \log \phi_T(X_T^n)\right|^2, \\ 
\hat{\phi}_t &= \argmin_{f \in\mathsf{F}_t} \sum_{n = 1}^N\left| \log f(X_t^n) - \log M^{\psi}_{t+1}(\hat{\phi}_{t+1})(X_{t}^n)\right|^2, \quad t\in[1:T-1],
\end{align}
and returns the (approximately) refined policy $\psi\cdot\hat{\phi}$ forming our next approximation $\mathbb{Q}^{\psi\cdot\hat{\phi}}$ of
$\mathbb{S}$ (with $\psi_0\cdot\hat{\phi}_0=1$ for notational convenience). We refer the reader to \citet[][Section 4]{heng2017controlled} for analysis of the error and large $N$ behavior of this scheme.

The requirement that we can evaluate the conditional expectation $M^{\psi}_{t+1}(\hat{\phi}_{t+1})$ and
sample trajectories from the resulting path measure $\mathbb{Q}^{\psi\cdot\hat{\phi}}$
places restrictions on the choice of initial kernels $\{M_t\}_{t\in[1:T]}$ and the function classes $\{\mathsf{F}_t\}_{t\in[1:T]}$.
We will defer a detailed discussion of these choices to Section \ref{sec:sb_langevin}.


\subsubsection{Estimating the Radon--Nikodym derivative}\label{sec:estimate_RN}
The ADP implementation in the previous section requires the evaluation of $\phi_T=\mathrm{d}\pi_T/\mathrm{d}q_T^\psi$
at the locations $\{X^n_T\}_{n\in[1:N]}$, where $X_{0:T}^n\sim\mathbb{Q}^{\psi}$ for $n\in[1:N]$.
In our context, the estimator of the density $x_T\mapsto q_T^{\psi}(x_T)$ considered in \cite{reich2018data} has the form
$x_T \mapsto N^{-1}\sum_{n=1}^{N} M_{T}^{\psi}(X_{T-1}^{n},x_T)$.
As our transition kernels will be defined by discretizing a Langevin diffusion, this density estimator can perform poorly when the
discretization step size is small, as the kernel $x_T\mapsto M_{T}^{\psi}(x_{T-1},x_T)$ will be highly concentrated around $x_{T-1}$.
Here, we discuss how to obtain unbiased and asymptotically consistent estimates of these Radon--Nikodym derivatives which can be computed in parallel for the $N$ locations.

For simplicity, we denote by $X_{0:T}$ a generic trajectory from $\mathbb{Q}^\psi$ and suppress the notational dependence on $n$.
Our approach makes use of the identity
\begin{equation}
\frac{\mathrm{d}\pi_T}{\mathrm{d}q_T^{\psi}}(x_T) = \int_{\mathsf{E}^T}\frac{\mathrm{d}\mathbb{H}}{\mathrm{d}\mathbb{Q}^{\psi}}(x_{0:T})\mathbb{Q}^{\psi}(\mathrm{d}x_{0:T-1}| x_T),
\end{equation}
which holds for any policy $\psi$ and any $\mathbb{H}\in\mathcal{P}_T(\pi_T)$ such that $\mathbb{H} \ll \mathbb{Q}^\psi$. As in \eqref{eq:extended_target}, we consider path measures $\mathbb{H}$ defined in terms of backward Markov transition kernels
\begin{equation}\label{eq:H}
\mathbb{H}(\mathrm{d}x_{0:T}) = \pi_{T}(\mathrm{d}x_T)\prod_{t=1}^T L_{t-1}(x_t, \mathrm{d}x_{t-1}).
\end{equation}
If we can generate $M+1$ samples $\{X_{0:T-1}^{(m)}\}_{m\in[0:M]}$ from the distribution $\mathbb{Q}^{\psi}(\mathrm{d}x_{0:T-1}| x_T)$, an unbiased estimator of $\phi_T(x_T)$ is given by
\begin{equation}\label{eqn:RN_estimator}
\tilde{\phi}_T(x_T)= \frac{1}{M+1}\sum_{m = 0}^M\frac{\mathrm{d}\mathbb{H}}{\mathrm{d}\mathbb{Q}^{\psi}}(X^{(m)}_{0:T-1},x_T).
\end{equation}
Assuming the realization $X_T = x_T$, then one such sample is already available, since in this case $X_{0:T-1}\sim \mathbb{Q}^{\psi}(\mathrm{d}x_{0:T-1}| x_T)$. Note that in practice, we can replace the normalized measure $\pi_T$ in \eqref{eq:H} with its unnormalized counterpart $\gamma_T$.
In the setting where $\{L_t\}_{t\in[0:T-1]}$ corresponds to the optimal choice of backward kernels, i.e.~such that
\begin{equation}
\prod_{t=1}^{T}L_{t-1}(x_t,\mathrm{d}x_{t-1}) = \mathbb{Q}^{\psi}(\mathrm{d}x_{0:T-1} | x_T),
\end{equation}
then the single sample $X_{0:T-1}$ would be sufficient, as in this case $\tilde{\phi}_T(x_T)$ equals $\phi_T(x_T)$ almost surely, for any $M\geq 0$. This choice is intractable, and we discuss methods for constructing backward kernels that are useful in practice in Section \ref{sec:sb_langevin}.

In cases where $M = 0$ is not sufficient, we can use \textit{conditional} SMC to produce more samples \citep{andrieu:doucet:holenstein:2010}.
In particular, this methodology defines a Markov kernel $K_{x_T}\in\mathcal{M}(\mathsf{E}^T)$ with invariant distribution is $\mathbb{Q}^{\psi}(\mathrm{d}x_{0:T-1} | x_T)$ by utilizing the backward kernels. By setting $X_{0:T-1}^{(0)} = X_{0:T-1}$ and iterating
\begin{equation}
X_{0:T-1}^{(m)} \sim K_{x_T}(X_{0:T-1}^{(m-1)}, \mathrm{d}x_{0:T-1}), \quad m\in[1:M],
\end{equation}
we construct samples that are correlated, but have exactly $\mathbb{Q}^{\psi}(\mathrm{d}x_{0:T-1} | x_T)$ as their marginal distribution. The resulting estimator $\tilde{\phi}_T(x_T)$ therefore stays unbiased, and converges to $\hat{\phi}_T(x_T)$ at the rate of $M^{-1/2}$. A version of the conditional SMC method without resampling is summarized in Algorithm \ref{algorithm:csmc}, and details of its resampling counterpart are discussed in \citet{andrieu:doucet:holenstein:2010}.


The efficiency of this approach and the required number of samples $M+1$ depend on the design of the backward kernels, which remains a challenging task. In fact, approximating the optimal backward kernels for $\mathbb{Q}^{\psi}$ may be harder than for the initial measure $\mathbb{Q}$, as $M_t^{\psi}$ no longer necessarily targets $\pi_t$. In the next section, we illustrate this point for specific choices of reference kernels, derived from discretizing continuous-time processes.

\begin{algorithm}
\caption{\label{algorithm:aIPF} Approximate IPF}
\textbf{Input:} Initial kernels $\{M_t\}_{t\in[1:T]}$, function classes $\{\mathsf{F}_t\}_{t\in[1:T]}$, number of particles $N\in\mathbb{N}$, number of IPF iterations $I\in\mathbb{N}$.
\begin{enumerate}
\item Initialize: Set $\hat{\psi}^{(0)}_t = 1$ for $t \in [0:T]$.
\item For $1\leq i \leq I$:
\begin{enumerate}
\item Sample trajectories $\{X_{0:T}^n\}_{n\in[1:N]}$ from $\mathbb{Q}^{\hat{\psi}^{(i-1)}}$:
for each $n\in [1:N]$, sample $X^n_0 \sim \pi_0(\mathrm{d}x_0)$ and $X^n_t \sim M_t^{\hat{\psi}^{(i-1)}}(X^n_{t-1},\mathrm{d}x_t)$ for $t \in[1:T]$.
\item For each $n\in[1:N]$, compute an estimator $\tilde{\phi}_T^{(i-1)}(X_T^n)$ of $\phi_T^{(i-1)}(X_T^n) = \mathrm{d}\pi_T/\mathrm{d}q_T^{\hat{\psi}^{(i-1)}}(X^n_T)$ using Algorithm \ref{algorithm:csmc}.
\item Approximate dynamic programming: perform the recursion
\begin{align*}
\hat{\phi}_T^{(i)} &= \argmin_{f\in\mathsf{F}_T} \sum_{n = 1}^N\left| \log f(X_T^n) - \log \tilde{\phi}_T^{(i-1)}(X_T^n)\right|^2, \\
\hat{\phi}_t^{(i)} &= \argmin_{f \in\mathsf{F}_t} \sum_{n = 1}^N\left| \log f(X_t^n) - \log M^{\hat{\psi}^{(i-1)}}_{t+1}(\hat{\phi}^{(i)}_{t+1})(X_{t}^n)\right|^2, \quad t\in[1:T-1].
\end{align*}
\item Set $\hat{\psi}^{(i)}_0 = 1$ and $\hat{\psi}^{(i)}_t = \hat{\psi}^{(i-1)}_t \cdot \hat{\phi}^{(i)}_t$ for $t\in[1:T]$.
\end{enumerate}
\end{enumerate}
\textbf{Output:} Policy $\hat{\psi}^{(I)}$.
\end{algorithm}

\begin{algorithm}
\caption{\label{algorithm:csmc} Conditional SMC to produce estimator of $\mathrm{d}\pi_T/\mathrm{d}q_T^\psi$ (without resampling).}
\textbf{Input:} Initial trajectory $X_{0:T-1}\sim \mathbb{Q}^{\psi}(\mathrm{d}x_{0:T-1}| x_T)$, backward kernels $\{L_t\}_{t\in[0:T-1]}$, number of CSMC particles $P\in\mathbb{N}$, number of CSMC iterations $M\in\mathbb{N}\cup \{0\}$.
\begin{enumerate}
\item Initialize: Set $X_{0:T-1}^{(0)} = X_{0:T-1}$.
\item For $m \in [1:M]$:
\begin{enumerate}
\item For $p \in [1:P-1],$ sample independent trajectories $\{Y_{0:T-1}^p\}_{p\in[1:P-1]}$ by sampling $Y_{T-1}^p \sim L_{T-1}(x_T, \mathrm{d}x_{T-1})$ and $Y_{t-1}^p \sim L_{t-1}(Y^p_{t}, \mathrm{d}x_{t-1})$ for $t\in[1:T-1]$. Set $Y_{0:T-1}^P = X_{0:T-1}^{(m-1)}$.
\item Sample an index $A$ from the categorical distribution on $[1:P]$ with probabilities
\begin{equation*}
\left. \frac{\mathrm{d}\mathbb{Q}^{\psi}}{\mathrm{d}\mathbb{H}}(Y_{0:T-1}^p,x_{T}) \middle/ \sum_{q=1}^P \frac{\mathrm{d}\mathbb{Q}^{\psi}}{\mathrm{d}\mathbb{H}}(Y_{0:T-1}^q, x_{T}),\right.  \quad p\in[1:P],
\end{equation*}
where $\mathbb{H}(\mathrm{d}x_{0:T-1}|x_T) = \prod_{t=1}^T L_{t-1}(x_t, \mathrm{d}x_{t-1})$.
\item Set $X_{0:T-1}^{(m)} = Y_{0:T-1}^A$.
\end{enumerate}
\end{enumerate}
\textbf{Output:} Estimator $$\tilde{\phi}_T(x_T) = \frac{1}{M+1}\sum_{m = 0}^M\frac{\mathrm{d}\mathbb{H}}{\mathrm{d}\mathbb{Q}^{\psi}}(X^{(m)}_{0:T-1},x_T).$$
\end{algorithm}

\subsection{Leveraging continuous-time Schr\"odinger bridges}\label{sec:sb_continuous}
In the following, we will select the initial kernels $\{M_t\}_{t\in[1:T]}$
to be the Euler--Maruyama discretization of the Langevin dynamics in \eqref{eq:langevin} \citep{parisi1981,grenandermille1994}.
Following \citet{heng2017controlled}, we do not employ Metropolis--Hastings corrections as this results in
conditional expectations and path measures that are intractable to evaluate and sample, respectively.
Compared to standard AIS and SMC samplers, the latter is a limitation of our proposed methodology.
As our algorithmic choices will be guided by approximating the solution of
the Schr\"odinger bridge problem for the continuous-time process, the absence of Metropolis--Hastings corrections
will not be crucial in the small step-size regime. By leveraging the structure of continuous-time Schr\"odinger bridges,
the following proposes a version of the IPF algorithm that allows flexible function classes to be used.

\subsubsection{Schr\"odinger bridges for diffusions}\label{sec:sb_langevin}
First, we review some of the continuous-time problem's features, restricting ourselves to reference processes of the kind
\begin{equation}\label{eq:langevin_reference}
\mathrm{d}X_s = b_s(X_s) \mathrm{d}s + \mathrm{d}W_s, \quad s\in[0,\tau],
\end{equation}
where $b:[0,\tau]\times\mathsf{E}\rightarrow\mathsf{E}$, $W_s$ denotes the standard Brownian motion.
Let $Q$ denote the law of this process on the space of continuous $\mathsf{E}$-valued paths over $[0,\tau]$.
Important special cases include Brownian motion, for which $b_s = 0$ for all $s\in(0,\tau]$, as well as the Langevin dynamics which we consider in the next section.

As in the discrete-time setting, the continuous-time Schr\"odinger bridge problem with respect to $Q$ can be expressed as the minimization of KL divergence over processes $H$ that satisfy the marginal constraints $H_0 = \pi_0$ and $H_\tau = \pi$ \citep[see e.g.][]{leonard2014survey}. Following \citet{dai1991stochastic}, it can equivalently be described as a stochastic control problem, i.e. we seek the change in drift $s\mapsto u_s$ such that the process
\begin{equation}\label{eq:langevin_controlled}
\mathrm{d}Y_s = \left\{b_s(Y_s)  + u_s(Y_s)\right\}\mathrm{d}s + \mathrm{d}W_s, \quad s\in[0,\tau],
\end{equation}
satisfies $Y_0\sim\pi_0$ and $Y_\tau \sim \pi$, and $s\mapsto u_s$ minimizes the cost function $\mathbb{E}\int_0^\tau \|u_s(Y_s)\|^2 \mathrm{d}s$. Using this perspective, \citet{dai1991stochastic} showed, under some assumptions, that the optimal controller $s\mapsto u^\star_s$ has the form $u^\star_s = \nabla \log \psi^\star_s$ and
the transition probabilities are given by Doob's $h$-transform
\begin{align}\label{eqn:continuous_transitions}
S(\mathrm{d}x_t|x_s)=\frac{Q(\mathrm{d}x_t|x_s)\psi_t^\star(x_t)}{\psi_s^\star(x_s)},\quad 0\leq s<t\leq \tau.
\end{align}
The harmonic functions $s\mapsto\psi_s^\star$ can be defined using the continuous-time analog of the equations \eqref{eq:harmonic_T} and \eqref{eq:harmonic_t}.

Consider the case where $b_s = \frac{1}{2}\nabla \log \pi_s$ for some smooth curve of distributions $(\pi_s)_{s\in[0,\tau]}$ such that $\lim_{s\to 0} = \pi_0$ and $\lim_{s \to \tau}\pi_s = \pi$. Examples of such curves include continuous versions of the geometric interpolation in \eqref{eq:geometric_path}. Provided the drift function $s\mapsto \nabla \log \pi_s$ changes \textit{adiabatically}, or infinitely slowly in time, the distribution of the particle $X_s$ is equal to $\pi_s$ for all $s\in[0,\tau]$ \citep{chiang1987diffusion,patra2017shortcuts}. In all but trivial cases, this effectively requires that $\tau\to\infty$. However, for $\tau$ large enough, one expects the distribution of $X_\tau$ to be close to $\pi$.

\subsubsection{Schr\"odinger bridges for discretized Langevin dynamics}\label{sec:sb_langevin}
From the sampling perspective, the above discussion motivates using the kernels $\{M_t\}_{t\in[1:T]}$ defined by the Euler--Maruyama discretization
\begin{equation}\label{eq:em_langevin}
M_t(x_{t-1},x_t) = \mathcal{N}\left(x_t; x_{t-1} + \frac{h}{2}\nabla \log \pi_t(x_{t-1}), h\mathcal{I}_d \right), \quad t\in[1:T],
\end{equation}
to initialize the path measure $\mathbb{Q}$, where $\{\pi_t\}_{t\in[0:T]}$ is a discretization of the curve $(\pi_s)_{s\in[0,\tau]}$. If the step-size $h = \tau/T$ is small and $\tau$ is large, we expect the marginal distribution $q_T$ to be close to $\pi_T = \pi$, thus providing us with a good starting point. This can also be seen by noting that if one chooses the backward kernel $L_{t-1}(x_t,x_{t-1}) = M_t(x_t, x_{t-1})$, the resulting SMC sampler would approximate AIS
in the small $h$ setting \citep{heng2017controlled}.
However, except for Gaussian policies, sampling from Markov kernels that are twisted by non-conjugate policies is challenging. We now exploit the continuous-time perspective to circumvent this difficulty.

Suppose that we have an approximation $(\psi_s)_{s\in[0,\tau]}$ of the policy $(\psi_s^\star)_{s\in[0,\tau]}$ that defines Doob's $h$-transform \eqref{eqn:continuous_transitions}.
Consider the Euler--Maruyama discretization of \eqref{eq:langevin_controlled} in the case where $b_s = \frac{1}{2}\nabla \log \pi_s$ and $u_s = \nabla \log \psi_s$
and define
\begin{equation}\label{eq:twisted_kernel_approx}
M^{\bar{\psi}}_t(x_{t-1},x_t) = \mathcal{N}\left(x_t; x_{t-1} + \frac{h}{2}\nabla \log \pi_t(x_{t-1}) + h\nabla \log \psi_t(x_{t-1}), h\mathcal{I}_d \right),
\end{equation}
for $t\in[1:T].$ Provided that the functions $\{\psi_t\}_{t\in[0:T]}$ are appropriately smooth, the first-order Taylor expansion of $\log \psi_t$ around $x_{t-1}$ can be written
\begin{equation}
\log \psi_t(x_t) = \log \bar{\psi_t}(x_{t-1}, x_t) + \mathcal{O}(\|x_t - x_{t-1}\|^2),
\end{equation}
as $\|x_t - x_{t-1}\| \to 0$, where $\log \bar{\psi_t}(x_{t-1}, x_t) = \log \psi_t (x_{t-1}) + \nabla \log \psi_t (x_{t-1})^\top (x_t - x_{t-1})$.
These kernels can also be expressed as
\begin{equation}
M^{\bar{\psi}}_t(x_{t-1},x_t) = \frac{M_t(x_{t-1},x_t)\bar{\psi_t}(x_{t-1}, x_t)}{M_t(\bar{\psi_t})(x_{t-1})},
\end{equation}
where $\log M_t(\bar{\psi_t})(x_{t-1})  = \log \psi_t(x_{t-1}) + \frac{h}{2}\nabla \log \psi_t(x_{t-1})^\top \left[\nabla \log \pi_t(x_{t-1}) +\nabla \log \psi_t(x_{t-1})\right]$. In the small $h$ regime, we expect $\|x_t - x_{t-1}\|^2$ to be of size $\mathcal{O}(h)$ under $M_t(x_{t-1},\mathrm{d}x_t)$. Hence, $\{M^{\bar{\psi}}_t\}_{t\in[1:T]}$ are likely to provide good approximations of the kernels $\{M^{\psi}_t\}_{t\in[1:T]}$ when $h$ is small. Similar arguments also hold for the analogously defined kernels $\{M^{\tilde{\psi}}_t\}_{t\in[1:T]}$, where the second-order approximation $\log \tilde{\psi_t}(x_{t-1}, x_t) = \log \bar{\psi_t}(x_{t-1}, x_t)  + \frac{1}{2}(x_t - x_{t-1})^\top \nabla^2\log\psi_t(x_{t-1})(x_t - x_{t-1})$, in which $\nabla^2 f$ denotes the Hessian of $f$.

The benefit of approximating $\log \psi_t(x_t)$ with its linearization $\log\bar{\psi}(x_{t-1},x_t)$ around $x_{t-1}$ for small step sizes is that sampling from $M^{\bar{\psi}}_t(x_{t-1},\mathrm{d}x_t)$ and evaluating the integrals $\log M_t(\bar{\psi_t})(x_{t-1})$ only require access to the pointwise evaluation of the functions $\log \psi_t$ and $\nabla \log \psi_t$ (the second-order approximation additionally requires evaluations of $\nabla^2 \log \psi_t$). Thus, using these approximations one could feasibly run the ADP algorithm without requiring the function classes and kernels to be conjugate. This would allow us to make use of flexible function estimation methods such as neural networks \citep{hure2018deep} within the approximate IPF algorithm. A version of approximate IPF using Euler--Maruyama kernels is given in Algorithm \ref{algorithm:aIPF_langevin} in Appendix \ref{appendix:algos}.

The continuous-time perspective also suggests a natural choice of backward kernels. For diffusion processes of the form \eqref{eq:langevin_controlled},  \citet{haussmann1986time} showed that the time-reversed process $\tilde{Y}_s=Y_{\tau-s}$ satisfies
\begin{equation}\label{eq:langevin_controlled_reversal}
\mathrm{d}\tilde{Y}_s = \left\{ \nabla \log \rho_s(\tilde{Y}_s) - b_s(\tilde{Y}_s)  - u_s(\tilde{Y}_s)\right\}\mathrm{d}s + \mathrm{d}\tilde{W}_s, \quad s\in[0,\tau],
\end{equation}
where $\rho_s$ is the marginal density of the process \eqref{eq:langevin_controlled} at time $s$, and $\tilde{W}_s$ is another standard Brownian motion. The backward kernels could therefore be sensibly chosen to be the Euler--Maruyama discretization of \eqref{eq:langevin_controlled_reversal}, but the term $\nabla \log \rho_s$ is typically intractable. If $b_s = \frac{1}{2}\nabla \log \pi_s$ and $u_s = \nabla \log \psi_s$ is an approximation of the optimal forward drift $\nabla \log \psi_s^\star$, a generic choice could be to replace $\rho_s$ by $\pi_s$. In this case, the backward kernels would amount to
\begin{equation}\label{eq:twisted_backward_kernel_approx}
L^{\bar{\psi}}_{t-1}(x_t,x_{t-1}) = \mathcal{N}\left(x_{t-1}; x_{t} + \frac{h}{2}\nabla \log \pi_{t-1}(x_{t}) - h\nabla \log \psi_t(x_{t}), h\mathcal{I}_d \right),
\end{equation}
for  $t\in[1:T].$ As for the forward process, a second-order approximation to $\log \psi_t$ can also be used. However, achieving low variance of the Radon--Nikodym derivative estimator \eqref{eqn:RN_estimator} relies on the assumption
that the marginal of $Q$ at time $s$ is close to $\pi_s$, which might not be the case.
In Section \ref{sec:ssb}, we circumvent the intractability of $\rho_s$ by designing a process in which the marginal distributions can be consistently approximated by $\pi_s$, making the kernel in \eqref{eq:twisted_backward_kernel_approx} more suitable.

\subsection{Schr\"odinger bridge sampling}
Given the policy $\hat{\psi}^{(I)}$ after $I \geq 1$ iterations of the approximate IPF algorithm, one can proceed to use the proposal $\mathbb{Q}^{\hat{\psi}^{(I)}}$ within importance sampling or SMC on path space. The terminal marginal $q_T^{\hat{\psi}^{(I)}}$ is likely to be closer to the target distribution $\pi_T = \pi$ than the initial $q_T$, but this condition is not in itself enough to guarantee efficient sampling. In particular, as mentioned in the previous sections, constructing backward kernels $\{L_t\}_{t\in[0:T-1]}$ to define the target distribution on path space remains challenging. Moreover, the incremental weights within an importance sampling or SMC scheme would more naturally be based on the sequence of Schr\"odinger bridge marginals $\{s_t\}_{t\in[0:T]}$ rather than $\{\pi_t\}_{t\in[0:T]}$, but these distributions are intractable (even up to normalizing constants). In Section \ref{sec:ssb_samplers}, we propose a scheme based on the \textit{multi-marginal} Schr\"odinger bridge problem which helps mitigate these issues.

\subsection{Example: linear quadratic Gaussian} \label{sec:lqg}
We illustrate the computation of Schr\"odinger bridges on a simple example in which the initial, target, and transitions are all Gaussian. In this simple linear quadratic Gaussian (LQG) scenario, IPF can be implemented exactly. Let $\pi_0(\mathrm{d}x_0) = \mathcal{N}(x_0; \mu_0, \Sigma_0)\mathrm{d}x_0$ and $\pi_T(\mathrm{d}x_T) = \mathcal{N}(x_T; \mu_T, \Sigma_T)\mathrm{d}x_T$ for some $\mu_0,\mu_T\in \mathbb{R}^d$ and $\Sigma_0,\Sigma_T\in \mathbb{R}^{d\times d}$, and for each $t\in[1:T]$, let $M_t(x_{t-1},\mathrm{d}x_t) = \mathcal{N}(x_t; K_tx_{t-1} + r_t, H_t)\mathrm{d}x_t$ for some $r_t \in \mathbb{R}^d$ and $K_t, H_t\in \mathbb{R}^{d\times d}$. By conjugacy, it can be shown that the exact IPF iterations can be written in terms of policies of the form $-\log\psi_t^{(i)}(x_t) = x_t^\top A_t^{(i)}x_t + x_t^\top b_t^{(i)} + c_t^{(i)}$ for every $i \geq 1$ and $t\in[0:T]$. We derive expressions for the coefficients $A_t^{(i)}, b_t^{(i)}, c_t^{(i)}$ and the resulting twisted Markov kernels in Appendix \ref{appendix:lqg}. We compare the exact potentials $\psi^{(i)}$ and the corresponding path measure $\mathbb{Q}^{\psi^{(i)}}$ with their approximations $\hat{\psi}^{(i)}$ and $\mathbb{Q}^{\hat{\psi}^{(i)}}$ constructed with approximate IPF. We choose the function classes $\mathsf{F}=\{\mathsf{F}_t\}_{t\in[0:T]}$ to contain only Gaussians. This choice is \textit{well-specified} in the sense that $\psi^{(i)}\in\mathsf{F}$ for every $i\geq 1$.

In particular, we consider a prior distribution $\pi_0$ with $\mu_0 = 0$ and $\Sigma_0 = \mathcal{I}_d$ and a log-likelihood defined by $\ell(x) = -(y-x)^\top R^{-1} (y-x)/2$ for some observation $y\in\mathbb{R}^d$ and symmetric positive definite $R\in\mathbb{R}^{d\times d}$, giving rise to a posterior distribution $\pi=\pi_T$ with $\Sigma_T = \left(\Sigma_0^{-1} + R^{-1}\right)^{-1}$ and $\mu_T = \Sigma_T\left(\Sigma_0^{-1}\mu_0 + R^{-1}y\right)$. By conjugacy, the distributions defining the geometric path \eqref{eq:geometric_path} are Gaussian: $\pi_t(\mathrm{d}x_t) = \mathcal{N}(x_t; \mu_t, \Sigma_t)\mathrm{d}x_t$ with $\Sigma_t = \left(\Sigma_0^{-1} + \lambda_t R^{-1}\right)^{-1}$ and $\mu_t = \Sigma_t \left(\Sigma_0^{-1}\mu_0 + \lambda_t R^{-1}y\right)$. Let $y = (\xi,\dots,\xi)^\top$ and $R$ have 1's on the diagonal and $\rho$'s on the off-diagonal. Here, we take $d = 2$, $\xi = 8$ and $\rho = 0.8$.

We consider two different settings of the reference process parameters $\{K_t\}_{t\in[1:T]}$, $\{r_t\}_{t\in[1:T]}$ and $\{H_t\}_{t\in[1:T]}$. In the first, they are constructed from discretizing Brownian motion over the time-interval $[0,\tau]$ with $\tau = 2$ and step size $h = 1/20$. That is, we take $K_t = \mathcal{I}_d$, $r_t = 0$ and $H_t = h \mathcal{I}_d$ for every $t\in[1:T]$, with $T = 40$. In Section \ref{sec:connections}, we discuss the connection between Schr\"odinger bridges with Brownian motion reference process and the optimal transport problem with quadratic cost. In the second setting, we consider the discretization of the Langevin dynamics discussed in Section \ref{sec:sb_langevin}, i.e. $\lambda_t = t/T$, $K_t = \mathcal{I}_d - h\Sigma_t^{-1}/2$, $r_t = h\Sigma_t^{-1}\mu_t/2$ and $H_t = h \mathcal{I}_d$ for $t\in[1:T]$, with $T$ and $h$ as in the Brownian setting. In both cases, we take the backward kernels to be \eqref{eq:twisted_backward_kernel_approx}.

To measure the discrepancy between the marginals of the Schr\"odinger bridge $s_t$ and the marginals obtained with $i$ iterations of IPF $q_t^{(i)}$, we compute the corresponding 2-Wasserstein distance. Since the distributions involved are Gaussian, the 2-Wasserstein distance between them is known in closed form \citep[see e.g.][Chapter 2.6]{peyre2018computational}: for Gaussian distributions $\nu_i$ with means $\gamma_i$ and covariances $\Gamma_i$ for $i = 1,2$, we have $\was_2^2(\nu_1,\nu_2) = \|\gamma_1 - \gamma_2\|^2 + B(\Gamma_1,\Gamma_2)^2,$
where $B$ is the Bures metric between positive definite matrices \citep{bures1969extension}.

In Figures \ref{fig:lqg_brownian_IPF} and \ref{fig:lqg_langevin_IPF} we plot $\log \was_2(s_t, q_t^{(i)})$ for $t \in[1:T]$, $i \in [0:5]$, and exact and approximate IPF for different number of conditional SMC iterations $M$ (using $P = 128$ CSMC particles). In both Figures \ref{fig:brownian_IPF_exact} and \ref{fig:langevin_IPF_exact}, we observe that the marginals given by the exact IPF iterations appear to converge exponentially fast, as seen in Proposition \ref{prop:IPF_convergence}.
Moreover, we note that the errors are smaller for the Langevin reference process and for early times as we are initializing both reference processes from $\pi_0$.
For the approximate schemes, we observe that performing only one or two iterations of IPF leads to large reductions in the distance to the Schr\"odinger bridge, but that the benefit of performing further IPF iterations depends on the variance of the Radon--Nikodym estimates via the value of $M$. For less variable estimates, the benefit of further IPF iterations is larger. This also illustrates that the choice of backward kernels \eqref{eq:twisted_backward_kernel_approx} is suboptimal in these settings, which motivates the method proposed in Section \ref{sec:ssb}. Increasing the number of particles $N$ did not lead to qualitatively different behavior across the IPF iterations, but reduced the width of the confidence bands (not illustrated).

\begin{figure}[hp]
        \begin{subfigure}[t]{0.32\textwidth}
            \centering
            \includegraphics[width=\textwidth]{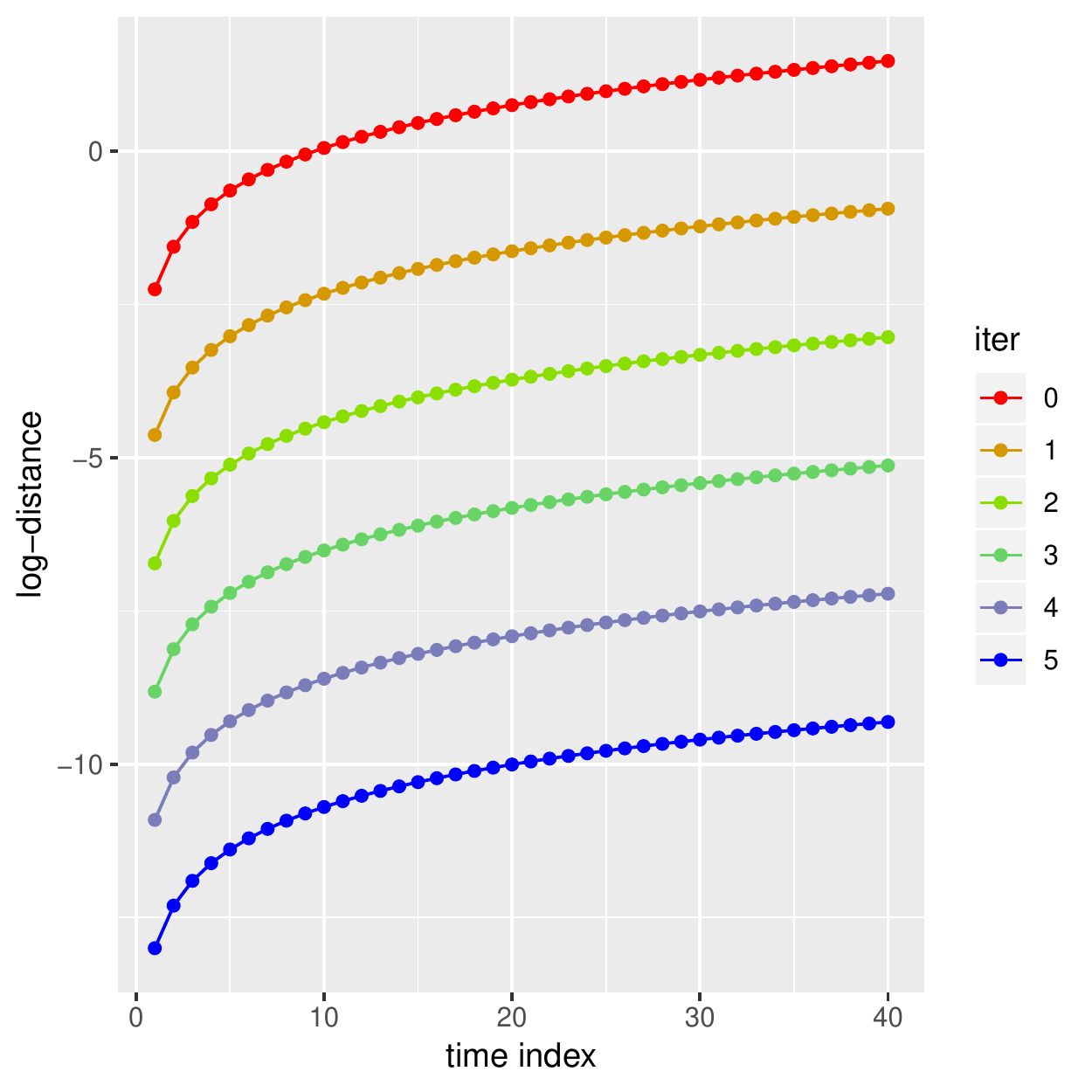}
            \caption{{Exact IPF.}}
            \label{fig:brownian_IPF_exact}
        \end{subfigure}
        \begin{subfigure}[t]{0.32\textwidth}
            \centering
            \includegraphics[width=\textwidth]{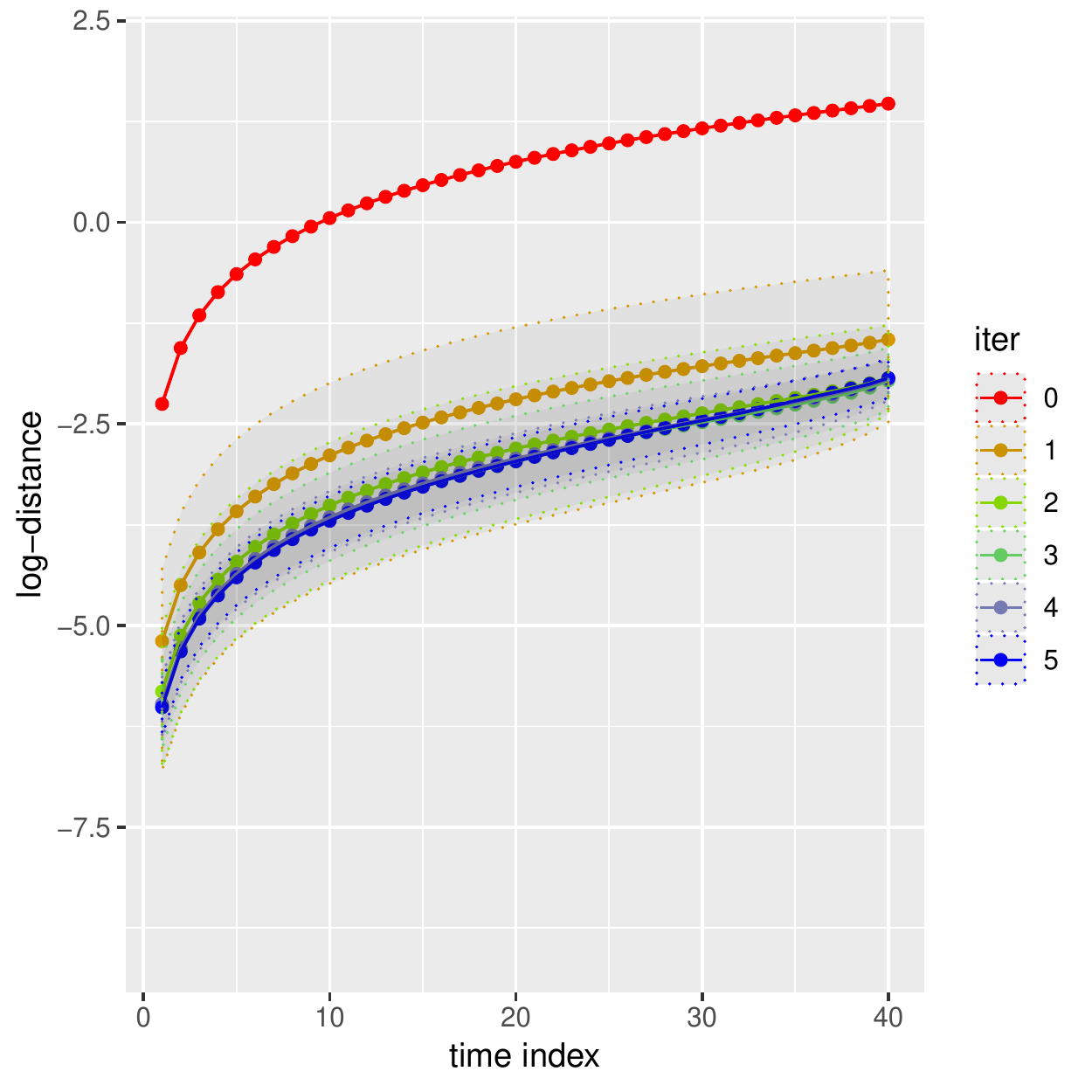}
            \caption{{\small Approx.~IPF, $M=0$.}}
            \label{fig:brownian_IPF_approx_IPF5_reps100_n1000_m0}
        \end{subfigure}
        \begin{subfigure}[t]{0.32\textwidth}
            \centering
            \includegraphics[width=\textwidth]{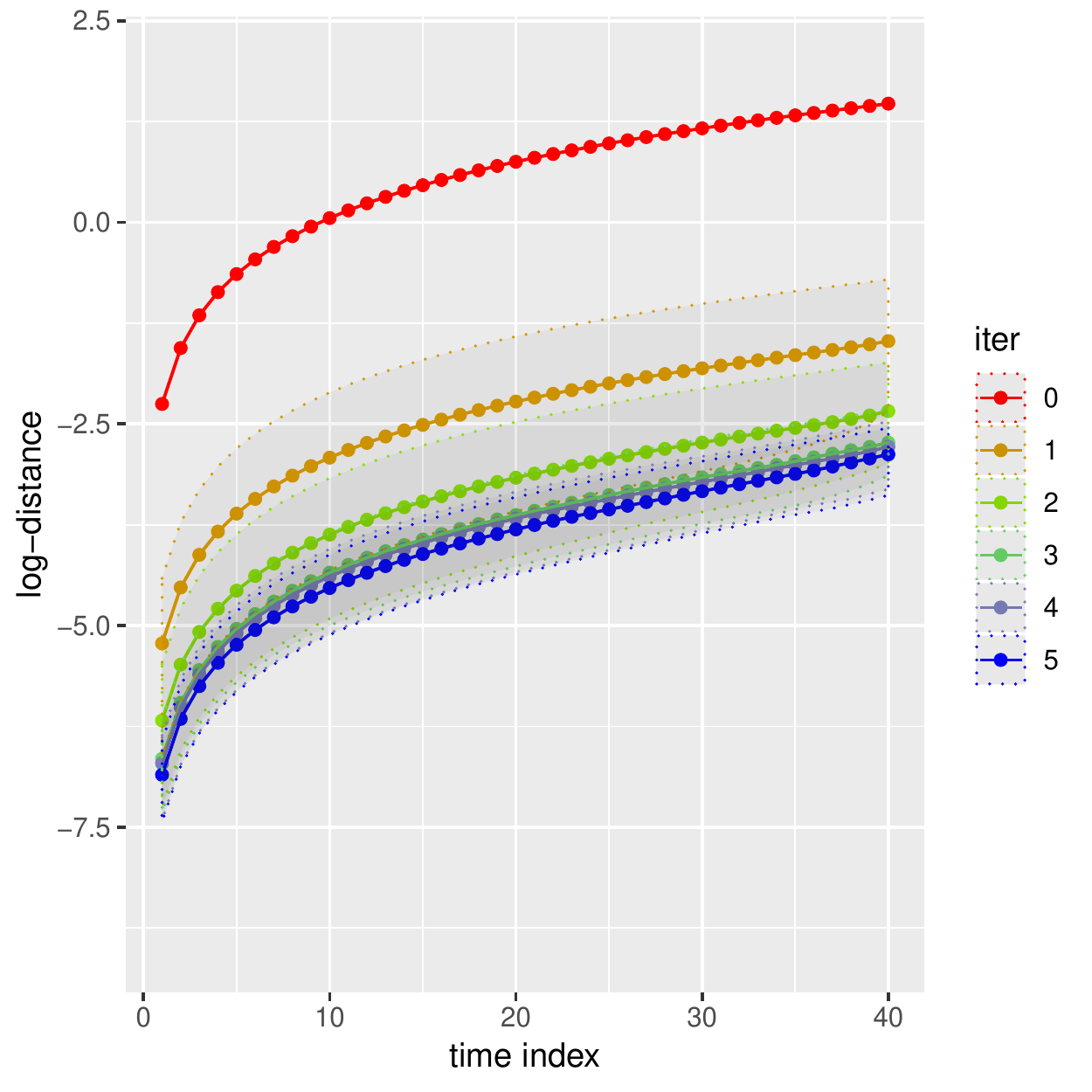}
            \caption{{\small Approx.~IPF, $M=5$.}}
            \label{fig:brownian_IPF_approx_IPF5_reps100_n1000_m5}
        \end{subfigure}

	\vspace*{0.8cm}
         \begin{subfigure}[t]{0.32\textwidth}
            \centering
            \includegraphics[width=\textwidth]{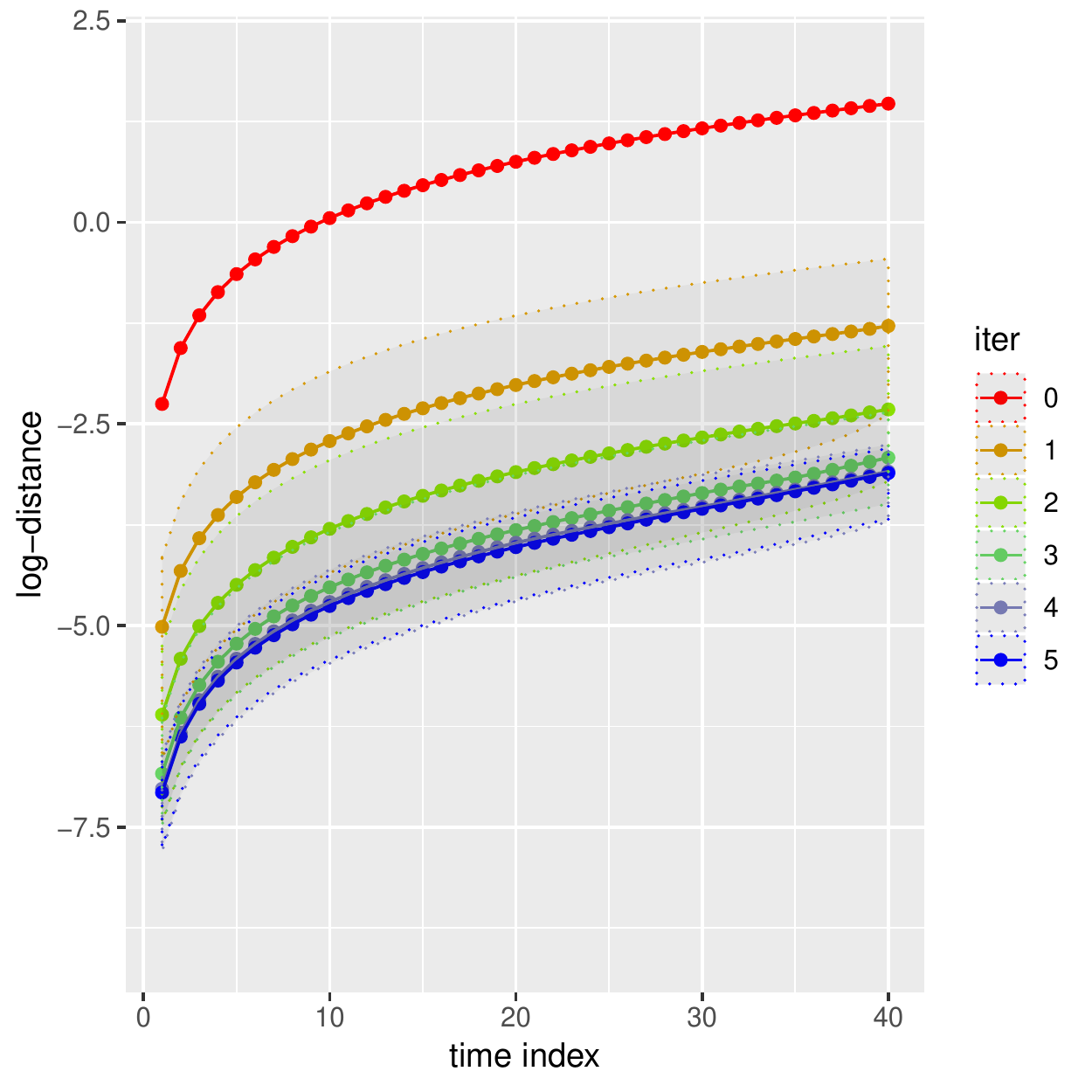}
            \caption{{\small Approx.~IPF, $M=10$.}}
            \label{fig:brownian_IPF_approx_IPF5_reps100_n1000_m10}
        \end{subfigure}
                 \begin{subfigure}[t]{0.32\textwidth}
            \centering
            \includegraphics[width=\textwidth]{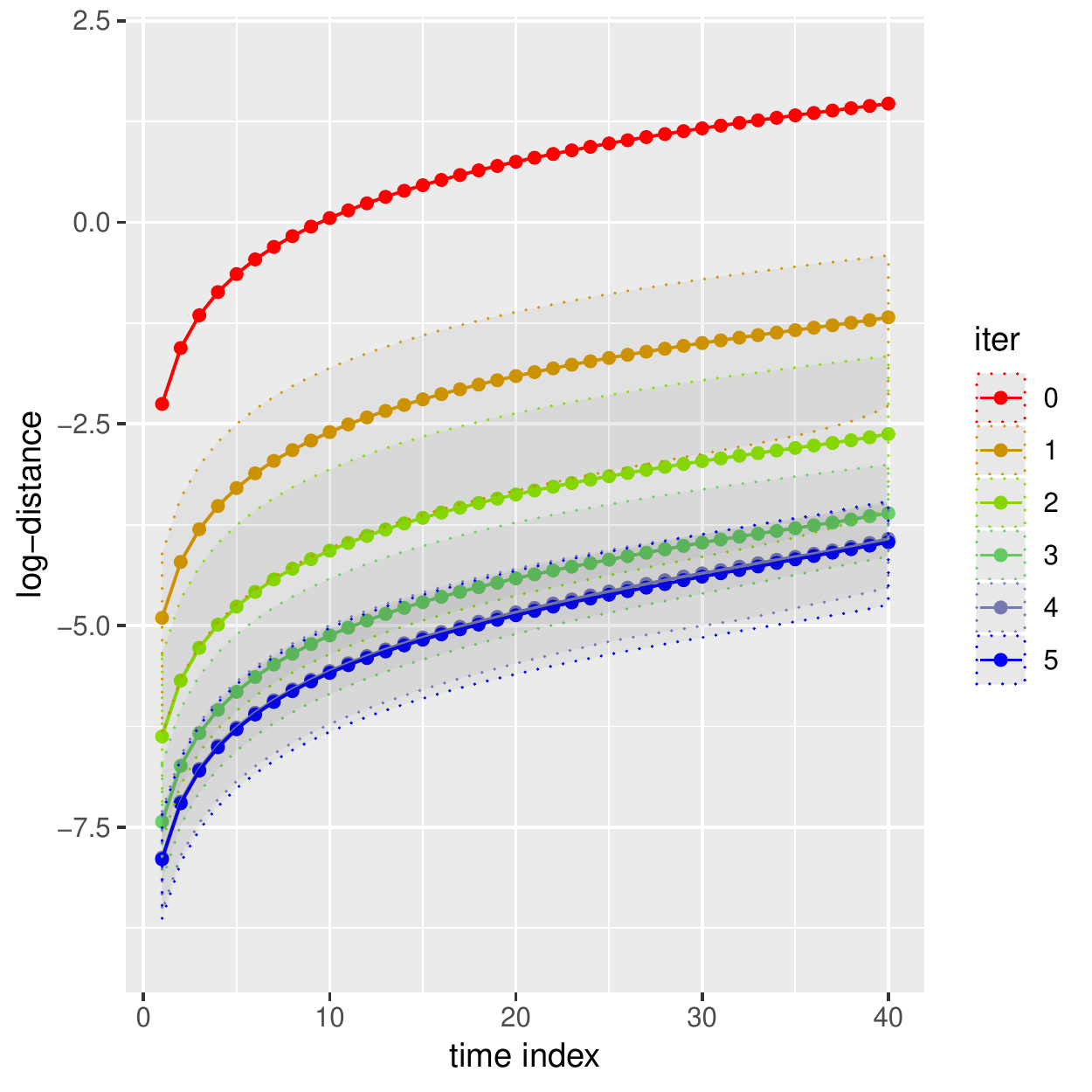}
            \caption{{\small  Approx.~IPF, $M=50$.}}
            \label{fig:brownian_IPF_approx_IPF5_reps100_n1000_m50}
        \end{subfigure}
        \begin{subfigure}[t]{0.32\textwidth}
            \centering
            \includegraphics[width=\textwidth]{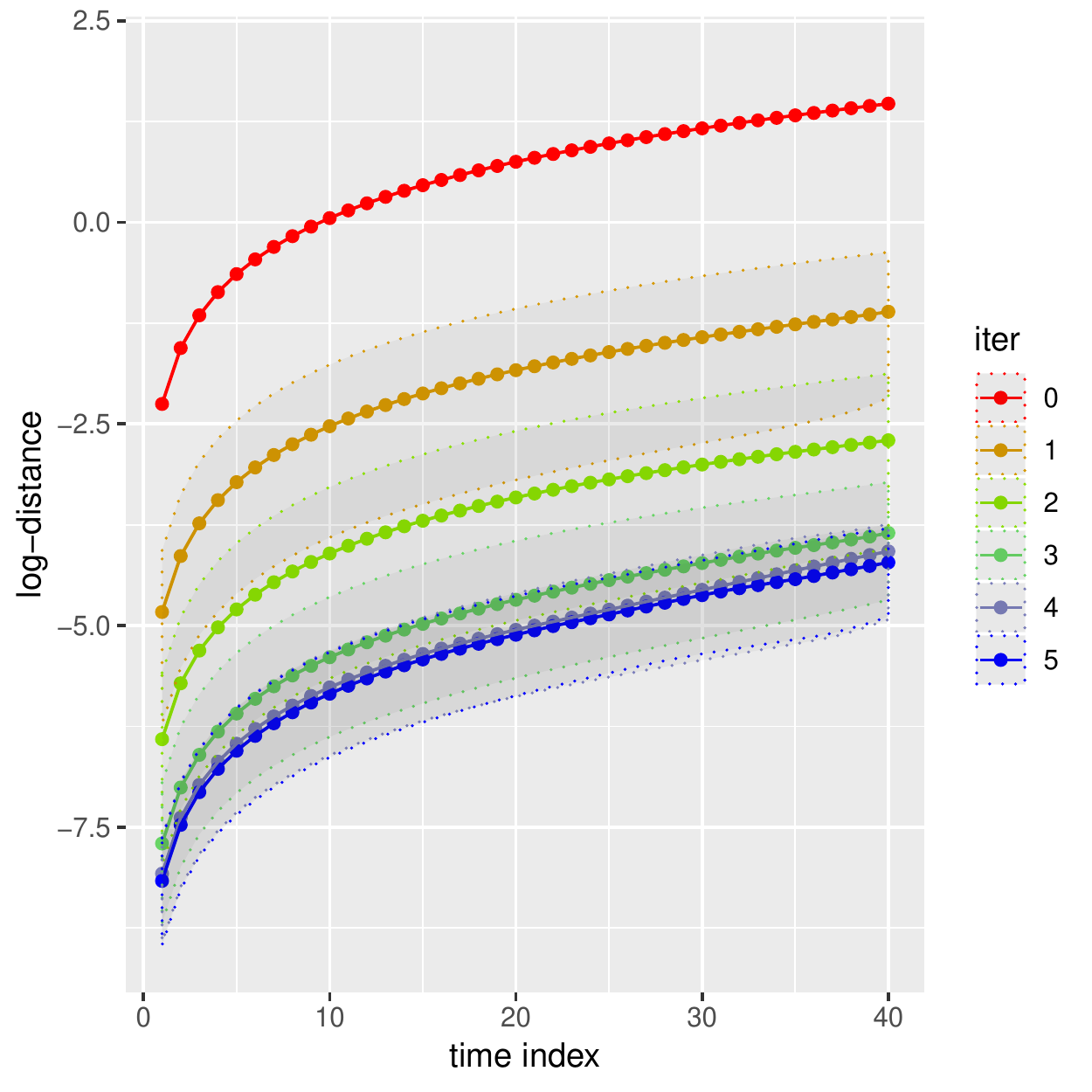}
            \caption{{\small  Approx.~IPF, $M=100$.}}
            \label{fig:brownian_IPF_approx_IPF5_reps100_n1000_m100}
        \end{subfigure}
        \caption{\small Distances between marginals of the Schr\"odinger bridge $s_t$ and the marginals of the IPF iterates $q_t^{(i)}$, measured as $\log \was_2(s_t,q_t^{(i)})$, for the LQG setting of Section \ref{sec:lqg} with discretized Brownian diffusion reference dynamics. Figure \ref{fig:brownian_IPF_exact} corresponds to the exact computation of the IPF iterates for $i \in [0:5]$. The remaining plots correspond to the proposed particle-based approximation of the IPF iterations using $N =1,000$ particles and different values of conditional SMC iterations $M$ (using $P = 128$ CSMC particles). The solid lines correspond to the median value of the log-distance calculated over $100$ independent simulations, and corresponding confidence bands represent the $5\%$ and $95\%$ quantiles. Note that the vertical axis of Figure \ref{fig:brownian_IPF_exact} is on a different scale than those of the other figures, as the exact IPF iterations yield smaller distances in general.}
        \label{fig:lqg_brownian_IPF}
\end{figure}

\begin{figure}[hp]
        \begin{subfigure}[t]{0.32\textwidth}
            \centering
            \includegraphics[width=\textwidth]{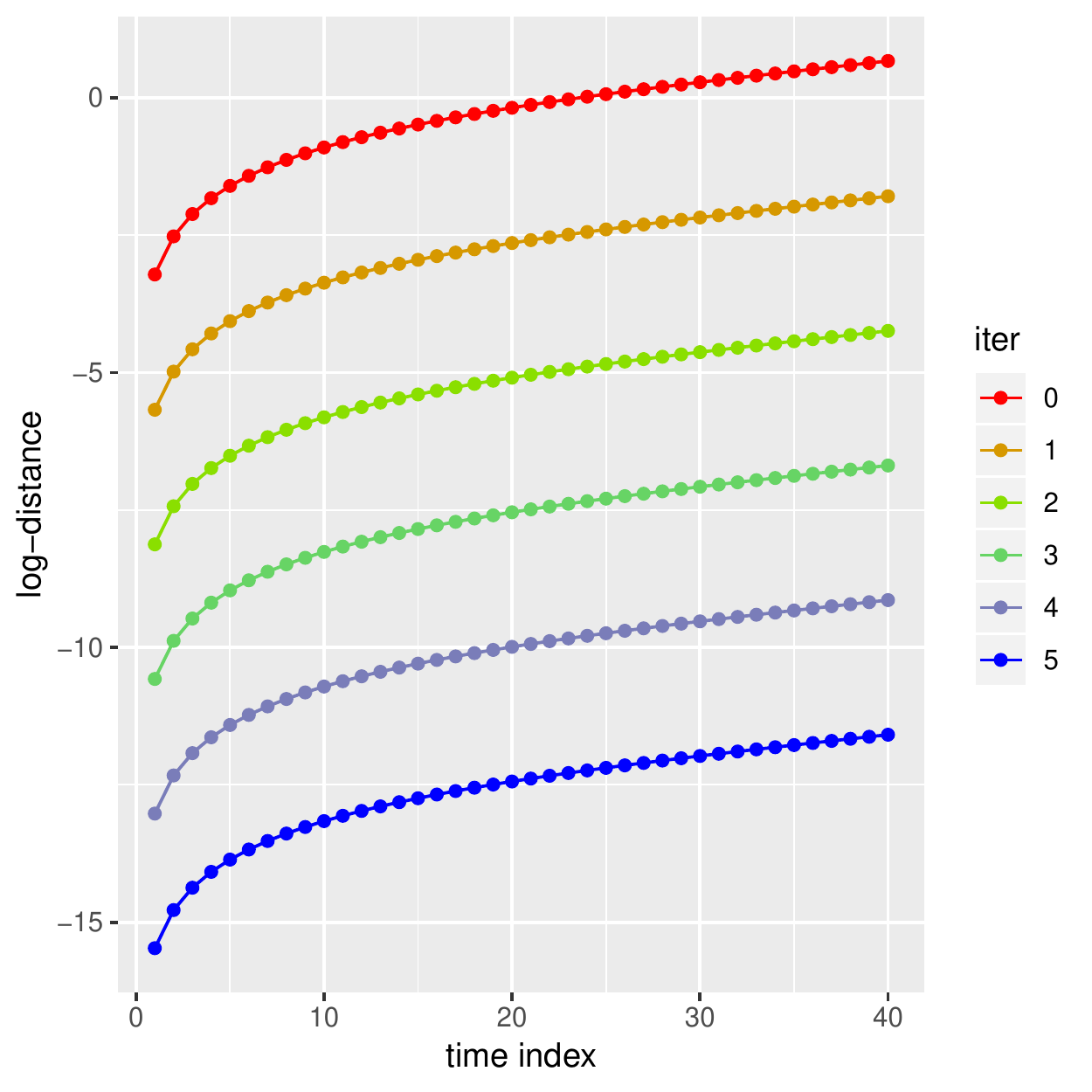}
            \caption{{Exact IPF.}}
            \label{fig:langevin_IPF_exact}
        \end{subfigure}
        \begin{subfigure}[t]{0.32\textwidth}
            \centering
            \includegraphics[width=\textwidth]{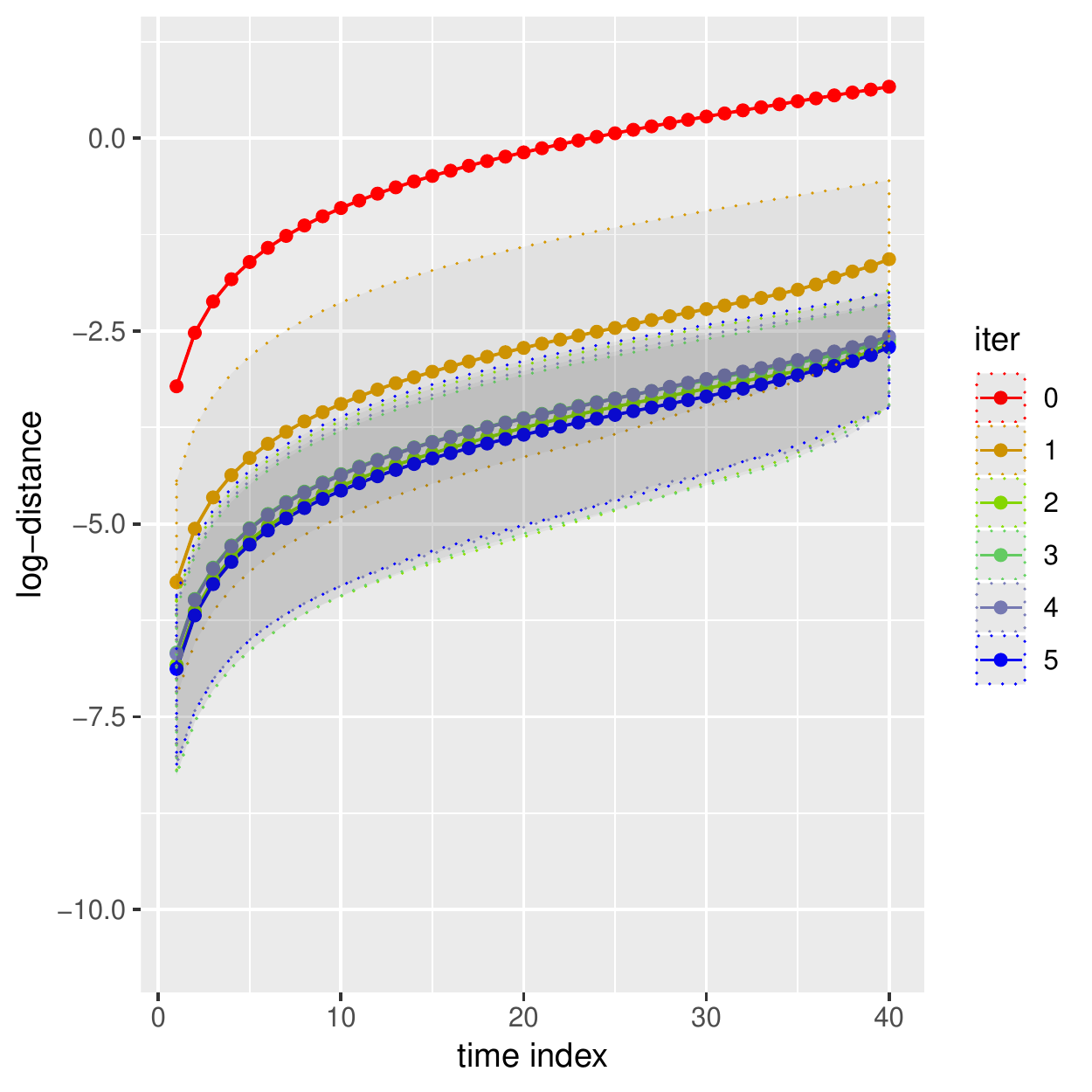}
            \caption{{\small Approx.~IPF, $M=0$.}}
            \label{fig:langevin_IPF_approx_IPF5_reps100_n1000_m0}
        \end{subfigure}
        \begin{subfigure}[t]{0.32\textwidth}
            \centering
            \includegraphics[width=\textwidth]{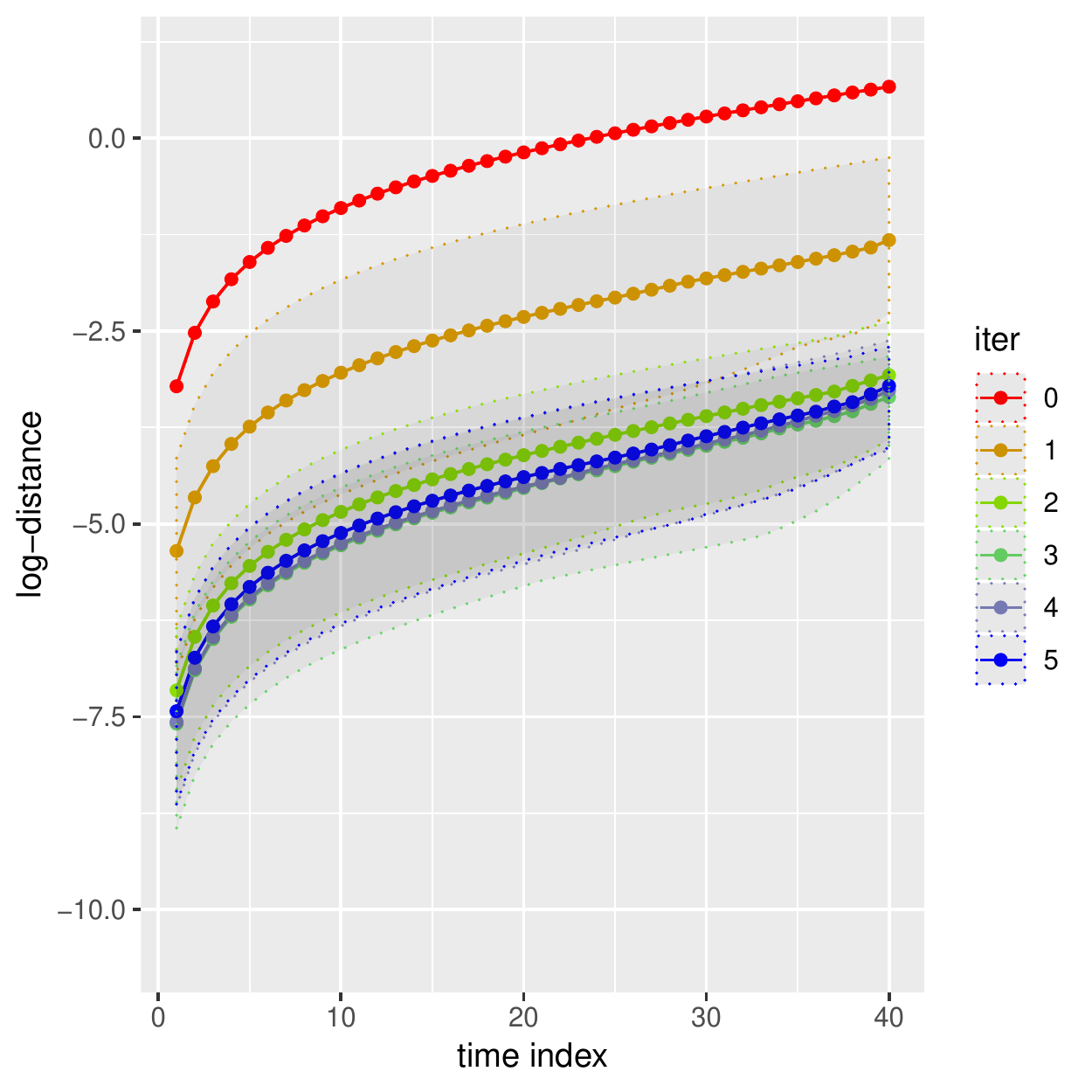}
            \caption{{\small Approx.~IPF, $M=5$.}}
            \label{fig:langevin_IPF_approx_IPF5_reps100_n1000_m5}
        \end{subfigure}
	
	\vspace*{0.8cm}
         \begin{subfigure}[t]{0.32\textwidth}
            \centering
            \includegraphics[width=\textwidth]{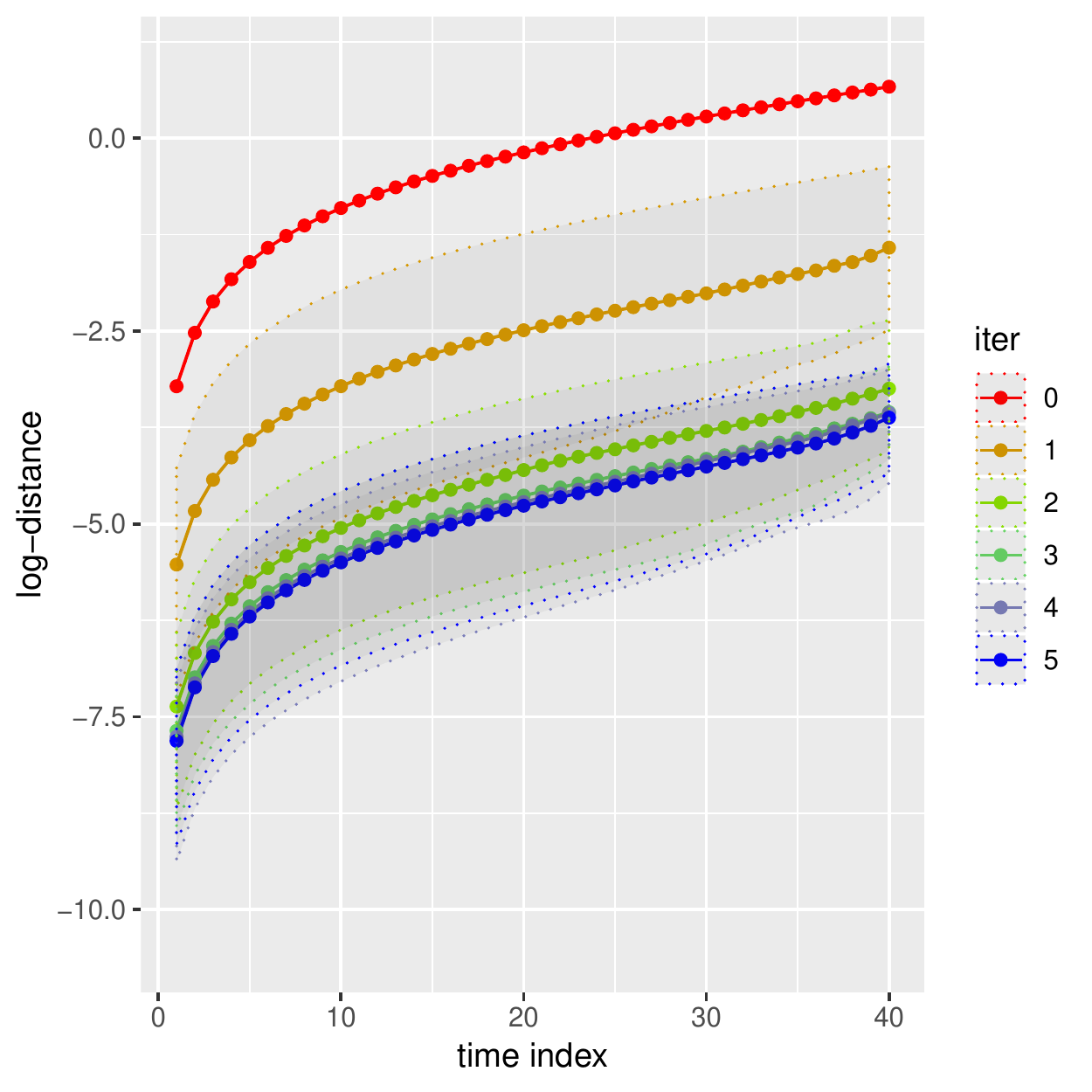}
            \caption{{\small Approx.~IPF, $M=10$.}}
            \label{fig:langevin_IPF_approx_IPF5_reps100_n1000_m10}
        \end{subfigure}
                 \begin{subfigure}[t]{0.32\textwidth}
            \centering
            \includegraphics[width=\textwidth]{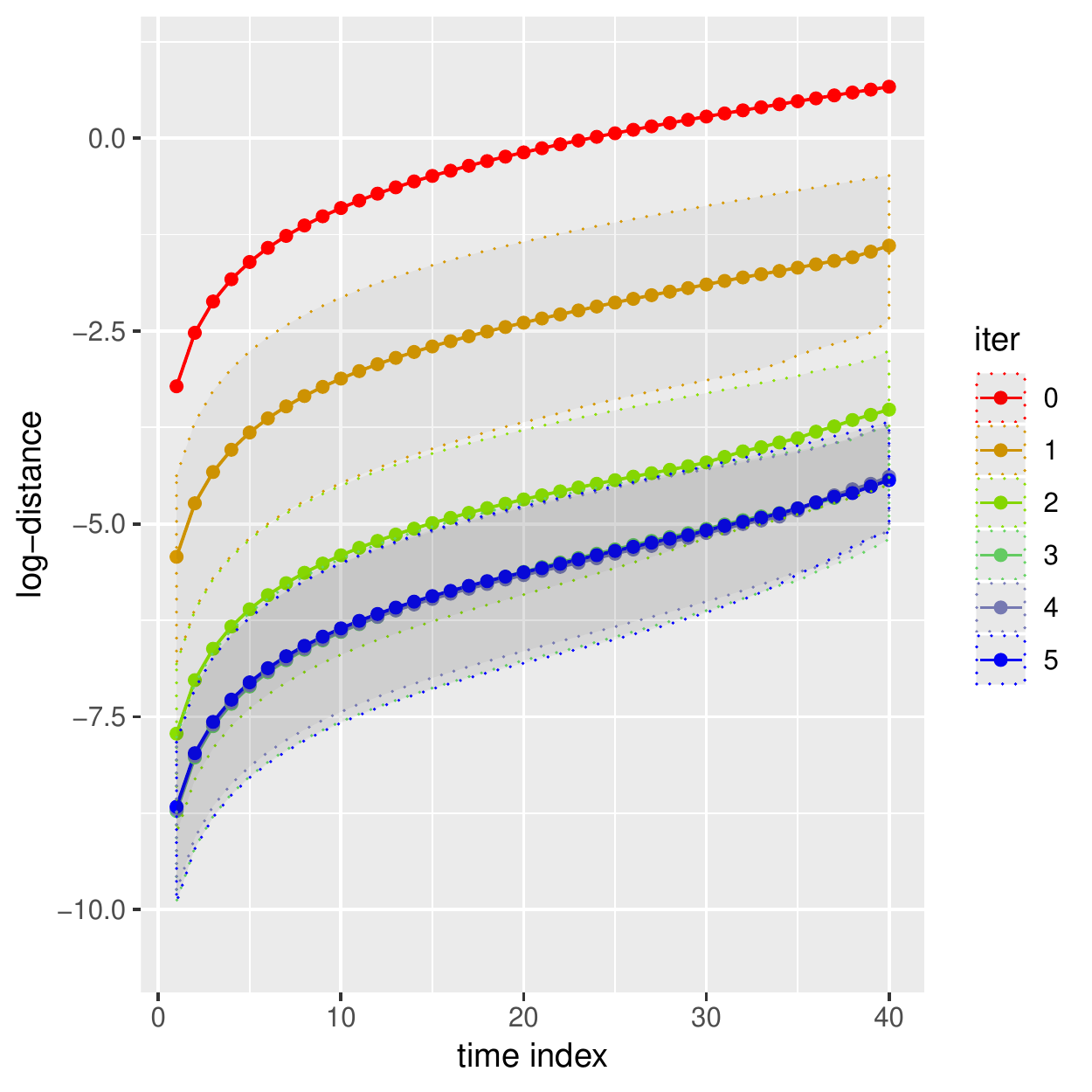}
            \caption{{\small Approx.~IPF, $M=50$.}}
            \label{fig:langevin_IPF_approx_IPF5_reps100_n1000_m50}
        \end{subfigure}
        \begin{subfigure}[t]{0.32\textwidth}
            \centering
            \includegraphics[width=\textwidth]{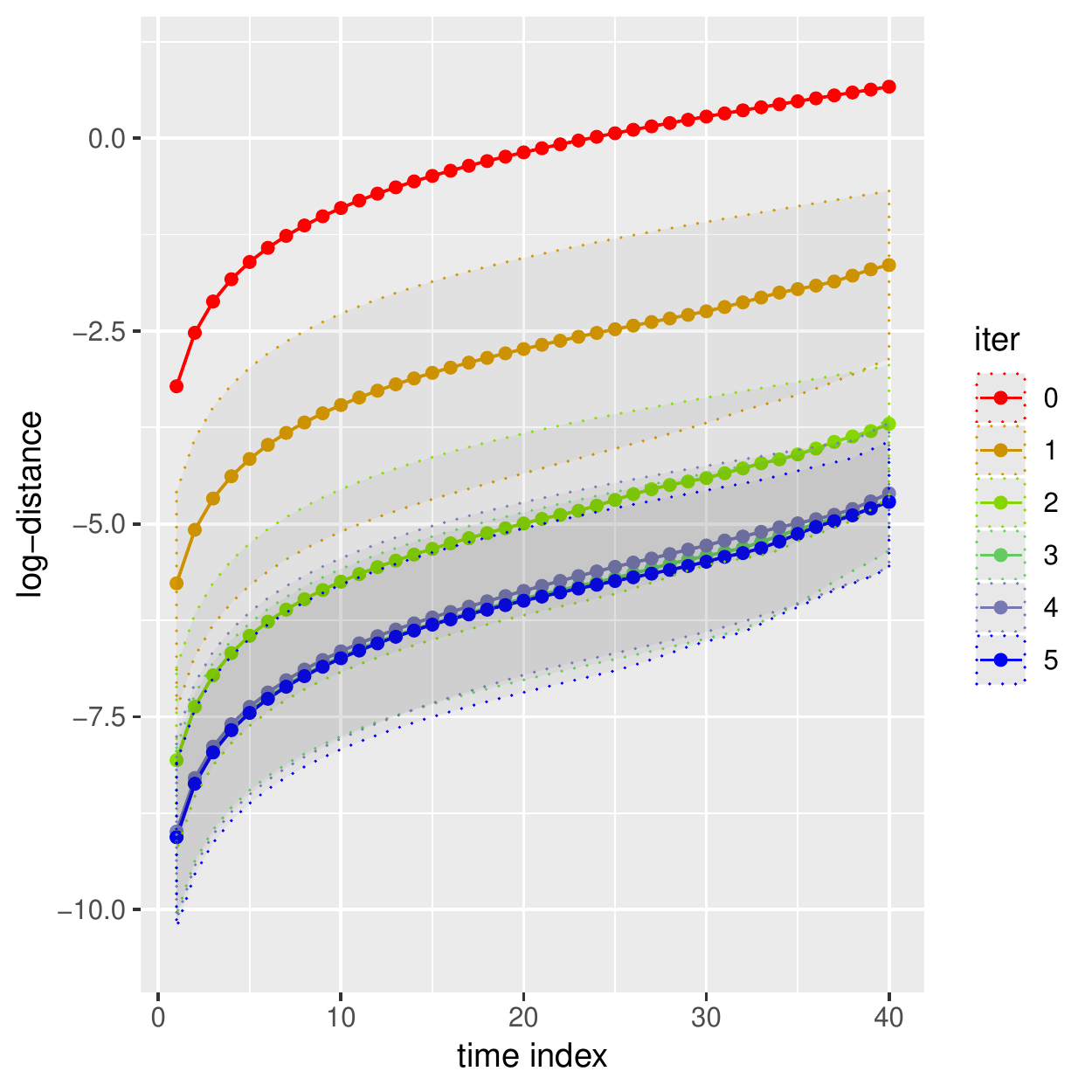}
            \caption{{\small Approx.~IPF, $M=100$.}}
            \label{fig:langevin_IPF_approx_IPF5_reps100_n1000_m100}
        \end{subfigure}
        \caption{\small Distances between marginals of the Schr\"odinger bridge $s_t$ and the marginals of the IPF iterates $q_t^{(i)}$, measured as $\log \was_2(s_t,q_t^{(i)})$, for the LQG setting of Section \ref{sec:lqg} with discretized Langevin diffusion reference dynamics. Figure \ref{fig:langevin_IPF_exact} corresponds to the exact computation of the IPF iterates for $i \in [0:5]$. The remaining plots correspond to the proposed particle-based approximation of the IPF iterations using $N =1,000$ particles and different values of conditional SMC iterations $M$  (using $P = 128$ CSMC particles). The solid lines correspond to the median value of the log-distance calculated over $100$ independent simulations, and corresponding confidence bands represent the $5\%$ and $95\%$ quantiles. Note that the vertical axis of Figure \ref{fig:langevin_IPF_exact} is on a different scale than those of the other figures, as the exact IPF iterations yield smaller distances in general.}
        \label{fig:lqg_langevin_IPF}
\end{figure}

\section{Sequential Schr\"odinger bridge samplers}\label{sec:ssb_samplers}
The Monte Carlo approximation of the Schr\"odinger bridge proposed in Section \ref{sec:sb} has several limitations.
In particular, choosing backward kernels to achieve low variance Radon--Nikodym derivative estimation is difficult, in part because the optimal choice depends on the intractable marginal distributions of the forward process.
In this section, we construct forward kernels such that the corresponding marginal distributions are approximately equal to the sequence $\{\pi_t\}_{t\in[0:T]}$, and therefore circumvent this issue. The scheme we introduce in Section \ref{sec:ssb} is based on estimating and composing a sequence of  intermediate Schr\"odinger bridges, which will be seen to approximate the solution of the \textit{multi-marginal} Schr\"odinger bridge problem discussed in Section \ref{sec:multi_sb}.

\subsection{Multi-marginal Schr\"odinger bridges}\label{sec:multi_sb}
The multi-marginal Schr\"odinger bridge problem is defined by
\begin{equation} \label{eq:multi_sb}
\mathbb{S}_\mathcal{K}(\mathrm{d}x_{0:T}) = \argmin_{\mathbb{H} \in \mathcal{P}_{\mathcal{K}}(\pi_\mathcal{K})} \mathrm{KL}(\mathbb{H} | \mathbb{Q}),
\end{equation}
where $\mathcal{K} = \{t_k\}_{k\in[1:K]}\subset[0:T]$ with $t_1 = 0$, $t_K = T$ and $t_k < t_{k+1}$ for each $k$, and $\mathcal{P}_{\mathcal{K}}(\pi_\mathcal{K}) = \cap_{k\in[1:K]}\mathcal{P}_{t_k}(\pi_{t_k})$. In other words, the set of admissible path measures that have $\pi_{t_k}$ as their $t_k$-marginal for every $k\in[1:K]$. An important special case is where $\mathcal{K} = [0:T]$. Similar to the two-marginal Schr\"odinger bridge considered earlier (in which $\mathcal{K}=\{0,T\}$), the multi-marginal Schr\"odinger bridge can be written as
\begin{equation} \label{eq:multi_sb_potentials}
\mathbb{S}_\mathcal{K}(\mathrm{d}x_{0:T}) =  \prod_{k=1}^K\varphi_{t_k}(x_{t_k})\mathbb{Q}(\mathrm{d}x_{0:T}),
\end{equation}
where the potentials $\{\varphi_{t_k}\}_{k\in[1:K]}$ are unique up to a multiplicative constant and solve the $K$ Schr\"odinger equations: for each $k \in[1:K]$,
\begin{equation}\label{eq:schrodinger_equations}
\pi_{t_k}(x_{t_k}) = \int_{\mathsf{E}^T} \left[\prod_{\ell=1}^K\varphi_{t_\ell}(x_{t_\ell})  \mathbb{Q}(x_{0:T})\right]\mathrm{d}x_{-t_k},
\end{equation}
where we have used the notation $x_{-t} = (x_0,\dots,x_{t-1},x_{t+1},\dots,x_T)$.

Furthermore, the solution of the multi-marginal problem can be approximated using a generalization of the IPF algorithm;
this scheme reduces to the iterations in \eqref{eq:IPF} for $\mathcal{K}=\{0,T\}$.
In particular, \citet{kullback1968probability} introduced a scheme which systematically cycles through KL projections onto $\mathcal{P}_{t_k}(\pi_{t_k})$ for each $k\in[1:K]$. In the following, we will not consider \citet{kullback1968probability}'s scheme as it is challenging to
implement for several reasons. Firstly, this scheme would suffer from the same difficulties as discussed for the two-marginal problem, as we would still be required to construct a full set of $T$ backward kernels for each iteration of IPF.
Secondly, each iteration of this algorithm has computational complexity of $\mathcal{O}(KT)$. In Section \ref{sec:ssb}, we will introduce a different scheme
that has reduced complexity, at the cost of potentially allowing errors to accumulate across time.

\subsection{Sequential Schr\"odinger bridges} \label{sec:ssb}
Instead of solving the multi-marginal Schr\"odinger bridge problem using the cyclic IPF scheme considered by \citet{kullback1968probability}, we instead introduce a sequential approach. In particular, we sequentially solve the intermediate two-marginal Schr\"odinger bridge problems
\begin{equation}
\mathbb{S}_k(\mathrm{d}x_{t_{k}:t_{k+1}}) = \argmin_{\mathbb{H} \in \mathcal{P}_{t_{k},t_{k+1}}(\pi_{t_{k},t_{k+1}})} \mathrm{KL}(\mathbb{H} | \mathbb{Q}_{t_{k}:t_{k+1}}), \quad {k \in [1:K-1]},
\end{equation}
where for each $0\leq s <u  \leq T$, we define $\mathbb{Q}_{s:u} = \pi_{s}(\mathrm{d}x_s) \prod_{t = s+1}^u M_t(x_{t-1},\mathrm{d}x_t).$ As shown in Section \ref{sec:sb}, we know that for each $k\in[1:K-1]$ we can write
\begin{equation}
\mathbb{S}_k(\mathrm{d}x_{t_{k}:t_{k+1}}) = \pi_{t_k}(\mathrm{d}x_{t_k}) \prod_{t = t_k +1}^{t_{k+1}}M^{\psi^\star_k}_t(x_{t-1},\mathrm{d}x_t) 
\end{equation}
where $\psi_k^\star = \{\psi^\star_{k,t}\}_{t\in[t_k: t_{k+1}]}$ denotes the corresponding harmonic functions.

By the Markov property of the initial path measure $\mathbb{Q}$, solving the set of two-marginal Schr\"odinger bridge problems is equivalent to solving the multi-marginal problem. Their solutions can be related explicitly in the following way. Let $\varphi_{t_1}(x_{t_1}) = \left[\psi^\star_{1,t_1}(x_{t_1})\right]^{-1}$ and $\varphi_{t_K}(x_{t_K}) = \psi^\star_{K-1,t_K}(x_{t_K})$, and for any $k\in[2:K-1]$ let
\begin{equation}
\varphi_{t_k}(x_{t_k}) = \frac{\psi^\star_{k-1,t_k}(x_{t_k})}{\psi^\star_{k,t_k}(x_{t_k})}.
\end{equation}
For $\{\varphi_{t_k}\}_{k\in[1:K]}$ defined this way, we have that
\begin{equation}
\mathbb{Q}(\mathrm{d}x_{0:T})\prod_{k=1}^K\varphi_{t_k}(x_{t_k}) =  \pi_0(\mathrm{d}x_0) \prod_{k=1}^{K-1} \prod_{t = t_k +1}^{t_{k+1}}M^{\psi^\star_k}_t(x_{t-1},x_t)\mathrm{d}x_t.
\end{equation}
By construction of the policies $\{\psi_k^\star\}_{k\in[1:K-1]}$ via the two-marginal problems, the $t_k$-marginal of this path measure is equal to $\pi_{t_k}$. Hence, the potentials $\{\varphi_{t_k}\}_{k\in[1:K]}$ solve the Schr\"odinger equations  \eqref{eq:schrodinger_equations}, and we can write the solution to the multi-marginal problem as
\begin{equation}
\mathbb{S}_\mathcal{K}(\mathrm{d}x_{0:T}) =  \pi_0(\mathrm{d}x_0) \prod_{k=1}^{K-1} \prod_{t = t_k +1}^{t_{k+1}}M^{\psi^\star_k}_t(x_{t-1},x_t)\mathrm{d}x_t.
\end{equation}

The reason for introducing the intermediate problems is that approximating the Schr\"odinger bridge between nearby distributions $\pi_{t_k}$ and $\pi_{t_{k+1}}$ with reference process $\mathbb{Q}_{t_k:t_{k+1}}$ is typically easier than estimating the full Schr\"odinger bridge between $\pi_0$ and $\pi_T$ with reference process $\mathbb{Q}$. In particular, one can expect the Radon--Nikodym derivative estimators \eqref{eqn:RN_estimator} to have smaller variance.
However, applying approximate IPF to find $\mathbb{S}_k$ requires being able to initialize particles from $\pi_{t_k}$. This can be done approximately by initializing particles from $\pi_0$ and propagating them through the approximations of the bridges $\mathbb{S}_1,\dots, \mathbb{S}_{k-1}$, suggesting a sequential approach to estimating the multi-marginal bridge $\mathbb{S}_\mathcal{K}$. 

In Section \ref{sec:ssb_langevin}, we approximate the multi-marginal Schr\"odinger bridge in the case where the reference process $\mathbb{Q}$ is the Euler--Maruyama discretization of the Langevin dynamics introduced in Section \ref{sec:sb_langevin}. In Section \ref{sec:ssb_sampling}, we discuss how to use the sequential estimation approach for sampling, and in Section \ref{sec:adaptive_sequence} develop an adaptive construction of the set $\mathcal{K}$.

\subsection{Sequential Schr\"odinger bridges for discretized Langevin dynamics}\label{sec:ssb_langevin}
We illustrate the sequential Schr\"odinger bridge (SSB) approach in the case where the initial forward kernels correspond to the Euler--Maruyama discretization of the continuous-time Langevin dynamics in \eqref{eq:langevin}. The main contrast with the corresponding two-marginal problem discussed in Section \ref{sec:sb_langevin} is that the backward kernels introduced in \eqref{eq:twisted_backward_kernel_approx} are likely to be more efficient within the SSB methodology. To illustrate this point, we consider the setting where $\mathcal{K} = [0:T]$. The resulting multi-marginal problem can be seen as a discretization of the following control problem: find the control $s\mapsto u_s = \nabla \log \psi_s$ such that the process defined in \eqref{eq:langevin_controlled} with $b_s = \frac{1}{2}\nabla \log \pi_s$
satisfies $Y_s\sim\pi_s$ for every $s\in[0,\tau]$, and $s\mapsto \nabla \log \psi_s$ minimizes the cost function $\mathbb{E}\int_0^\tau \|u_s(Y_s)\|^2 \mathrm{d}s$. This problem arises in different literatures, e.g.~in physics, where any feasible potential is said to yield a \textit{shortcut to adiabaticity} \citep{patra2017shortcuts}. We elaborate on these connections in Section \ref{sec:connections}.

As noted in Section \ref{sec:sb_langevin}, the time-reversed version of the controlled continuous-time process \eqref{eq:langevin_controlled} can be expressed as \eqref{eq:langevin_controlled_reversal}. However, in the setting we consider here, the marginal distribution $\rho_s$ of the forward process is no longer intractable, as it is forced to equal $\pi_s$. Hence, the backward kernels \eqref{eq:twisted_backward_kernel_approx} that arise as the Euler--Maruyama discretization of the time-reversed process are likely to become increasingly efficient as our policy approximation improves, provided the discretization of $[0,\tau]$ is fine enough.

\subsection{Sequential Schr\"odinger bridge sampling} \label{sec:ssb_sampling}
Given the policy $\hat{\psi} = \{\hat{\psi}_k\}_{k\in[1:K-1]}$ produced by the sequential application of approximate IPF, one can proceed to use the proposal $\mathbb{Q}^{\hat{\psi}}$ within importance sampling or SMC on path space. It should be noted that the calculation of the incremental importance weights (and, if applicable, the corresponding resampling step) can be integrated into the forward IPF sweep. Similar to other SMC algorithms, one can also incorporate rejuvenation steps into the SSB algorithm. In particular, for each iteration of IPF targeting $\mathbb{S}_k$ one can move the particles $\{X_{t_k}^n\}_{n\in[1:N]}$ approximating $\pi_{t_k}$ using Markov kernel that is invariant to $\pi_{t_k}$. This serves two purposes: first, these moves can improve the particle approximation of $\pi_{t_k}$, and second,  ``refreshing" the particles can prevent overfitting the estimated policies to the fixed set of samples $\{X_{t_k}^n\}_{n\in[1:N]}$.

In Algorithm \ref{algorithm:ssb_sampler}, we present a version of the SSB sampler without resampling and rejuvenation and for a fixed set $\mathcal{K}$. Adaptive constructions of $\mathcal{K}$ will considered in Section \ref{sec:adaptive}, and versions using resampling and rejuvenation can be useful in practice. The potential benefit of refreshing the particles is illustrated in
Section \ref{sec:lqg_highdim}. The computational complexity of Algorithm \ref{algorithm:ssb_sampler} is $\mathcal{O}(N\sum_{k=1}^{K-1} I_k(t_{k+1}-t_k))$, which simplifies to $\mathcal{O}(NIT)$ in the case where the number of iterations $I_k = I$ for all $k$.
In Section \ref{sec:adaptive_IPF}, we discuss methods for warm starting the IPF algorithm and adapting the number iterations to the difficulty of the policy estimation problems, to further reduce the computational cost. Given the output of Algorithm \ref{algorithm:ssb_sampler}, we can approximate expectation of $\varphi:\mathsf{E}\rightarrow\mathbb{R}$ under the target distribution $\pi$ and its normalizing constant $Z$ using the estimators
\begin{align}
\hat{\pi}^N(\varphi)=\sum_{n=1}^N\varphi(X_{T}^n),\quad \hat{Z}^N = \frac{1}{N}\sum_{n=1}^N w_{0:T}^n.
\end{align}
Although one can expect these estimators to be consistent as $N\rightarrow\infty$ \citep{beskos2016convergence}, we note that $\hat{Z}^N$ is not unbiased as a result of adaptation via IPF. To obtain an unbiased normalizing constant estimator, one can simply re-run importance sampling or SMC using the estimated
policy $\hat{\psi}$.

\begin{algorithm}
\caption{\label{algorithm:ssb_sampler} Sequential Schr\"odinger bridge sampler (without resampling or rejuvenation)}
\textbf{Input:} Initial kernels $\{M_t\}_{t\in[1:T]}$, function classes $\{\mathsf{F}_t\}_{t\in[0:T]}$, number of particles $N\in\mathbb{N}$, set of indices $\{0,T\}\subset \mathcal{K} \subset [0:T]$ with $K = |\mathcal{K}|$, number of iterations $I_{k}\in\mathbb{N}$ for each $k\in[1:K-1]$.
\begin{enumerate}
\item Initialize: for each $n\in[1:N]$, sample $X_0^n \sim \pi_0$.
\item For $k \in[1:K-1]$,
\begin{enumerate}
\item Initialize: Set $\hat{\psi}^{(0)}_{k,t} = 1$ for $t \in [t_k:t_{k+1}]$.
\item Perform $I_k$ iterations of approximate IPF (Algorithm \ref{algorithm:aIPF}) to estimate $\mathbb{S}_k$, using the samples $\{X_{t_k}^n\}_{n\in[1:N]}$ as approximate draws from $\pi_{t_k}$. Output the policy $\hat{\psi}_{k}$.
\item Propagate the samples $\{X_{t_k}^n\}_{n\in[1:N]}$ using $\{M^{\hat{\psi}_{k}}_t\}_{t\in[(t_k+1):t_{k+1}]}$ to produce the approximate draws $\{X_{t_{k+1}}^n\}_{n\in[1:N]}$ from $\pi_{t_{k+1}}$, and compute the weights
$$w_{t_k:t_{k+1}}^n = \frac{\gamma_{t_{k+1}}(X^n_{t_{k+1}})\prod_{t =t_k+1}^{ t_{k+1}}L_{t-1}^{\hat{\psi}_k}(X_t^n,X_{t-1}^n)}{\gamma_{t_k}(X^n_{t_k})\prod_{t =t_k+1}^{ t_{k+1}}M^{\hat{\psi}_k}_t(X_{t-1}^n,X_t^n)}, \quad n \in [1:N].$$
\end{enumerate}
\item Compute $w_{0:T}^n = \prod_{k=1}^{K-1} w_{t_k:t_{k+1}}^n$ for each $n \in [1:N]$.
\end{enumerate}
\textbf{Output:} Trajectories $\{X_{0:T}^n\}_{n\in[1:N]}$ and importance weights $\{w_{0:T}^n\}_{n\in[1:N]}$.
\end{algorithm}

\subsection{Sequential Schr\"odinger bridge sampling with adaptive IPF}\label{sec:adaptive}
In this section, we discuss different methods to reduce the computational cost of the SSB sampler. The first approach aims to reduce the number of IPF iterations by using warm starts and stopping IPF when the policy approximations appear to have converged. The second approach adaptively constructs the set $\mathcal{K}$ by triggering IPF only when the effective sample size (ESS) of the particle system falls below a given threshold.

\subsubsection{IPF with early stopping and warm starts} \label{sec:adaptive_IPF}
Over the course of the IPF iterations, we can monitor changes in the policy approximations and stop when they appear to have converged. Since the policies are estimated using a finite number of random particles, the measure by which we evaluate convergence needs to be able to account for the resulting noise in the approximations.  Hence, we perform hypothesis tests to check whether the estimated policies have reached their stationary distribution under the approximate IPF scheme.

In practice, we use parametric function classes for policy approximations, which means that monitoring the convergence of the policies can  be reduced to tracking the evolution of the corresponding parameters. In the numerical experiments of Sections \ref{sec:ssb_lqg} and \ref{sec:numerical_experiments},  we consider, at IPF iteration $i$, a window of the differences between the parameters at iterations $j$ and $j-1$ for the last $J$ iterations of IPF (i.e.~$j = i - J, \dots, j = i$). For each parameter, we perform a $t$-test of whether the mean of these differences are equal to zero. The IPF iterations are stopped when none of the tests are significant, controlling for false discoveries using e.g.~the Benjamini-Hochberg procedure \citep{benjamini1995controlling}, or when a prescribed number of iterations is reached. If the IPF iterations are stopped early, we can reduce the variance in the policy approximations by averaging each parameter over the window for which the test was performed.

In the important special case where $\mathcal{K} = [0:T]$, in which we approximate the solution of the two-marginal Schr\"odinger bridge between $\pi_{t-1}$ and $\pi_t$ for each $t\in[1:T]$, we can also warm start the IPF iterations. Since in the analogous continuous-time problem we expect the curve of policies $(\psi^\star_s)_{s\in[0,\tau]}$ to be smooth, we can also expect that an extrapolation of the approximations of $\{\psi^\star_r\}_{r\in[1:t]}$ could provide a good starting point for the approximation of $\psi^\star_{t+1}$, provided the discretization of $[0,\tau]$ is fine enough. In practice, we often use an expression of the form $\hat{\psi}_t + (\hat{\psi}_t - \hat{\psi}_{t-1})$, or simply $\hat{\psi}_t$ itself. In Section \ref{sec:ssb_lqg}, we illustrate the benefits of warm starts, and how the combination of warm starts and early stopping can yield large reductions in computational cost without increasing errors.

\subsubsection{Adaptive construction of $\mathcal{K}$} \label{sec:adaptive_sequence}
Instead of a priori defining the set $\mathcal{K}$, we can sequentially add time indices $t_{k+1}$ to it by triggering IPF steps only when the effective sample size (ESS) of the particle system falls below a given threshold $0 \leq e_k \leq N$. That is, in the version of the SSB sampler without resampling and starting from time step $t_k$, we propagate the particles using the initial Markov
kernels $\{M_t\}$ and define $t_{k+1}$ to be the first $t>t_k$ for which $\left(\sum_{n=1}^N (W_{1:t}^n)^2\right)^{-1} < e_k$, where
$$W_{1:t}^n = \frac{w_{1:t}^n}{\sum_{n=1}^N w_{1:t}^n}, \quad w_{1:t}^n = \prod_{s=1}^t w_s^n.$$
In the version with resampling after the final step of IPF, $W_{1:t}^n$ and  $w_{1:t}^n$ in the above expressions should be substituted with $W_{t_k:t}^n$ and  $w_{t_k:t}^n$.
In turn, we can perform IPF iterations approximating the bridge over $[t_k:t_{k+1}]$ until the ESS of the associated particle system increases to a prescribed threshold or until the ESS appears to stop increasing, as an alternative to the early-stopping approach described in the preceding section.

\subsection{Example: linear quadratic Gaussian (continued)} \label{sec:ssb_lqg}
Recall the linear quadratic Gaussian setting described in Section \ref{sec:lqg}. Here, we estimate the solution to the corresponding multi-marginal Schr\"odinger bridge problem with discretized Langevin dynamics reference process and $\mathcal{K} = [0:T]$. In Figures \ref{fig:langevin_ssb_IPF_exact} and \ref{fig:langevin_ssb_IPF_approx_maxIPF100_reps100_n1000_adaptFALSE_warmstartnone} we plot $\log \was_2(\pi_t, q_t^{(I)})$ for $t \in[1:T]$ and various values of $I$ between $0$ and $100$, calculated using exact and approximate IPF with $M=0$ iterations of conditional SMC. As for the two-marginal Schr\"odinger bridge problem, the marginals constructed in the exact IPF iterations appear to converge exponentially fast, albeit at a slower rate. On the other hand, the approximate scheme appears to behave more similarly to the exact algorithm than in the two-marginal problem. This is likely because our construction of the backward kernels are better adapted to the multi-marginal problem.

Figure \ref{fig:langevin_ssb_IPF_approx_adapt_maxIPF100_reps100_n1000} plots $\log \was_2(\pi_t, q_t^{(I)})$ for marginals $q_t^{(I)}$ obtained with and without the warm starts and early stopping schemes discussed in Section \ref{sec:adaptive_IPF}, with the non-adaptive algorithm using $I = 100$ IPF iterations, and the adaptive one performing at least 3 iterations for each value of $t\in[1:T]$ with a maximum of $I=100$. To monitor the convergence of the policies, we use a window of size $J = \min\{15,i\}$ at iteration $i \geq 3$. The plot illustrates the benefit of warm starts, and that the combination of warm starts and early stopping appears to yield at least as good approximations as the non-adaptive scheme. The reduction in computational cost is dramatic, and the number of IPF iterations performed at each $t\in[1:T]$ using the early stopping criterion is illustrated in Figure \ref{fig:langevin_ssb_IPFnumber_adapt_maxIPF100_reps100_n1000}. In terms of wall-clock time, the non-adaptive algorithm took on average $8.68 s$, whereas the adaptive one took $0.906 s$ on an Intel Core i5 (2.5 GHz).

To assess the SSB algorithm from the sampling perspective, we compute the estimates $\{\hat{Z}_t\}_{t\in[1:T]}$ of the normalizing constants $\{Z_t\}_{t\in[1:T]}$ corresponding to $\{\pi_t\}_{t\in[0:T]}$. In Figure \ref{fig:lqg_langevin_ssb_normconst}, we compare an SMC sampler with the same discretized Langevin dynamics Markov kernels against the SSB sampler using warm starts and early stopping.
In Figure \ref{fig:langevin_ssb_RMSE_lognormconst_adapt_maxIPF100_reps100_n1000.pdf}, we plot the root mean squared error (RMSE) of $\log \hat{Z}_t$, for the two different methods.
At time $t=T$, the RMSE of the estimator obtained with standard SMC was $86$ times higher than the estimator obtained with the SSB sampler, but took only $0.122s$ to compute on average. In other words, the RMSE of $\log \hat{Z}_T$ was reduced by a factor of $86$ at the cost of $7.4$ times greater running time. In comparison, increasing the number of particles (and hence running time) of the standard SMC algorithm by a factor of $7.4$ is only expected to reduce the RMSE by a factor of $\sqrt{7.4}$.

\begin{figure}[hp]
\centering
        \begin{subfigure}[t]{0.4\textwidth}
            \centering
            \includegraphics[width=\textwidth]{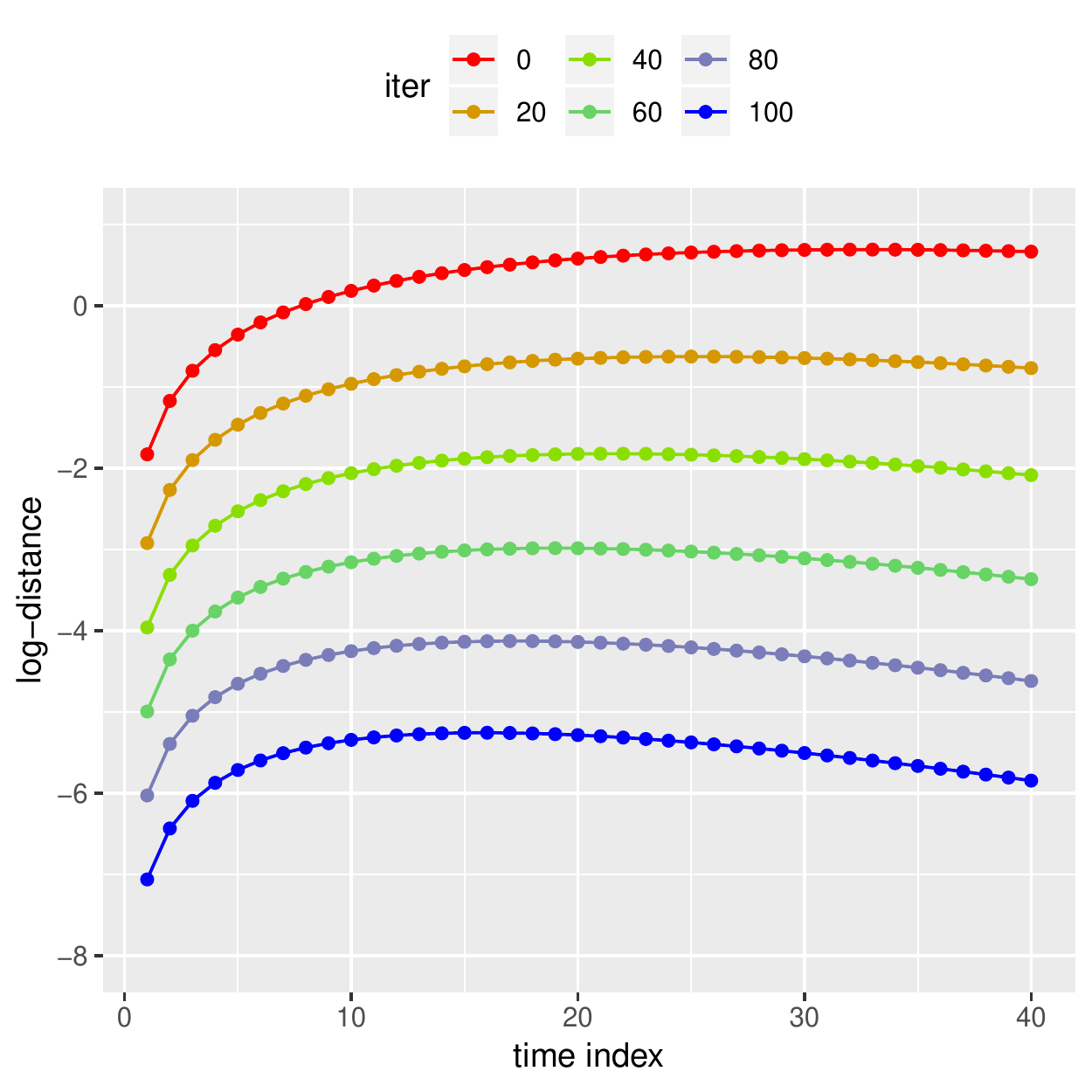}
            \caption{{\small Exact IPF.}}
            \label{fig:langevin_ssb_IPF_exact}
        \end{subfigure}
        \hspace*{1cm}
        \begin{subfigure}[t]{0.4\textwidth}
            \centering
            \includegraphics[width=\textwidth]{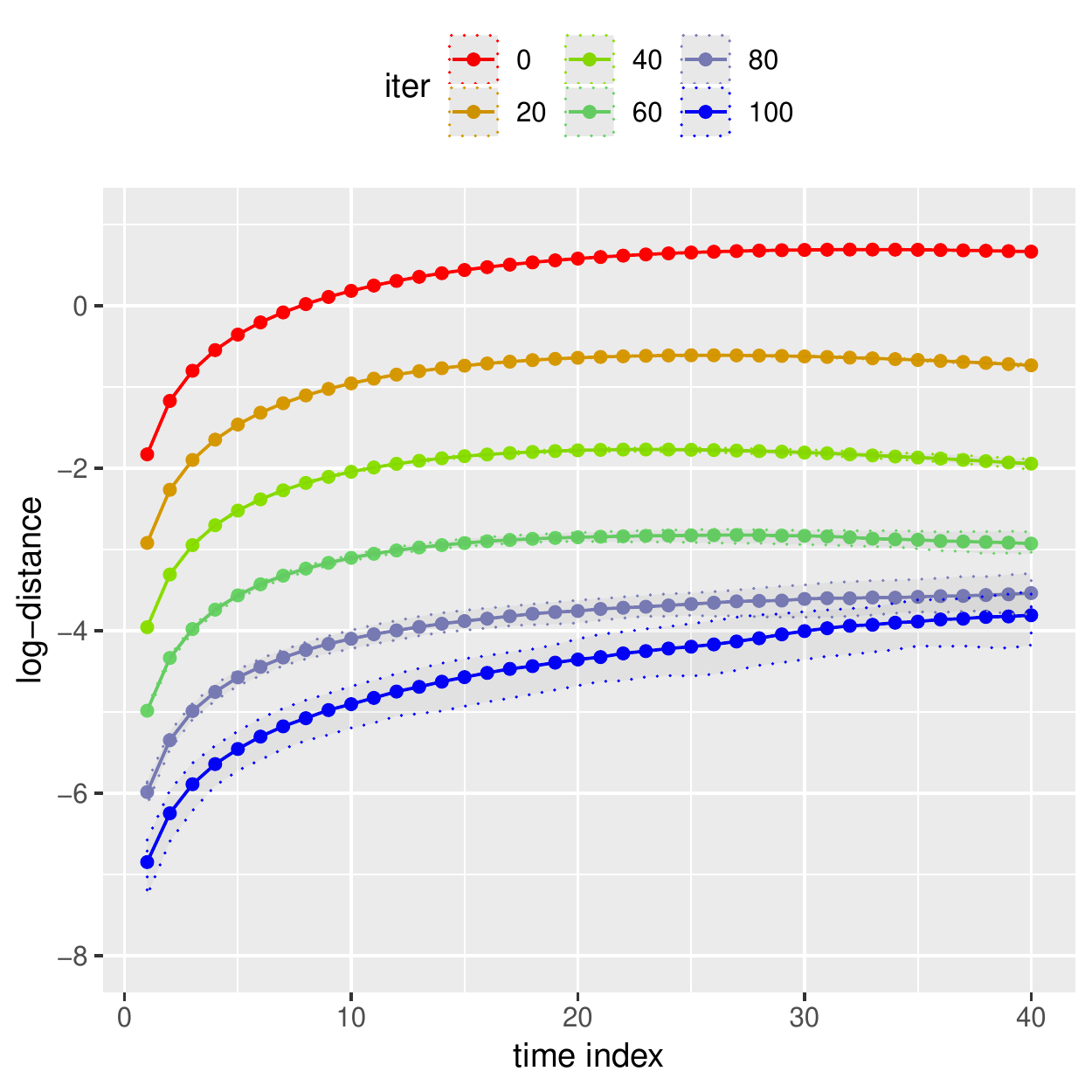}
            \caption{{\small Approximate IPF.}}
            \label{fig:langevin_ssb_IPF_approx_maxIPF100_reps100_n1000_adaptFALSE_warmstartnone}
        \end{subfigure}

         \begin{subfigure}[t]{0.4\textwidth}
            \centering
            \includegraphics[width=\textwidth]{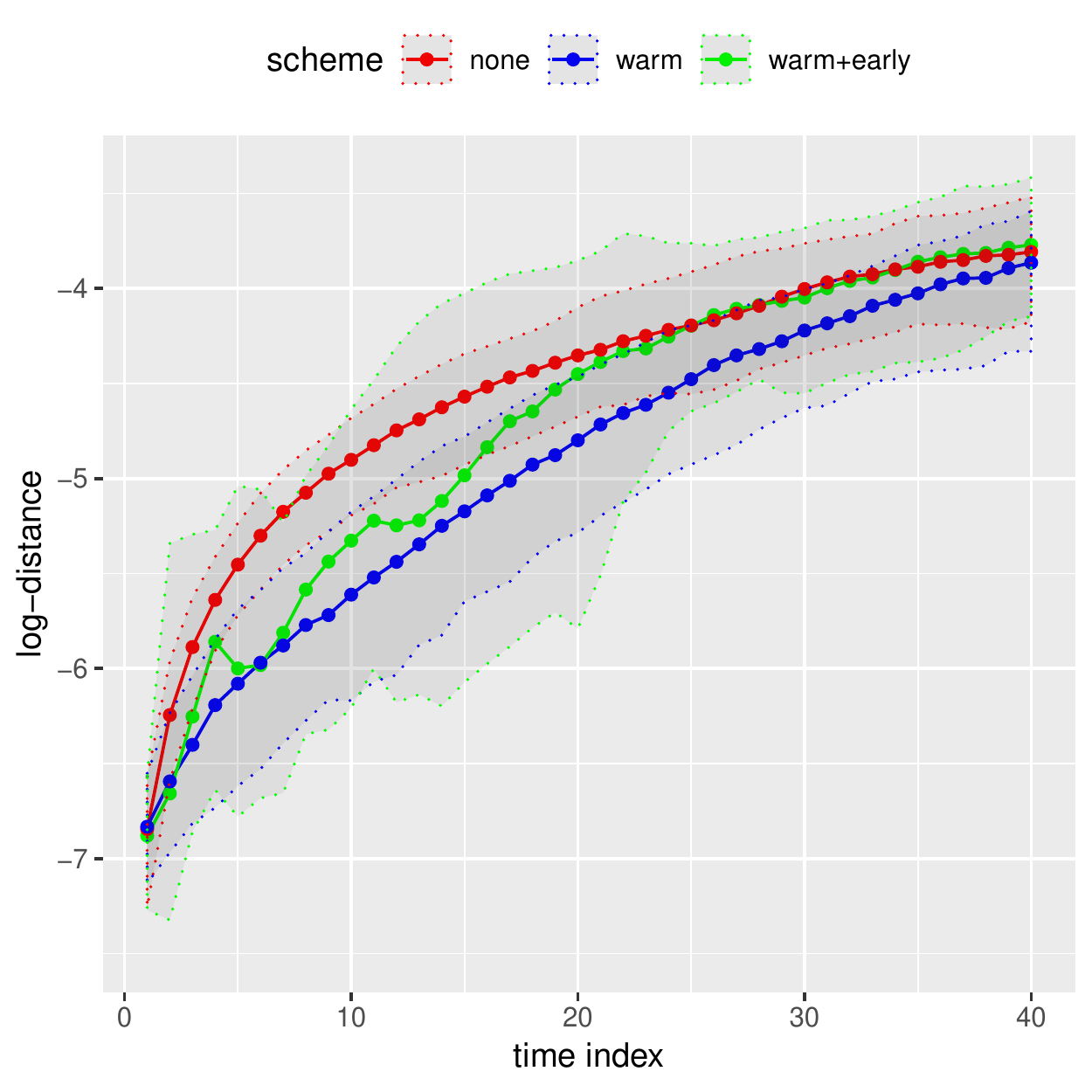}
            \caption{{\small Comparison between approximate IPF schemes with warm starts and early stopping.}}
            \label{fig:langevin_ssb_IPF_approx_adapt_maxIPF100_reps100_n1000}
        \end{subfigure}
        \hspace*{1cm}
        \begin{subfigure}[t]{0.4\textwidth}
            \centering
            \includegraphics[width=\textwidth]{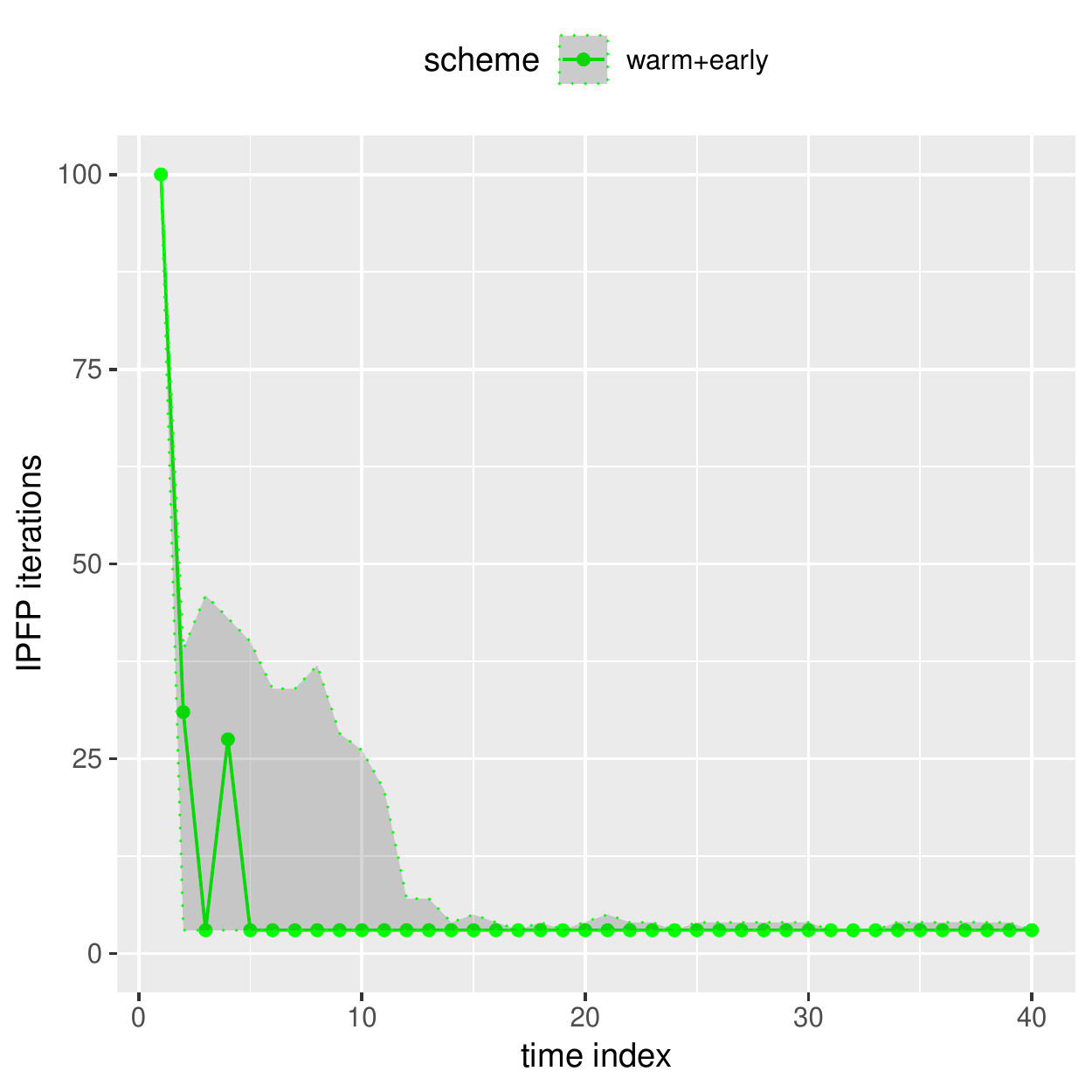}
            \caption{{\small Number of iterations in approximate IPF scheme using warm starts and early stopping.}}
            \label{fig:langevin_ssb_IPFnumber_adapt_maxIPF100_reps100_n1000}
        \end{subfigure}

        \caption{ {\small Distances between $\pi_t$ and marginals of the IPF-based approximations $q_t^{(i)}$ of the multi-marginal Schr\"odinger bridge, measured as $\log \was_2(\pi_t,q_t^{(i)})$, for the LQG setting of Section \ref{sec:ssb_lqg} with discretized Langevin diffusion reference dynamics. Figure \ref{fig:langevin_ssb_IPF_exact} correspond to the exact computation of the IPF iterates for different values of $i$ between $0$ and $100$. Figure \ref{fig:langevin_ssb_IPF_approx_maxIPF100_reps100_n1000_adaptFALSE_warmstartnone} corresponds to the proposed particle-based approximation using $N =1,000$ particles and $M=0$ iterations of CSMC. Figure \ref{fig:langevin_ssb_IPF_approx_adapt_maxIPF100_reps100_n1000} illustrates the differences between the approximations obtained with and without the warm start and early stopping schemes discussed in Section \ref{sec:adaptive_IPF}. Figure \ref{fig:langevin_ssb_IPFnumber_adapt_maxIPF100_reps100_n1000} shows the number of IPF iterations performed in the early stopping scheme at each time $t\in[1:T]$, with a lower bound set to 3 iterations.
        In all the plots, solid lines correspond to median values calculated over $100$ independent simulations, and the corresponding confidence bands represent the $5\%$ and $95\%$ quantiles.}
        \label{fig:ssb_lqg_langevin_IPF}}
\end{figure}

\begin{figure}[hp]
\centering
        \begin{subfigure}[t]{0.4\textwidth}
            \centering
            \includegraphics[width=\textwidth]{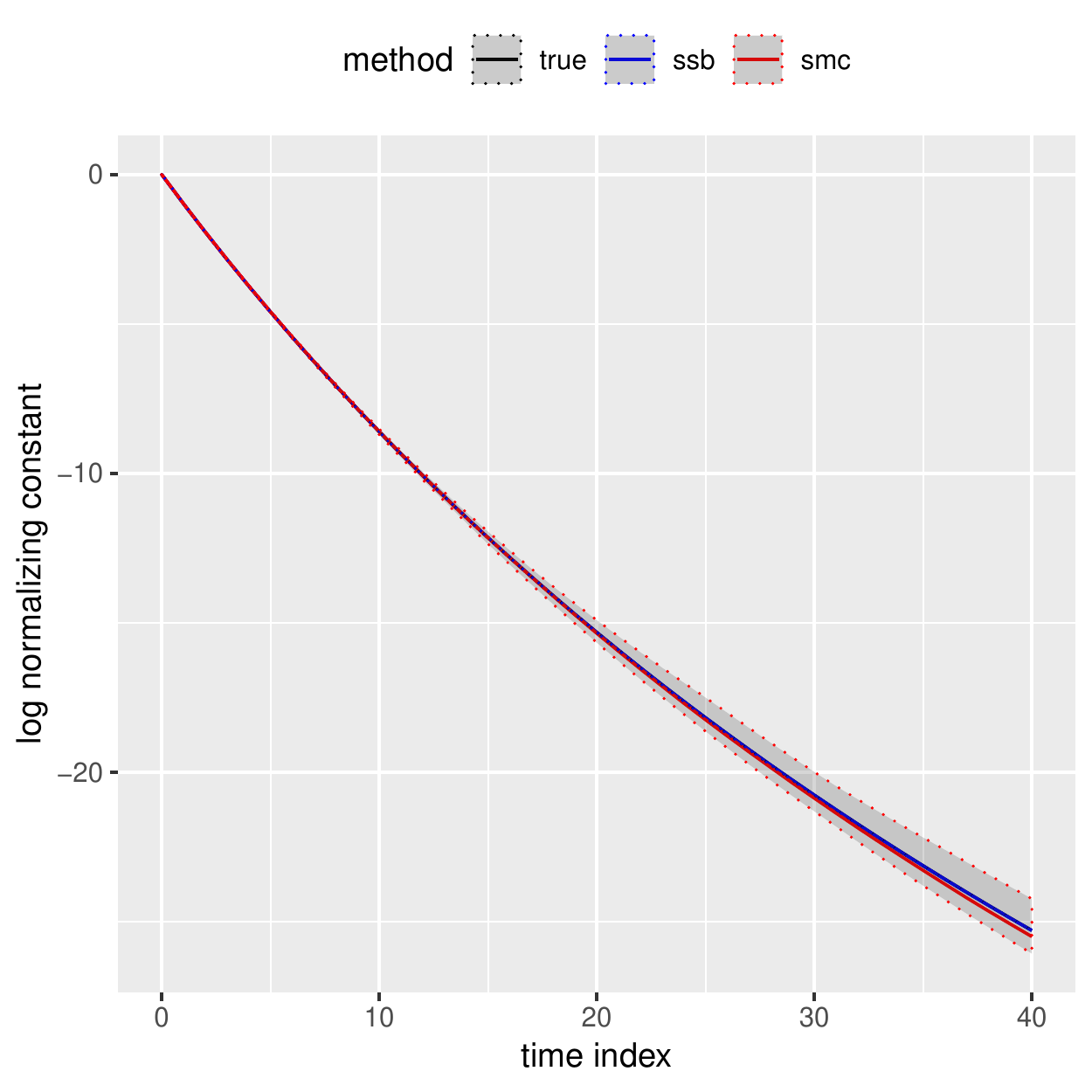}
            \caption{{\small Values of  $\log \hat{Z}_t$.}}
            \label{fig:langevin_ssb_normconst_adapt_maxIPF100_reps100_n1000}
        \end{subfigure}
        \hspace*{1cm}
        \begin{subfigure}[t]{0.4\textwidth}
            \centering
            \includegraphics[width=\textwidth]{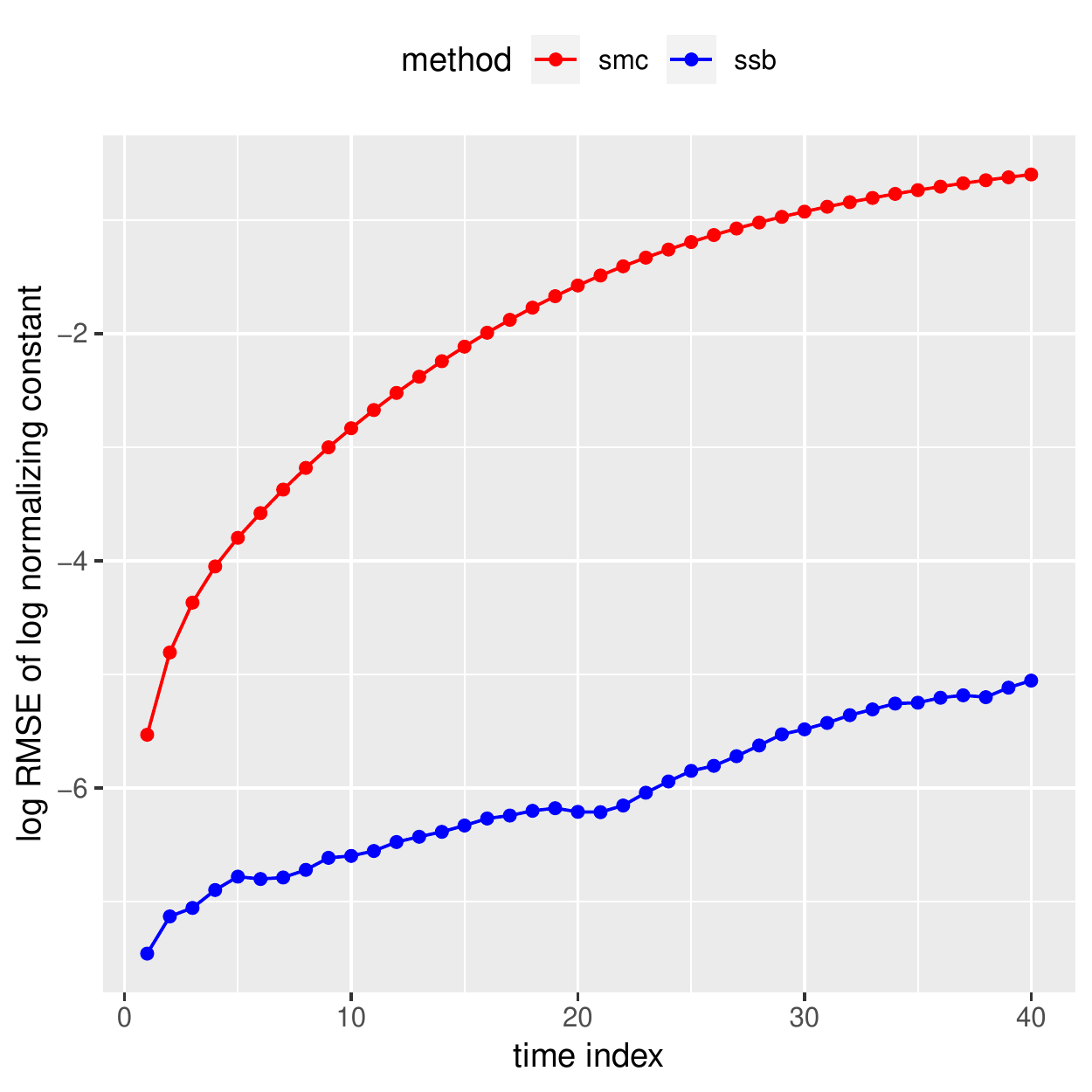}
            \caption{{\small Log RMSE of $\log \hat{Z}_t$.}}
            \label{fig:langevin_ssb_RMSE_lognormconst_adapt_maxIPF100_reps100_n1000.pdf}
        \end{subfigure}
  \caption{ {\small On the left, we plot the logarithm of the estimates $\{\hat{Z}_t\}_{t\in[1:T]}$  of the normalizing constants $\{Z_t\}_{t\in[1:T]}$ corresponding to $\{\pi_t\}_{t\in[1:T]}$, based on an SMC sampler using Markov kernels derived from discretizing Langevin diffusion (in red) and the SSB sampler using the same discretized Langevin Markov kernels as the reference process (in blue). The solid lines correspond to median values calculated over $100$ independent simulations using $N =1,000$ particles and $M=0$ iterations of CSMC, and the corresponding confidence bands represent the $5\%$ and $95\%$ quantiles. Note that the true log normalizing constants (in black) are obscured by the blue line, and that the blue confidence bands are too narrow to be visible. To further illustrate the behavior of the different estimators, we compute the root mean squared error of $\log \hat{Z}_t$, the log of which is shown on the right.}   \label{fig:lqg_langevin_ssb_normconst}}
\end{figure}

\section{Connections with other problems} \label{sec:connections}
In Sections \ref{sec:sb} and \ref{sec:ssb}, we have discussed the two- and multi-marginal Schr\"odinger bridge problems in their minimum KL and stochastic control formulations. Here, we elaborate on a few other perspectives that arise in different literatures. In particular, we review the two-marginal Schr\"odinger bridge problem as a regularization of an optimal transport problem, and discuss how our method might be used to approximate the 2-Wasserstein distance between $\pi_0$ and $\pi_T$. Similarly, by analogy with the continuous-time formulation, the multi-marginal Schr\"odinger bridge can be viewed as an approximation of a solution of the \textit{flow transport problem}. In the physics literature, both the two- and multi-marginal problems with Langevin diffusion reference dynamics can be said to yield \textit{shortcuts to adiabaticity}. Recently, \citet{chen2019multi} also develop connections between the multi-marginal problem and measure-valued splines. We also briefly discuss Schr\"odinger bridge-based particle filtering and some difficulties associated with applying the methodology in this setting.

\subsection{Optimal transport}
The first formal connections between the Schr\"odinger bridge problem and optimal transport were developed by \citet{mikami2004monge}, who considered the control problem \eqref{eq:langevin_controlled} for reference dynamics of the form $\mathrm{d}X_s = \sigma\mathrm{d}W_s$ on $[0,1]$. If $S_{0,1}^\sigma$ denotes the joint distribution of the optimally controlled process at times $s=0$ and $s=1$, he showed that as $\sigma \to 0$, $S_{0,1}^\sigma$ converges to a minimizer of the following Monge-Kantorovich optimal transport problem:
\begin{equation}
\was_2^2(\pi_0,\pi_T) =  \min_{h_{0,T} \in \mathcal{C}(\pi_0,\pi_T)} \int_{\mathsf{E}\times\mathsf{E}}  \| x_0- x_T\|^2 h_{0,T}(\mathrm{d}x_0,\mathrm{d}x_T),
\end{equation}
where the notation $\was_2(\pi_0,\pi_T)$ reflects that this defines the 2-Wasserstein distance. Additionally, when the reference path measure $Q^\sigma$ is given by the $\sigma$-scaled reversible Brownian motion, it is shown that $\sigma^2\mathrm{KL}(S^\sigma_{0,1} | Q^\sigma_{0,1}) \to \was_2^2(\pi_0,\pi_T)$.
For more general Markov reference processes, the objective function of the Schr\"odinger bridge problem would converge to an optimal transport objective, with a cost function $c(x_0,x_T)$, defined by the rate function of an associated large deviations principle result \citep{leonard2012schrodinger}.

A useful perspective for making these connections is a formulation of the Schr\"odinger bridge problem developed by \citet{leonard2014survey}, derived by considering the Fokker-Planck equation associated with \eqref{eq:langevin_controlled}; see also \citet{chen2017optimal}. In particular, the problem studied by \citet{mikami2004monge} can be expressed as
\begin{align}
& \min_{\rho, \psi} \int_0^\tau\int_\mathsf{E} \|\nabla \log \psi_s(x)\|^2\rho_s(\mathrm{d}x)\mathrm{d}s, \label{eq:benamou_objective}\\
& \frac{\mathrm{d}\rho_s}{\mathrm{d}s} = -\text{div}(\rho_s \nabla \log \psi_s ) + \frac{\sigma^2}{2}\Delta \rho_s, \label{eq:benamou_FP}\\
& \rho_0(\mathrm{d}x) = \pi_0(\mathrm{d}x), \quad \rho_\tau(\mathrm{d}x) = \pi_T(\mathrm{d}x). \label{eq:benamou_constraints}
\end{align}
When $\sigma = 0$, in which \eqref{eq:benamou_FP} becomes the \textit{continuity equation} \citep[][p.169]{ambrosio2005}, the above problem reduces to the Benamou-Brenier fluid mechanics formulation of the quadratic optimal transport problem \citep{benamou2000computational}.

Recently, these connections have been utilized to create fast approximate solvers of optimal transport problems with general cost functions. In particular, we have seen that the original Schr\"odinger bridge problem \eqref{eq:sb} can be reduced to the static problem \eqref{eq:static_sb}.
If we can write $q^{\sigma}_{0,T}(\mathrm{d}x_0,\mathrm{d}x_T) = Z_{\sigma}^{-1}\exp\{-c(x_0,x_T)/\sigma^2\}\lambda(\mathrm{d}x_0,\mathrm{d}x_T)$ for some cost function $c: \mathsf{E}\times\mathsf{E} \to \mathbb{R}_+$, a $\sigma$-finite measure $\lambda$ on $\mathsf{E}\times\mathsf{E}$ and normalizing constant $Z_{\sigma}$, then\footnote{See e.g.~\citet{leonard2014survey} for a discussion of the definition of the KL divergence with respect to an unbounded measure.}
\begin{align}
\mathrm{KL}(h_{0,T} | q^\sigma_{0,T})  &= \mathrm{KL}(h_{0,T} | \lambda ) + \frac{1}{\sigma^2} \int_{\mathsf{E}\times\mathsf{E}} c(x_0,x_T) h_{0,T}(\mathrm{d}x_0,\mathrm{d}x_T) + \log Z_\sigma.
\end{align}
The corresponding Schr\"odinger bridge can then be written as
\begin{equation} \label{eq:entropic_regularization}
s^\sigma_{0,T}(\mathrm{d}x_{0},\mathrm{d}x_{T}) = \argmin_{h_{0,T} \in \mathcal{C}(\pi_0,\pi_T)}   \int_{\mathsf{E}\times\mathsf{E}} c(x_0,x_T) h_{0,T}(\mathrm{d}x_0,\mathrm{d}x_T)
 +{\sigma^2} \mathrm{KL}(h_{0,T} | \lambda).
\end{equation}
If $\lambda$ is equal to the Lebesgue measure and $c(x,y) = \|x-y\|^2$, the above setting reduces to the one considered by \citet{mikami2004monge}. When $\lambda$ is taken to be $\pi_0\otimes\pi_T$, the minimization in \eqref{eq:entropic_regularization}  is often called the entropically regularized optimal transport problem, and its objective function evaluated at the minimizer $s^\sigma_{0,T}$ called the Sinkhorn divergence \citep{cuturi2013sinkhorn, peyre2018computational}.

The output of Algorithm \ref{algorithm:aIPF} applied to the Schr\"odinger bridge problem with discretized Brownian reference dynamics can be used to approximate the quadratic optimal transport cost in several ways. The simplest is the estimator $N^{-1}\sum_{n=1}^N\|X^n_0 - X_T^n\|^2$, where $\{X_{0:T}^n\}_{n\in[1:N]}$ are the particle trajectories obtained in the last IPF iteration. Since the obtained coupling will in general be sub-optimal for the transport problem, this estimator provides upper bound of $\was_2^2(\pi_0,\pi_T)$ for any non-zero $\sigma$ (up to noise and approximation errors). For the 100 independent simulations performed in the LQG example of Section \ref{sec:lqg} with $\sigma = 1$ and $M = 10$, the average estimated value of $\was_2(\pi_0,\pi_T)$ was $4.27$ with a standard deviation of $0.033$, whereas the exact distance is equal to $4.09$. The discrepancy stems from the large value of $\sigma$.

Alternatively, one can use the approximated Schr\"odinger potentials together with the identity $\mathrm{KL}(s_{0,T} | q_{0,T}) = \int_{\mathsf{E}\times\mathsf{E}} \left\{ \log \varphi^\circ(x_0) + \log \varphi^\star(x_T) \right\} s_{0,T}(\mathrm{d}x_{0},\mathrm{d}x_{T})$ to construct a particle-based estimate. Thirdly, by analogy with the continuous-time problem, one can approximate \eqref{eq:benamou_objective} using the estimated policies and the associated particle trajectories.

\subsection{Flow transport}
Recall the control problem stated in Section \ref{sec:ssb_langevin}, in which we want to find $s\mapsto u_s = \nabla \log \psi_s$ such that the process defined in \eqref{eq:langevin_controlled} with $b_s = \frac{1}{2}\nabla \log \pi_s$ satisfies $Y_s\sim\pi_s$ for every $s\in[0,\tau]$, and $s\mapsto u_s$ minimizes the cost function $\mathbb{E}\int_0^\tau \|u_s(Y_s)\|^2 \mathrm{d}s$. Re-expressing this problem in the language of \eqref{eq:benamou_objective}-\eqref{eq:benamou_constraints}, we can write
\begin{align}
& \min_{\psi} \int_0^\tau \int_\mathsf{E}\|\nabla \log \psi_s(x)\|^2\pi_s(\mathrm{d}x)\mathrm{d}s, \label{eq:flow_objective}\\
& \frac{\mathrm{d}\pi_s}{\mathrm{d}s} = -\text{div}\left(\pi_s \left\{\nabla \log \psi_s + \frac{1}{2}\nabla \log \pi_s\right\}\right) + \frac{1}{2}\Delta \pi_s. \label{eq:flow_FP}
\end{align}
Under invariance of the Langevin dynamics, the Fokker-Planck equation \eqref{eq:flow_FP} reduces to the continuity equation
\begin{equation}\label{eq:continuity_equation}
\frac{\mathrm{d}\pi_s}{\mathrm{d}s} = -\text{div}\left(\pi_s \nabla \log \psi_s \right).
\end{equation}
For any $s \mapsto \nabla \log \psi_s$ that solves \eqref{eq:continuity_equation}, we have that if $\mathrm{d}x_s/\mathrm{d}s = \nabla \log\psi_s(x_s)$ subject to $x_0\sim \pi_0$, then $x_s\sim \pi_s$ for any $s\in[0,\tau]$. Finding such a policy is often called the \textit{flow transport problem} \citep{heng2015gibbs}. Simultaneously solving \eqref{eq:flow_objective} yields the \textit{minimum kinetic energy} solution among all solutions of the flow transport problem, and has been considered by e.g. \citet{reich2011dynamical,reich2012gaussian}.

In the linear quadratic Gaussian case, the minimal kinetic energy solution of the associated continuous time flow transport problem is known exactly \citep{bergemann2012ensemble}. The optimal policy is given by
$$s\mapsto u_s^\star(x_s) = \nabla \log \psi^\star_s(x_s) = -\frac{1}{2\tau}\Sigma_s R^{-1}(x_s + \mu_s - 2y),$$
where $\Sigma_s$ and $\mu_s$ are defined as in the discrete setting of Section \ref{sec:lqg}, but with  $\lambda_s = s/\tau$. As illustrated in Section \ref{sec:ssb_lqg}, the sequential Schr\"odinger bridge algorithm yields policies that in turn induce marginal distributions $q_t^{(I)}$ that are very close to $\pi_t$. Here, we can also numerically approximate the optimal cost and compare it to the cost estimates produced by the SSB sampling algorithm.

Using the same discretization of $[0,\tau]$ as in Section \ref{sec:ssb_lqg}, we numerically solve $\mathrm{d}x_s/\mathrm{d}s = \nabla \log\psi^\star_s(x_s)$ for $100,000$ particles initialized from $\pi_0$. These trajectories were then used to approximate the cost $\mathbb{E}\int_0^\tau \|\nabla \log \psi^\star_s(Y_s)\|^2 \mathrm{d}s$,  which was estimated to be $7.69$. Over the 100 independent runs of the SSB algorithm with $N = 1,000$ particles, the associated cost was estimated to be $8.50$ with a standard deviation of $0.036$. In other words, the policies approximated with the SSB algorithm yield intermediate distributions that are close to the targets, as shown in Section \ref{sec:ssb_lqg}, but appear to have not completely converged to optimality in terms of cost.

\subsection{Shortcuts to adiabaticity}
In the thermodynamics literature, it is well known that it takes an infinitely long time to transition between equilibrium states of a system that stays in equilibrium with the thermal reservoir. This is the physical intuition for why we require $\tau \to \infty$ to satisfy $X_s \sim \pi_s$ for all $s$ in \eqref{eq:langevin_reference} with $b_s = \frac{1}{2}\nabla \log \pi_s$. An interesting question is whether such transitions can be realized in a finite time $\tau$ by allowing the system to not be in equilibrium with the thermal reservoir. A process that achieves such a transition is said to be a \textit{shortcut to adiabaticity}; see e.g. \citet{betancourt2014adiabatic} and \citet{patra2017shortcuts}.

Solutions of both the two-marginal Schr\"odinger bridge problem with Langevin dynamics reference and the flow transport problem yield such shortcuts. Maps $s\mapsto -\log \psi_s$ that are feasible for the flow transport problem are often called \textit{counterdiabatic} potentials, as the system follows the adiabatic evolution $s\mapsto \pi_s$. On the other hand, maps that solve the two-marginal Schr\"odinger bridge problem are often called \textit{fast-forward} potentials, as they allow the intermediate distributions to deviate from $\pi_s$, but return to the adiabatic evolution as $s\to\tau$. In addition to the stochastic setting considered here, there has been growing interest in defining protocols that achieve shortcuts to adiabaticity in both quantum and classical Hamiltonian systems; see e.g. \citet{sels2017minimizing} and the recent survey of \citet{del2019focus}. We hope that the methods developed in this paper can potentially be useful in these fields.

\subsection{Particle filtering}\label{sec:pf}
Consider a hidden Markov chain $\{X_t\}_{t\in[0:T]}$ with distribution
$$\mathbb{Q}(\mathrm{d}x_{0:T}) = \pi_0(\mathrm{d}x_0)\prod_{t=1}^T f_t(x_{t-1},\mathrm{d}x_t),$$
and a sequence of observations $\{Y_{t}\}_{t\in[1:T]}$ assumed to be conditionally independent given $\{X_t\}_{t\in[1:T]}$, and distributed with densities $g_t(X_t,\cdot)$. The goal of particle filtering is to develop online approximations of the sequence of filtering distributions $\pi_t$, defined as the marginal laws of the hidden states $X_t$ given a realization of the observations $y_{1:t}$. The filtering distribution at time $t$ satisfies the recursion
\begin{equation}
\pi_{t}(\mathrm{d}x_t) \propto g_t(x_t,y_t)\int_\mathsf{E} f_t(x_{t-1},\mathrm{d}x_t) \pi_{t-1}(\mathrm{d}x_{t-1}).
\end{equation}

We could envision solving the multi-marginal Schr\"odinger bridge problem associated with the reference measure $\mathbb{Q}$ and marginal constraints $\{\pi_t\}_{t\in[0:T]}$ using the sequential algorithm developed in Section \ref{sec:ssb}. Note that resulting multi-marginal Schr\"odinger bridge would be different from the smoothing distribution, i.e. the law of $X_{0:T}$ given $y_{1:T}$, which is the target of controlled SMC \citep{heng2017controlled} and similar methods by \citet{richard2007efficient}, \citet{scharth2016particle} and \citet{guarniero2017iterated}.
Such an approach would require estimates of pointwise evaluations of Radon--Nikodym derivatives of the form $\mathrm{d}\pi_t/\mathrm{d}q_t^\psi$. This estimation is harder in the filtering setting than in the SMC sampling setting considered earlier, in part due to the intractability of pointwise evaluations of $\pi_t$ and any of its unnormalized counterparts.

If the transition kernels $\{f_t\}_{t\in[1:T]}$ admit densities that can be evaluated, instead of relying on exact evaluations of (unnormalized) filtering densities, we can use sample-based estimates, at the cost of $N^2$ density evaluations per iteration of IPF. Assuming $\{X_{t-1}^{n}\}_{n\in[1:N]}$ are distributed according to $\pi_{t-1}$ and that $X_{t}^{n} \sim f_t^{\psi}(X_{t-1}^{n},\cdot)$ for each $n\in[1:N]$ and some policy $\psi$,
an estimate of $\mathrm{d}\pi_t/\mathrm{d}q_t^\psi$ evaluated at $X_t^n$ is given by
\begin{equation}
\frac{g_t(X_t^n,y_t)\sum_{k = 1}^N f_t(X_{t-1}^k, X_t^n)}{\sum_{k = 1}^N f^{\psi}_t(X_{t-1}^k,X_t^n)}.
\end{equation}

We illustrate this approach on a simple linear quadratic Gaussian model, in which the hidden states $\{X_t\}_{t\in[0:T]}$ arise as a discretization of $\mathrm{d}X_s = AX_s \mathrm{d}s + \mathrm{d}W_s$ over the interval $[0,\tau]$, and where $W_s$ denotes a standard Brownian motion and $\pi_0(\mathrm{d}x_0) = \mathcal{N}(x_0; 0,\mathcal{I}_d)\mathrm{d}x_{0}$. In other words, $f_t(x_{t-1},\mathrm{d}x_t) = \mathcal{N}(x_t; x_{t-1} + hAx_{t-1}, h \mathcal{I}_d)\mathrm{d}x_t$ for a step size $h >0$ such that $hT = \tau$. The matrix $A\in\mathbb{R}^{d\times d}$ is defined by $A_{ij} = A_{ji} = \alpha^{|i-j|+1}$ for $i,j\in[1:d]$. The observation densities are given by $g_t(x_t, \mathrm{d}y_t) = \mathcal{N}(y_t; x_t, \sigma^2 \mathcal{I}_d)\mathrm{d}y_t$. In this setting, the filtering distributions can be calculated exactly using a Kalman filter, to which we compare the Schr\"odinger bridge particle filter discussed above. We also make comparisons with the classical bootstrap particle filter \citep{gordon:salmon:smith:1993}.

In Figure \ref{fig:LQG_sbpf_bpf} we illustrate numerical results in the setting where $d = 2$, $\alpha = 0.1$, $\sigma = 0.1$,  $\tau = 2$ and $T = 80$. In Figure \ref{fig:LQG_sb_kf_sigmaobs0.1_T80_h0.025_IPF_exact}, we use the exact expressions for $\{\pi_t\}_{t\in[1:T]}$ derived with a Kalman filter to perform exact IPF, and plot $\log \was_2(\pi_t,q_t^{(i)})$ for $t \in [1:T]$ and $i \in [1:5]$. For the particle based methods, we took the number of particles to be $N = 200$ for the Schr\"odinger bridge-based method and $N = 150,000$ for the bootstrap particle filter, leading to average running times of $9.09s$ and $9.74s$ respectively over $100$ independent runs on an Intel Core i5 (2.5 GHz). For the Schr\"odinger bridge particle filter, we used IPF with early stopping. In Figure \ref{fig:LQG_sbpf_bpf_logwas_reps100_sigmaobs0.1_T80_h0.025}, we plot the log 2-Wasserstein distance between $\pi_t$ and Gaussian distributions with means and covariance matrices estimated using the particle systems, illustrating that the Schr\"odinger bridge particle filter tends to yield approximations closer than the bootstrap particle filter, with the exception of certain times $t\in[1:T]$. In Figures \ref{fig:LQG_sbpf_bpf_log_normconst_reps100_sigmaobs0.1_T80_h0.025} and \ref{fig:LQG_sbpf_bpf_ESS_reps100_sigmaobs0.1_T80_h0.025} we also plot the normalizing constant (or marginal likelihood) estimates and effective sample sizes as percentage of the sample size, again illustrating improvements of the Schr\"odinger bridge scheme relative to the bootstrap particle filter.

However, we do not anticipate that the Schr\"odinger bridge particle filter using the Radon--Nikodym derivative estimates discussed here will scale well with dimension of the state space. This is because larger dimensions $d$ would require larger sample sizes $N$ to yield estimates with sufficiently low variance, but the cost of increasing $N$ is quadratic. A related $\mathcal{O}(N^2)$ ensemble method using stochastic discretization of the state space in combination with IPF was recently proposed by \citet{reich2018data}. A detailed comparison with our proposed methodology and the application of Schr\"odinger bridges for other state space models could be explored in future work.

\begin{figure}[hp]
\centering
        \begin{subfigure}[t]{0.4\textwidth}
            \centering
            \includegraphics[width=\textwidth]{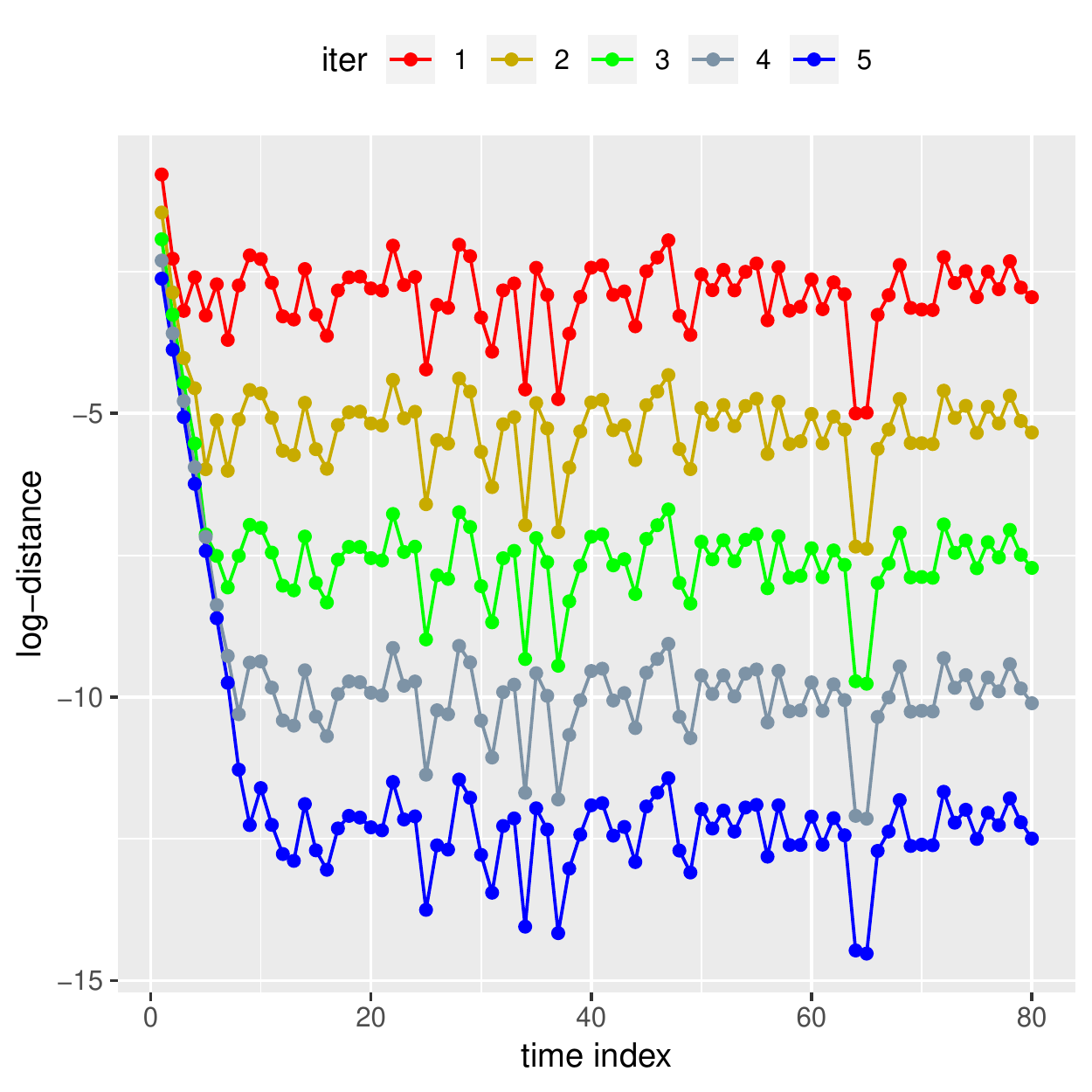}
            \caption{{\small Exact IPF, $\log \was_2(\pi_t,q_t^{(i)})$.}}
            \label{fig:LQG_sb_kf_sigmaobs0.1_T80_h0.025_IPF_exact}
        \end{subfigure}
        \hspace*{1cm}
                \begin{subfigure}[t]{0.4\textwidth}
            \centering
            \includegraphics[width=\textwidth]{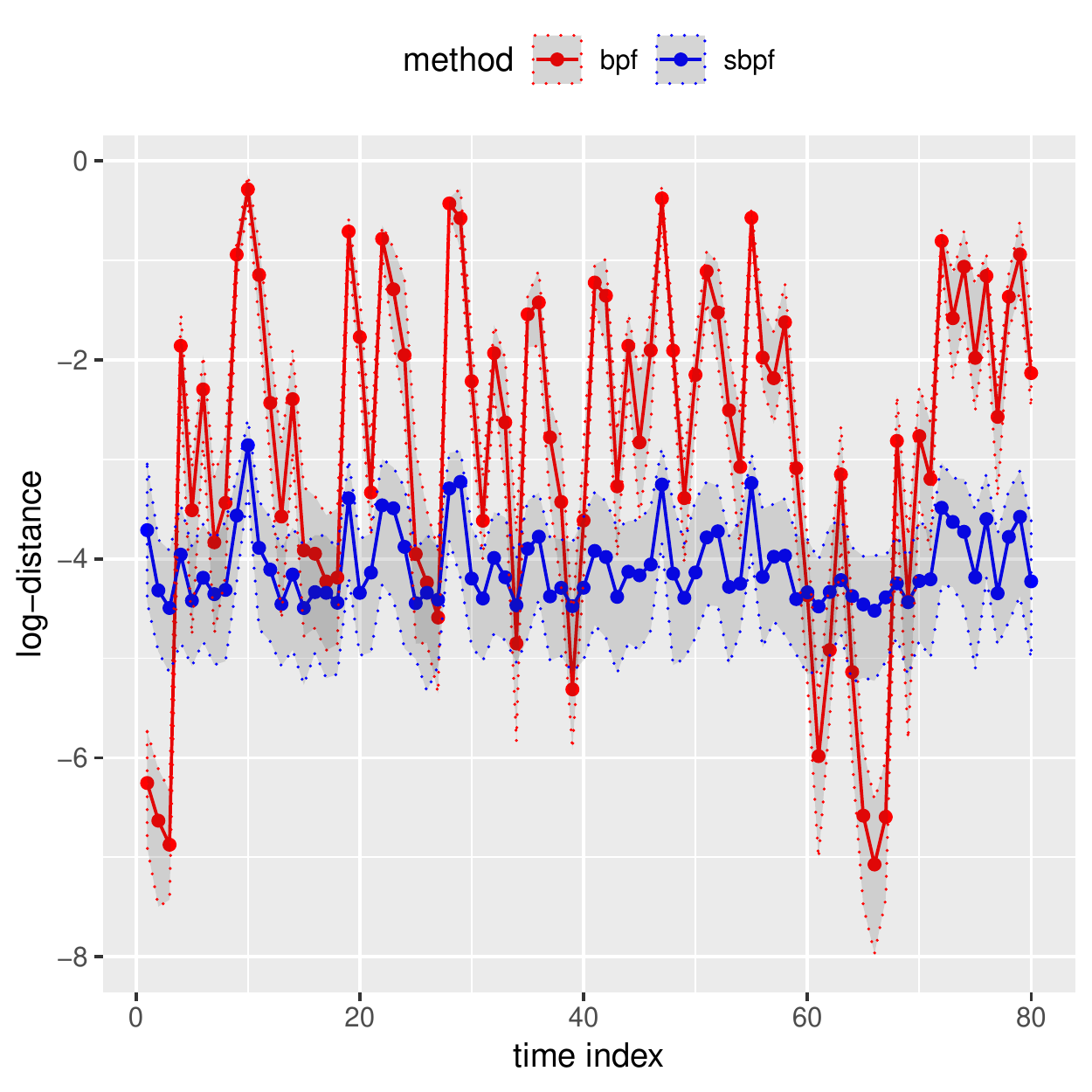}
            \caption{{\small Approx.~IPF, $\log \was_2(\pi_t,\hat{q}_t^{(I)})$.}}
            \label{fig:LQG_sbpf_bpf_logwas_reps100_sigmaobs0.1_T80_h0.025}
        \end{subfigure}

         \begin{subfigure}[t]{0.4\textwidth}
            \centering
            \includegraphics[width=\textwidth]{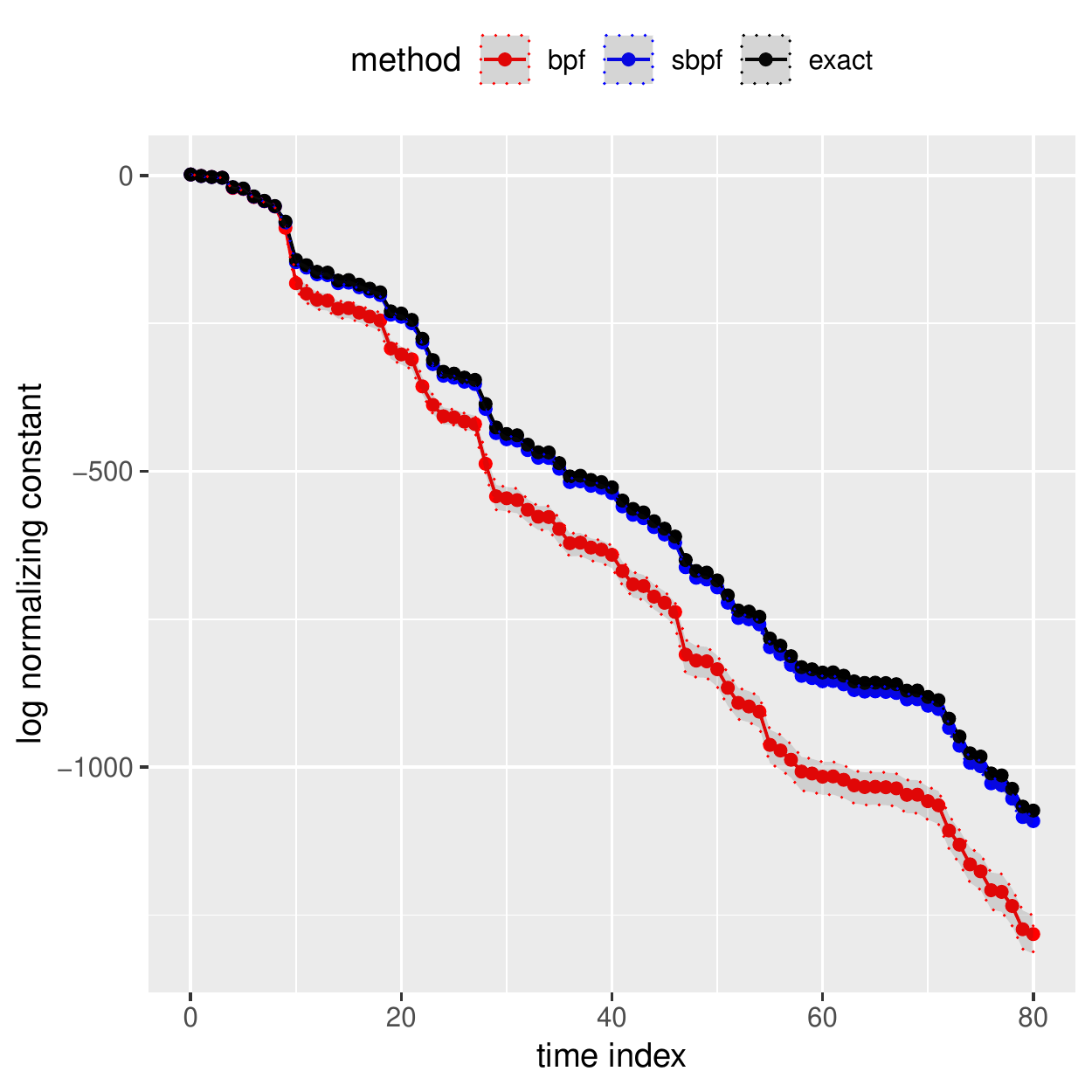}
            \caption{{\small Values of $\log \hat{Z}_t$.}}
            \label{fig:LQG_sbpf_bpf_log_normconst_reps100_sigmaobs0.1_T80_h0.025}
        \end{subfigure}
                \hspace*{1cm}
                \begin{subfigure}[t]{0.4\textwidth}
            \centering
            \includegraphics[width=\textwidth]{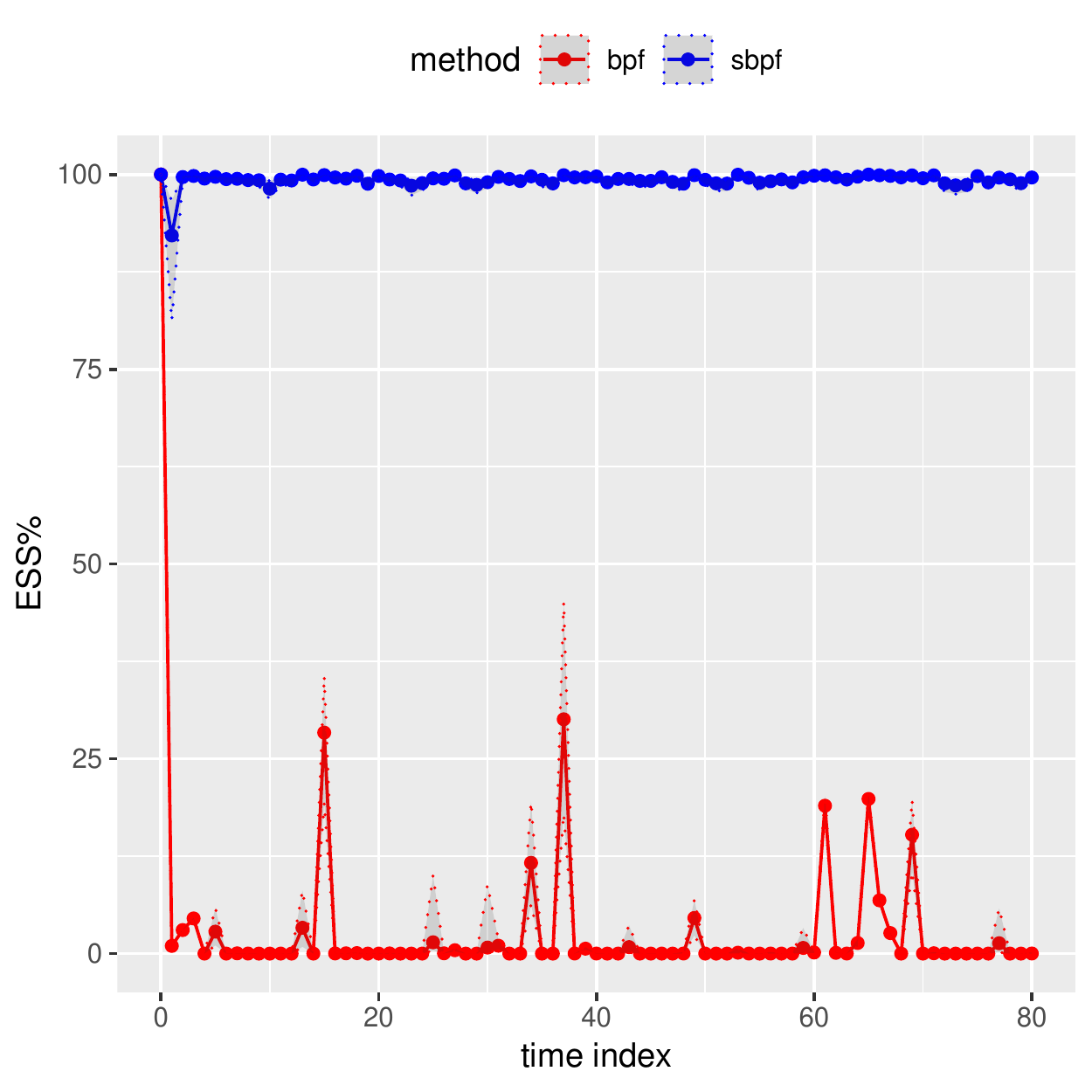}
            \caption{{\small ESS as $\%$ of sample size.}}
            \label{fig:LQG_sbpf_bpf_ESS_reps100_sigmaobs0.1_T80_h0.025}
        \end{subfigure}

        \caption{ {\small Results of the numerical experiments performed for the LQG model discussed in the particle filtering setting of Section \ref{sec:pf}. In Figure \ref{fig:LQG_sb_kf_sigmaobs0.1_T80_h0.025_IPF_exact}, we perform exact IPF and plot $\log \was_2(\pi_t,q_t^{(i)})$ for $t \in [1:T]$ and $i \in [1:5]$, where $\pi_t$ was derived using a Kalman filter. In Figure \ref{fig:LQG_sbpf_bpf_logwas_reps100_sigmaobs0.1_T80_h0.025}, we plot the log 2-Wasserstein distances between $\{\pi_t\}_{t\in[1:T]}$ and their Gaussian approximations based on the Schr\"odinger bridge and bootstrap particle filters. In Figures \ref{fig:LQG_sbpf_bpf_log_normconst_reps100_sigmaobs0.1_T80_h0.025} and \ref{fig:LQG_sbpf_bpf_ESS_reps100_sigmaobs0.1_T80_h0.025} we  plot the marginal likelihood estimates and effective sample sizes as a percentage of $N$ for the two particle filters. In each plot based on particle approximations, solid lines correspond to median values calculated over $100$ independent simulations, and the corresponding confidence bands represent the $5\%$ and $95\%$ quantiles.}
        \label{fig:LQG_sbpf_bpf}}
\end{figure}

\section{Numerical experiments}\label{sec:numerical_experiments}
In this section, we illustrate the proposed method in two different settings. In the first, we compare the performance of the SSB sampler to a standard SMC sampler in a Gaussian model of varying dimension. In the second, we apply the SSB sampler to a Bayesian logistic regression model with an instance of the weakly informative priors recommended by \citet{gelman2008weakly}. As these priors are non-Gaussian and consequently not conjugate with respect to Gaussian policies, this represents a setting where controlled SMC \citep{heng2017controlled} is not easily applicable. 

\subsection{LQG in various dimensions} \label{sec:lqg_highdim}
In this section, we apply the SSB sampler to the Gaussian model detailed in Section \ref{sec:lqg} of varying dimension.
As we vary $d=2^k$ for $k = [2:6]$, we adopt a prior distribution with mean $\mu_0 = 0$ and covariance $\Sigma_0 = \mathcal{I}_d$, and a log-likelihood function that is specified by taking $\xi = 25$ and $\rho = 0.8$.
We consider the discretized Langevin reference process defined by the kernels \eqref{eq:em_langevin} with $h = \tau/T$ and $\lambda_t = t/T$ and set $\tau = 2$ and $T = 40$.


We compare the standard SMC sampler defined by the reference process with three versions of the SSB sampler. The first method corresponds to the same algorithm applied in the two-dimensional LQG setting of Section \ref{sec:ssb_lqg}. The second method uses the modification of this algorithm discussed in Section \ref{sec:ssb}, in which we refresh the particles $\{X_{t-1}^n\}_{n \in [1:N]}$ at time $t-1$ using a Markov kernel that is invariant to $\pi_{t-1}$ for each iteration of IPF. This refreshment step can help prevent the regression-based approximation of $\psi_t$ from overfitting. In particular, we used one step of the Metropolis-adjusted Langevin algorithm \citep{roberts1996exponential} scaled with a diagonal covariance matrix estimated using $\{X_{t-1}^n\}_{n \in [1:N]}$ and step-size chosen to be $3/d^{1/3}$. The third method uses the Euler--Maruyama approximation to the exact Markov kernel twisting discussed in Section \ref{sec:sb_continuous} in combination with refreshment steps. For all three methods, we used $M = 0$ CSMC iterations and resampled the particles according to their incremental importance weights at each time step.

In $d$-dimensional space, parameterizing the quadratic function class containing policies of the form $-\log\psi_t(x_t) = x_t^\top A_t x_t + x_t^\top b_t+ c_t$ requires a total of  $d(d+1)/2 + d + 1$ parameters. By restricting $A_t$ to be diagonal, this number is reduced to $2d + 1$. This restriction is necessary in high-dimensional scenarios as the regression problems in step (c) of Algorithm \ref{algorithm:aIPF} becomes computationally prohibitive otherwise. In the following simulations, we use the fully parameterized form of $A_t$ when $d = 4, 8$ and the diagonal form of $A_t$ when $d = 32, 64$.  We compare the impact of the two different parameterizations in the $d=16$ case.

Of the three SSB samplers, the one using refreshment steps and exact kernel twisting  was the most computationally expensive. For this method, we set the number of particles to be $N = 1,000$ for $d = 4, 8,16$, and $N = 2,000$ for $d = 32, 64$ as more samples are required to learn more policy parameters in higher dimensions. The sample sizes of the other schemes were tuned so that the four methods ran in approximately the same wall-clock time; for each simulation setting, the most time consuming method among the four took no more than 10\% longer than the least time consuming (averaged over 100 runs). For each SSB method, we used IPF with warm starts, early stopping and a maximum of $I = 100$ iterations, while setting $\mathcal{K} = [0:T]$.

In Figure \ref{fig:lqg_highdim}, we plot the 2-Wasserstein distance between the target sequence $\{\pi_t\}_{t\in[1:T]}$ and the marginal distributions induced by the different approximate IPF schemes (left column), and the RMSE of the corresponding log-normalizing constant estimators $\log \hat{Z}_t$ (right column). Across all dimensions, we observe that the SSB samplers outperform the standard SMC sampler in estimating the log-normalizing constant. For $d=4,8$, there appears to be only limited benefit in including refreshment steps, but the latter seems to become increasingly useful as the dimension increases. Moreover, the Euler--Maruyama approximation provides a reasonable alternative to exact twisting in all the examples. 
For instance, in the $d=64$ setting, the RMSE of the standard SMC log-normalizing constant estimator at the terminal time was 21 and 17 times higher than the SSB estimators using exact twisting and Euler--Maruyama twisting, respectively. 

On the other hand, the policies obtained with the different SSB schemes appear to yield marginal distributions $q_t^{(I)}$ that are of similar 2-Wasserstein distance from the targets, at least for $t$ close to $T$. This illustrates that the quality of the marginal approximations alone does not explain the difference in performance of the log-normalizing constant estimators. This is further illustrated in the $d=16$ case, in which the distances between the marginals and the targets are observed to increase when moving from the full parameterization of $A_t$ to the diagonal parameterization, while the performance of the corresponding SSB samplers do not appear to deteriorate. On the contrary, the SSB sampler without rejuvenation looks to benefit from using the diagonal representation. A possible explanation for this behavior is that the relatively high number of parameters in the full parameterization leads the regression within the IPF steps to overfit, and that using the diagonal parameterization acts as a regularization. The fact that there is only marginal difference between the performance of the SSB samplers with rejuvenation between the full and diagonal parameterizations suggests that refreshing the samples can help mitigate such overfitting. Other approaches to prevent overfitting, such as estimating the policies with penalized regression, would also be worth investigating.

\begin{figure}[h]
    \centering
        \begin{subfigure}[b]{0.4\textwidth}
            \centering
            \includegraphics[width=\textwidth]{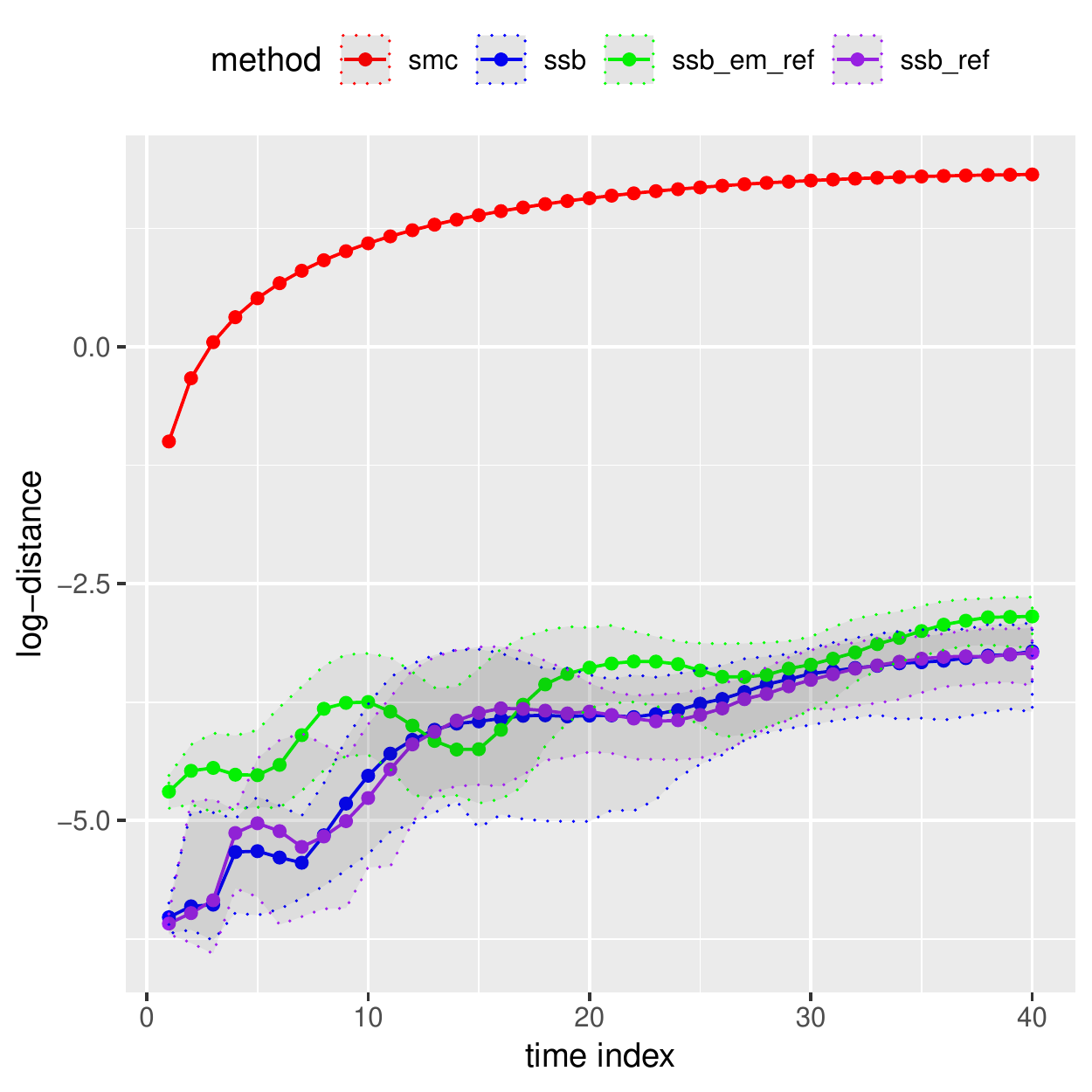}
            \caption{{\small $d = 4$, $\log \was_2(\pi_t,q_t^{(I)})$.}}
            \label{fig:langevin_ssb_logwas_IPF100_reps100_n1000_d4_T40_terminaltime2_diagonalFALSE}
        \end{subfigure}
        \hskip1cm
                  \begin{subfigure}[b]{0.4\textwidth}
            \centering
            \includegraphics[width=\textwidth]{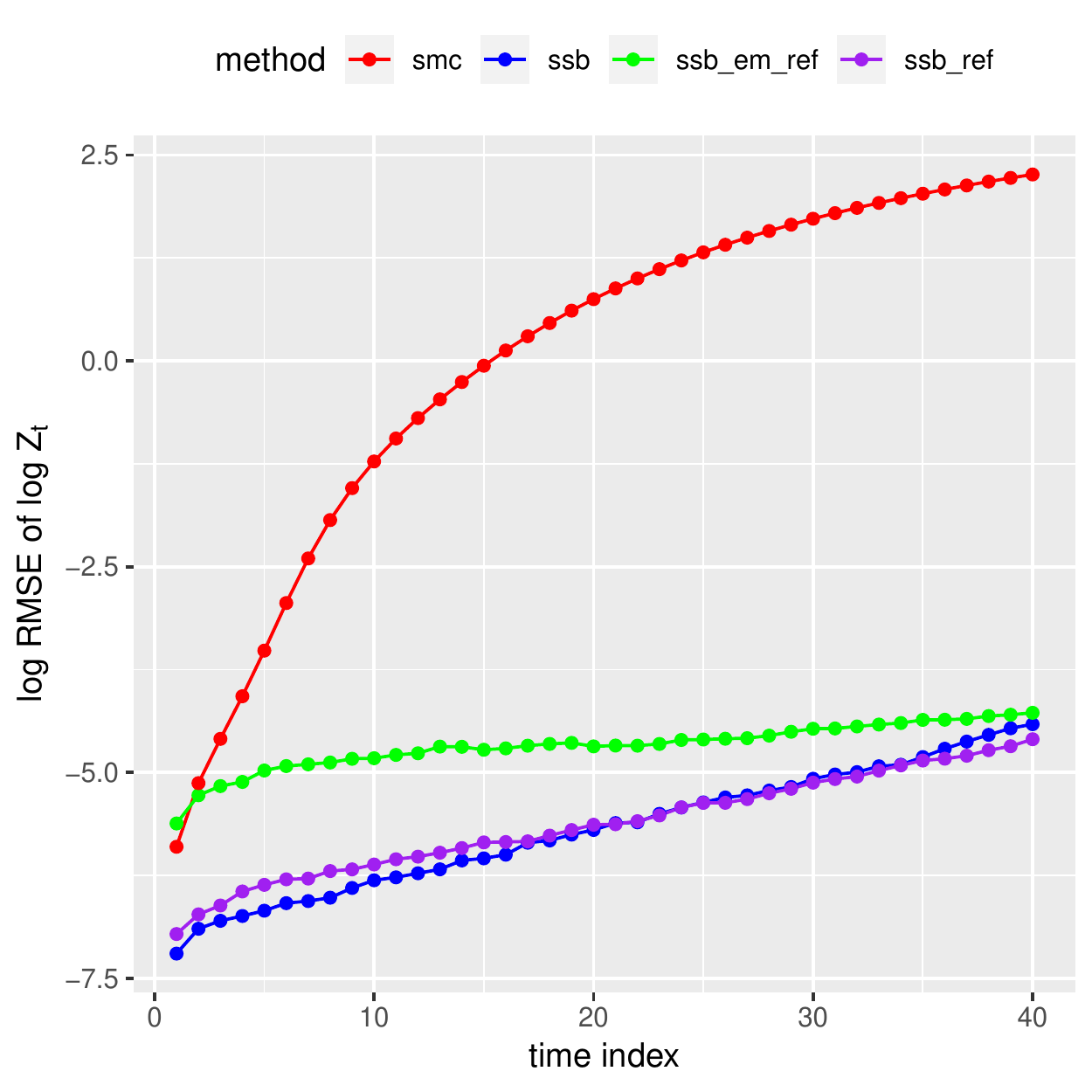}
            \caption{{\small $d = 4$,  log-RMSE of $\log \hat{Z}_t$.}}
            \label{fig:langevin_ssb_RMSE_lognormconst_IPF100_reps100_n1000_d4_T40_terminaltime2_diagonalFALSE}
        \end{subfigure}

        \begin{subfigure}[b]{0.4\textwidth}
            \centering
            \includegraphics[width=\textwidth]{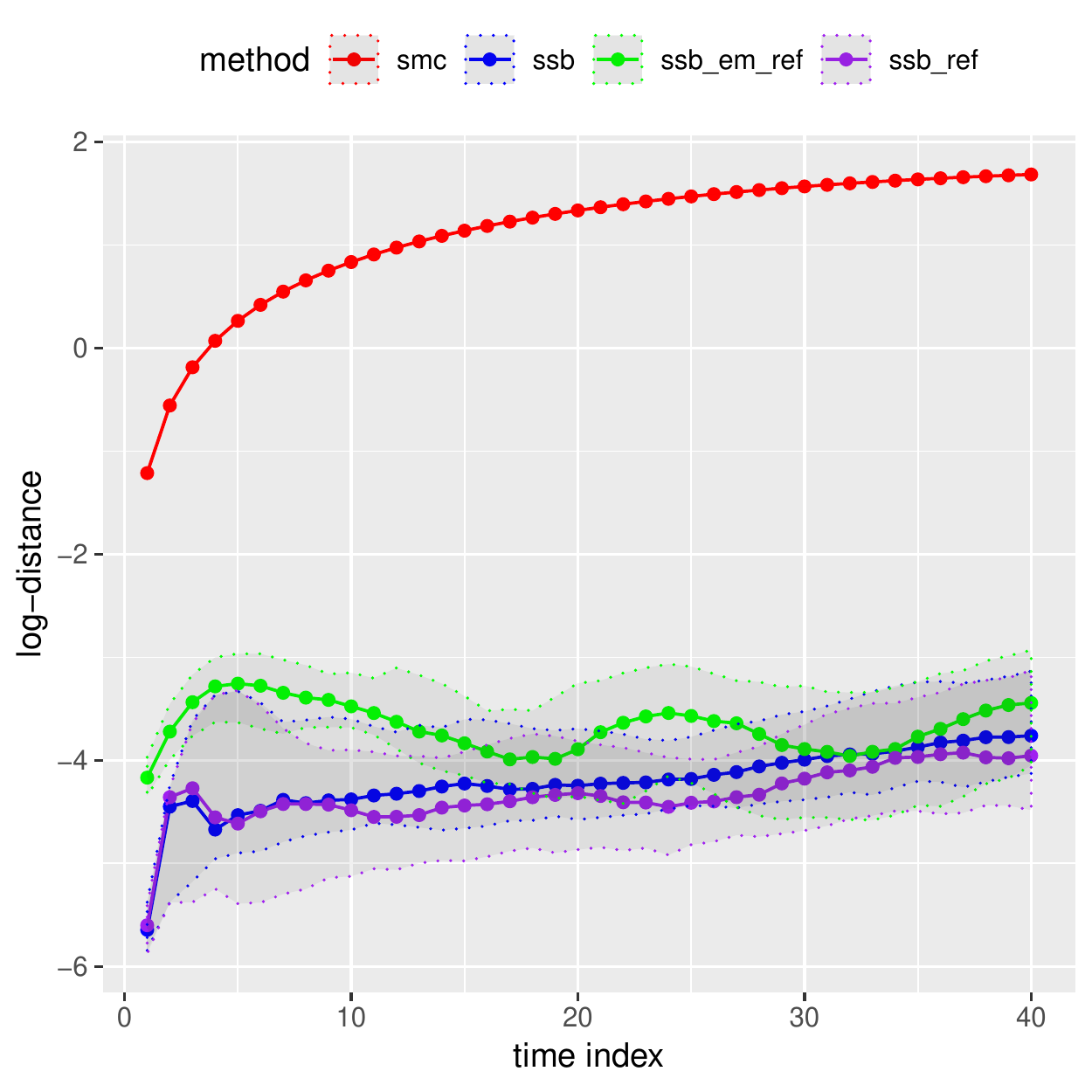}
            \caption{{\small $d = 8$, $\log \was_2(\pi_t,q_t^{(I)})$.}}
            \label{fig:langevin_ssb_logwas_IPF100_reps100_n1000_d8_T40_terminaltime2_diagonalFALSE}
            \end{subfigure}
            \hskip1cm
                    \begin{subfigure}[b]{0.4\textwidth}
            \centering
            \includegraphics[width=\textwidth]{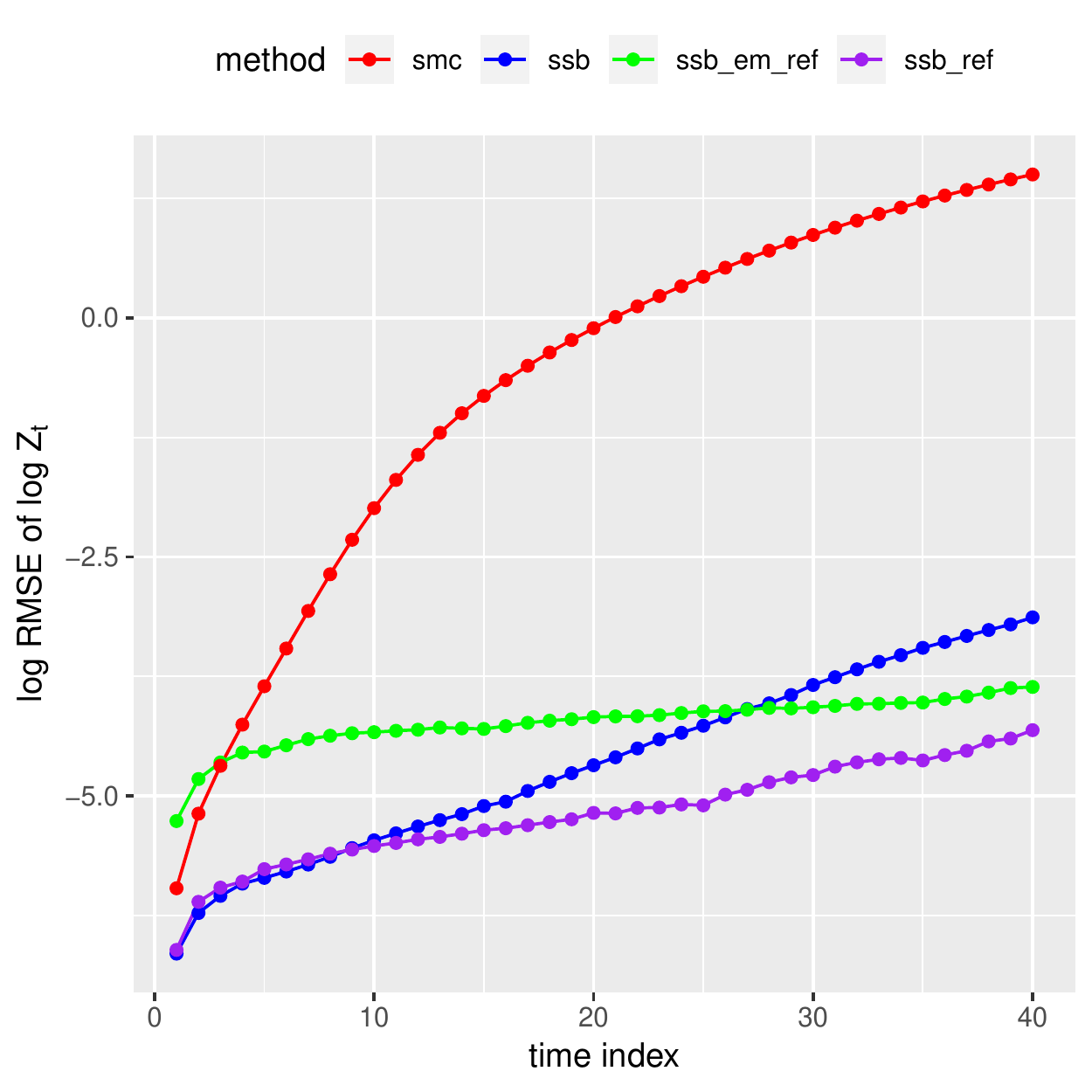}
            \caption{{\small $d = 8$,  log-RMSE of $\log \hat{Z}_t$.}}
            \label{fig:langevin_ssb_RMSE_lognormconst_IPF100_reps100_n1000_d8_T40_terminaltime2_diagonalFALSE}
            \end{subfigure}

               \caption{{\bf Part 1/3.} {\small Distances between $\pi_t$ and marginals of the IPF-based approximations $q_t^{(I)}$ of the multi-marginal Schr\"odinger bridge, measured as $\log \was_2(\pi_t,q_t^{(I)})$, for the LQG setting of Section \ref{sec:lqg_highdim} with discretized Langevin diffusion reference dynamics (left column), and log-RMSE of the log-normalizing constant estimates $\log \hat{Z}_t$ (right column). The red lines correspond to the reference process and the estimators obtained with the corresponding SMC sampler. The blue and purple lines correspond to the SSB sampler with exact twisting with and without refreshment steps, respectively. The green line correspond to the SSB sampler using Euler--Maruyama approximate twisting and refreshment steps. Figure continued on the next page. \label{fig:lqg_highdim}}}
\end{figure}

\begin{figure}[h]\ContinuedFloat
    \centering
         \begin{subfigure}[b]{0.4\textwidth}
            \centering
            \includegraphics[width=\textwidth]{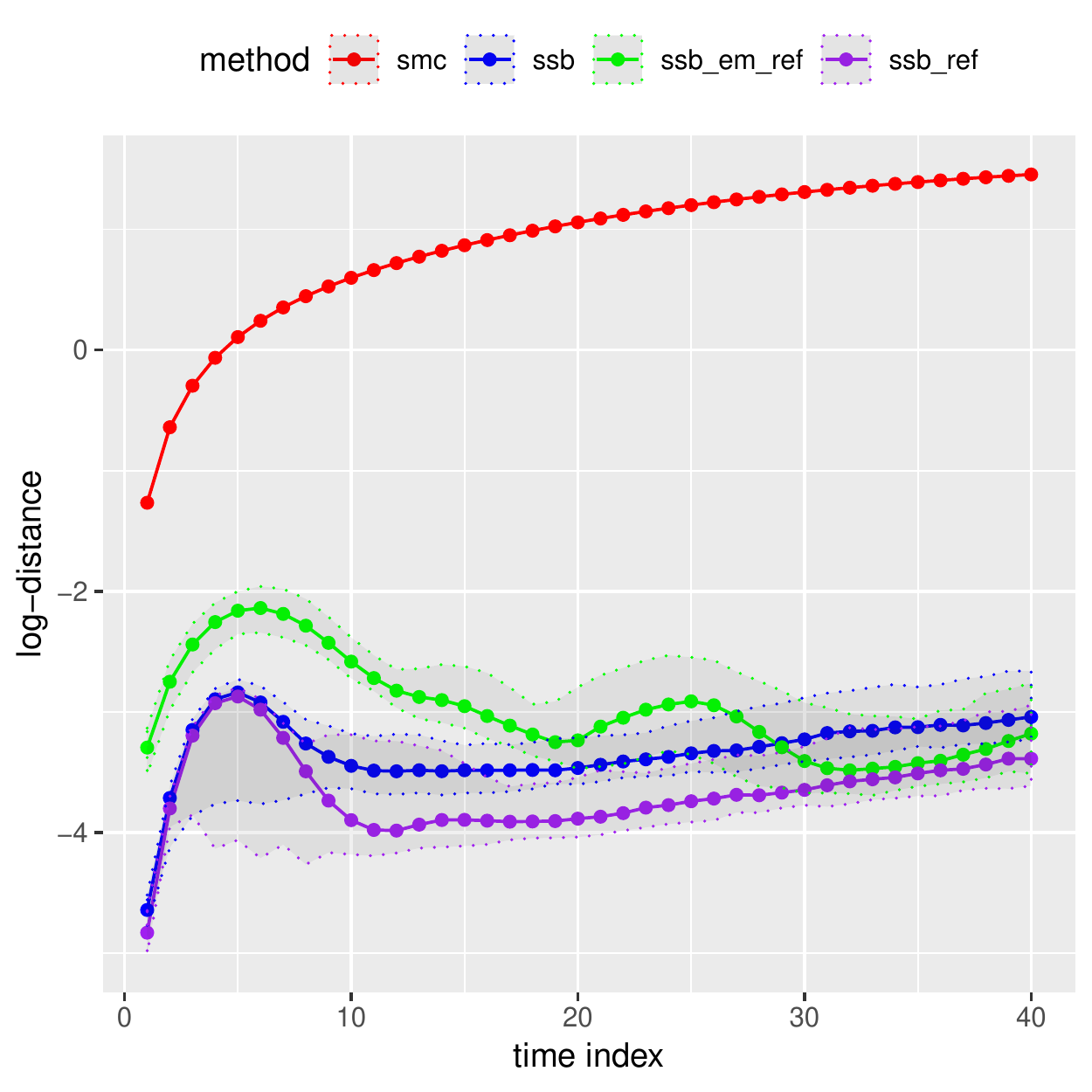}
             \caption{{\small $d = 16$, $\log \was_2(\pi_t,q_t^{(I)})$, policies estimated using full $A_t$.}}
            \label{fig:langevin_ssb_logwas_IPF100_reps100_n1000_d16_T40_terminaltime2_diagonalFALSE}
        \end{subfigure}
        \hskip1cm
                  \begin{subfigure}[b]{0.4\textwidth}
            \centering
            \includegraphics[width=\textwidth]{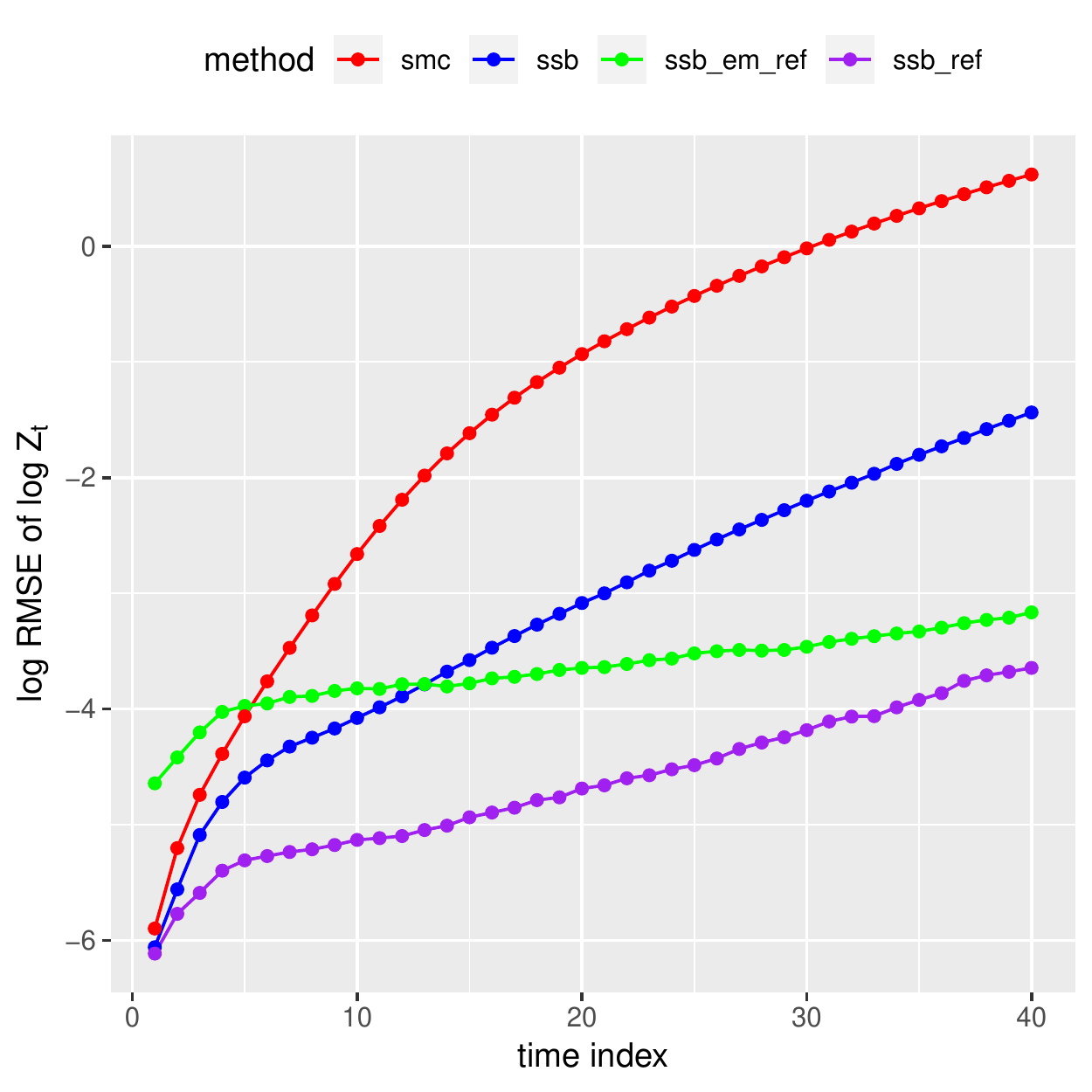}
            \caption{{\small $d = 16$, log-RMSE of $\log \hat{Z}_t$, policies estimated using full $A_t$.}}
            \label{fig:langevin_ssb_RMSE_lognormconst_IPF100_reps100_n1000_d16_T40_terminaltime2_diagonalFALSE}
        \end{subfigure}

        \begin{subfigure}[b]{0.4\textwidth}
            \centering
            \includegraphics[width=\textwidth]{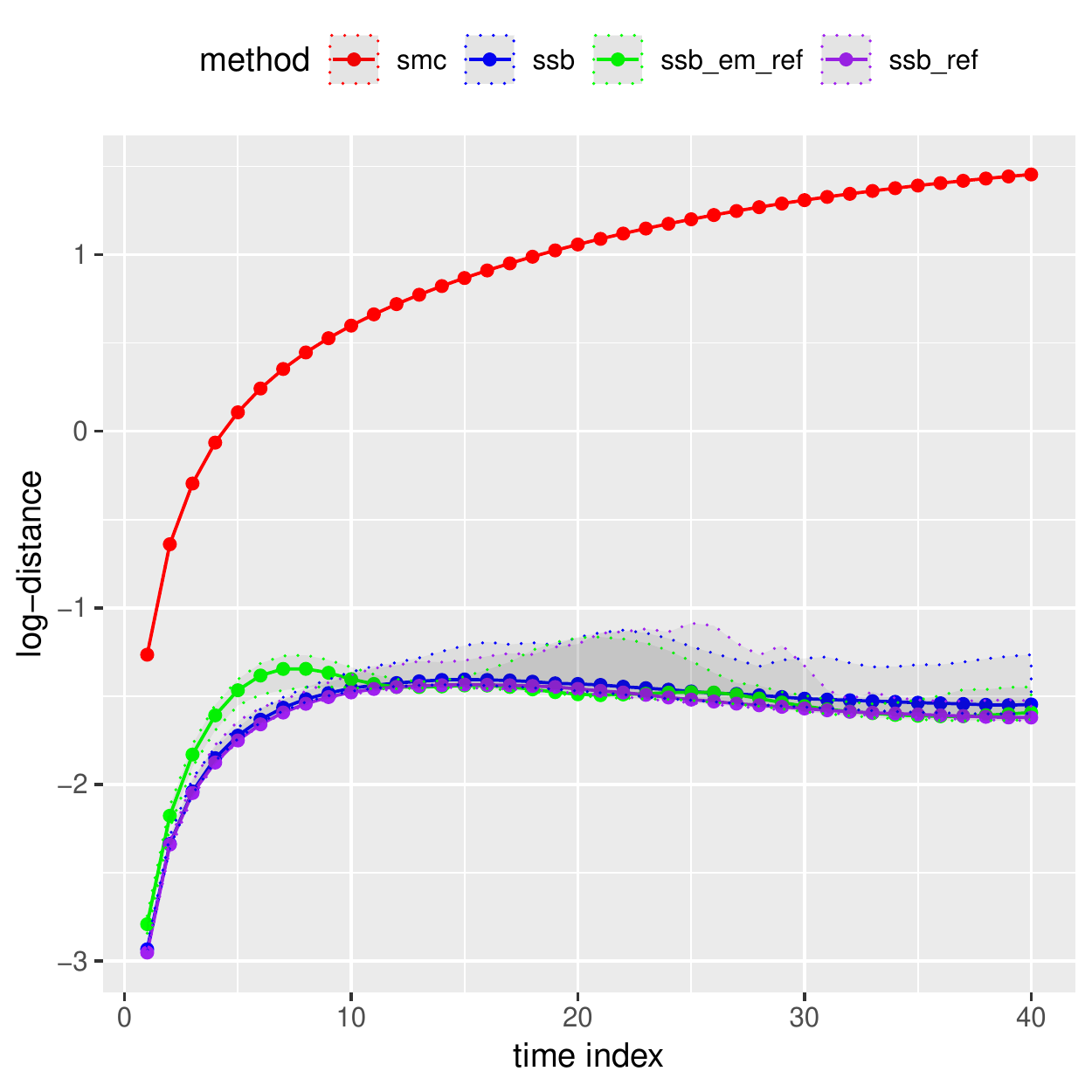}
             \caption{{\small $d = 16$, $\log \was_2(\pi_t,q_t^{(I)})$,  policies estimated using diagonal $A_t$.}}
            \label{fig:langevin_ssb_logwas_IPF100_reps100_n1000_d16_T40_terminaltime2_diagonalTRUE}
            \end{subfigure}
            \hskip1cm
             \begin{subfigure}[b]{0.4\textwidth}
            \centering
            \includegraphics[width=\textwidth]{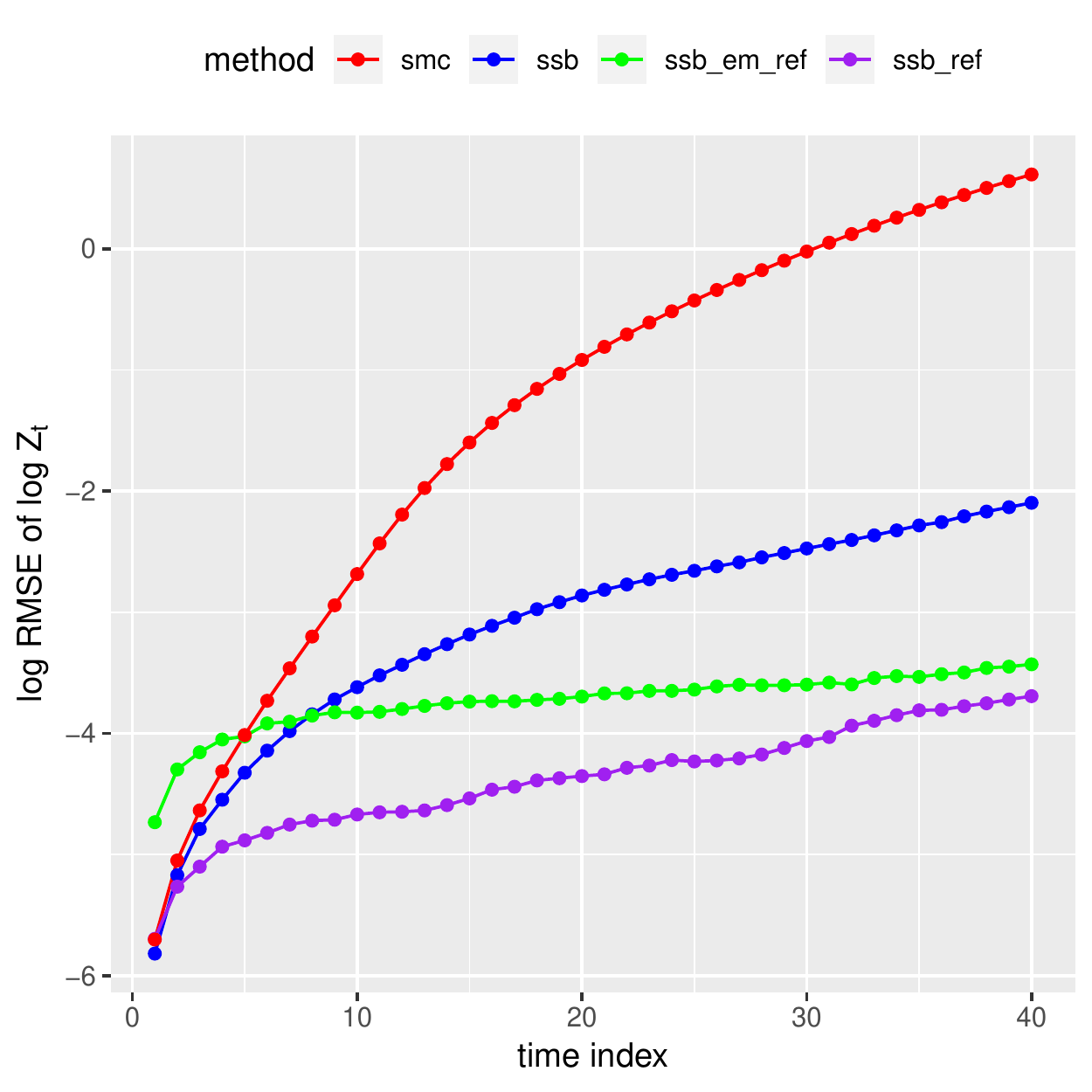}
            \caption{{\small $d = 16$, log-RMSE of $\log \hat{Z}_t$, policies est.~using diagonal $A_t$.}}
            \label{fig:langevin_ssb_RMSE_lognormconst_IPF100_reps100_n1000_d16_T40_terminaltime2_diagonalTRUE}
            \end{subfigure}

                       \caption[]{ {\bf Part 2/3.} {\small Distances between $\pi_t$ and marginals of the IPF-based approximations $q_t^{(I)}$ of the multi-marginal Schr\"odinger bridge, measured as $\log \was_2(\pi_t,q_t^{(I)})$, for the LQG setting of Section \ref{sec:lqg_highdim} with discretized Langevin diffusion reference dynamics (left column), and log-RMSE of the log-normalizing constant estimates $\log \hat{Z}_t$ (right column). The red lines correspond to the reference process and the estimators obtained with the corresponding SMC sampler. The blue and purple lines correspond to the SSB sampler with exact twisting with and without refreshment steps, respectively. The green line correspond to the SSB sampler using Euler--Maruyama approximate twisting and refreshment steps. Figure continued on the next page. }}
\end{figure}

\begin{figure}[h]\ContinuedFloat
    \centering
         \begin{subfigure}[b]{0.4\textwidth}
            \centering
            \includegraphics[width=\textwidth]{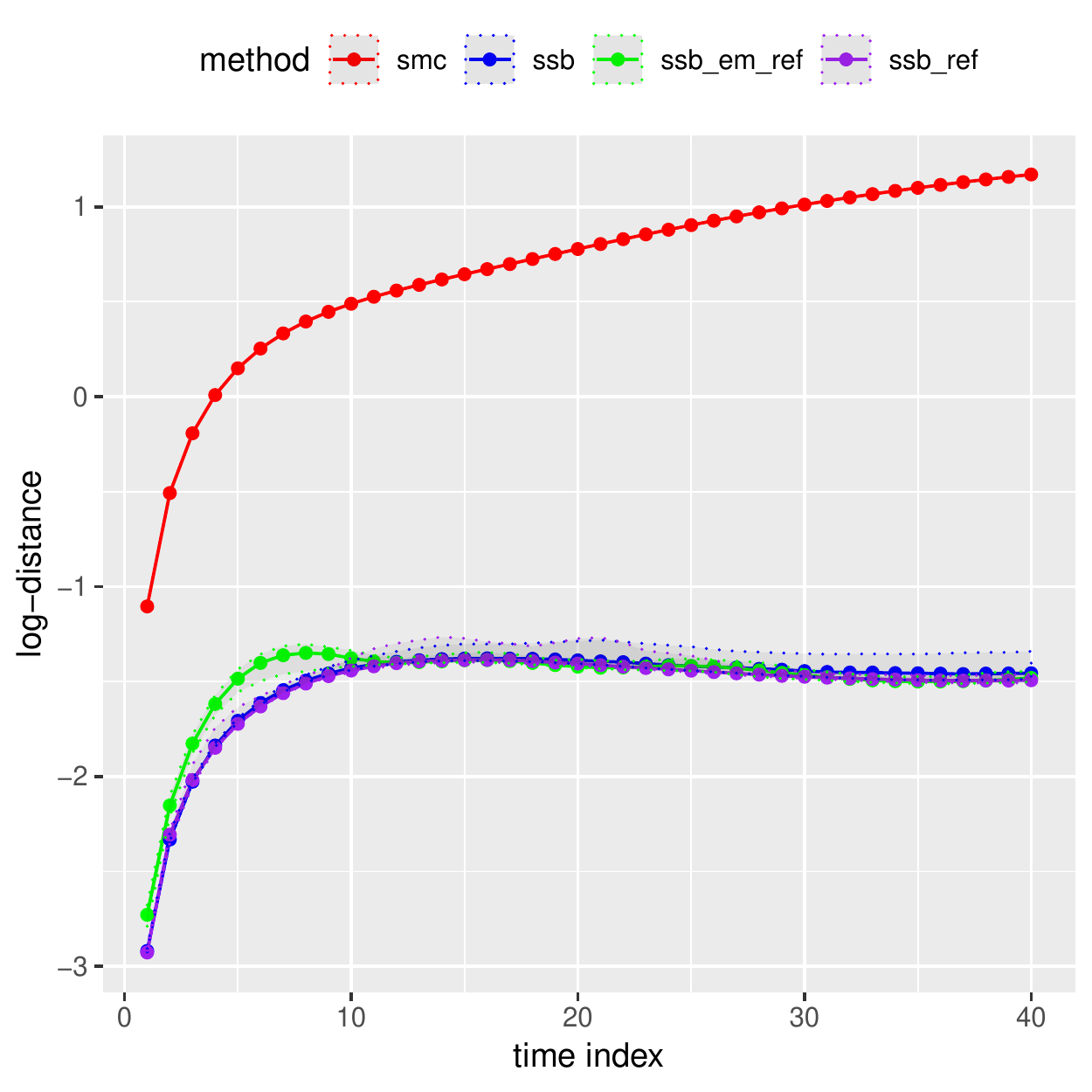}
            \caption{{\small $d = 32$, $\log \was_2(\pi_t,q_t^{(I)})$.}}
            \label{fig:langevin_ssb_logwas_IPF100_reps100_n2000_d32_T40_terminaltime2_diagonalTRUE}
        \end{subfigure}
        \hskip1cm
                  \begin{subfigure}[b]{0.4\textwidth}
            \centering
            \includegraphics[width=\textwidth]{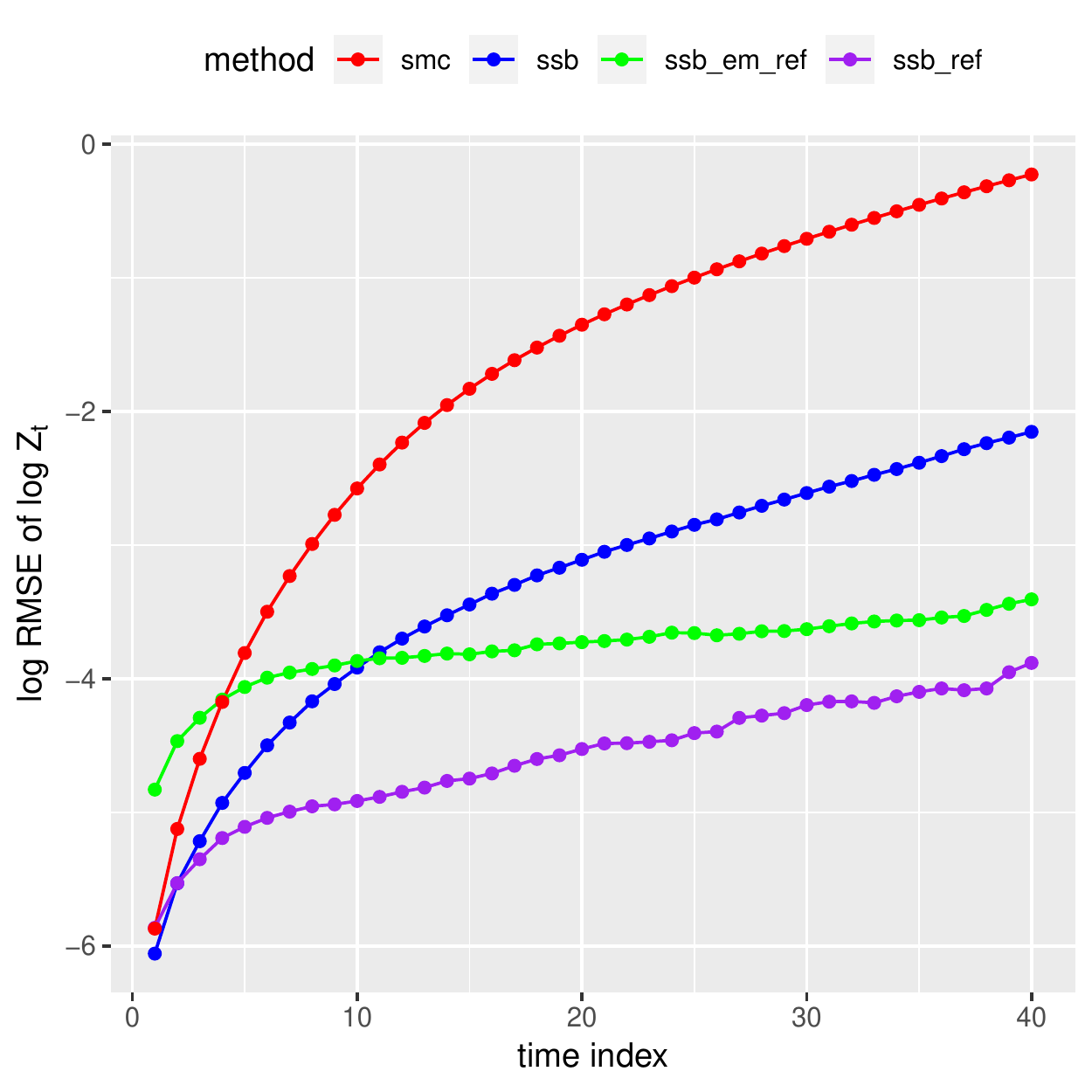}
            \caption{{\small $d = 32$, log-RMSE of $\log \hat{Z}_t$.}}
            \label{fig:langevin_ssb_RMSE_lognormconst_IPF100_reps100_n2000_d32_T40_terminaltime2_diagonalTRUE}
        \end{subfigure}

        \begin{subfigure}[b]{0.4\textwidth}
            \centering
            \includegraphics[width=\textwidth]{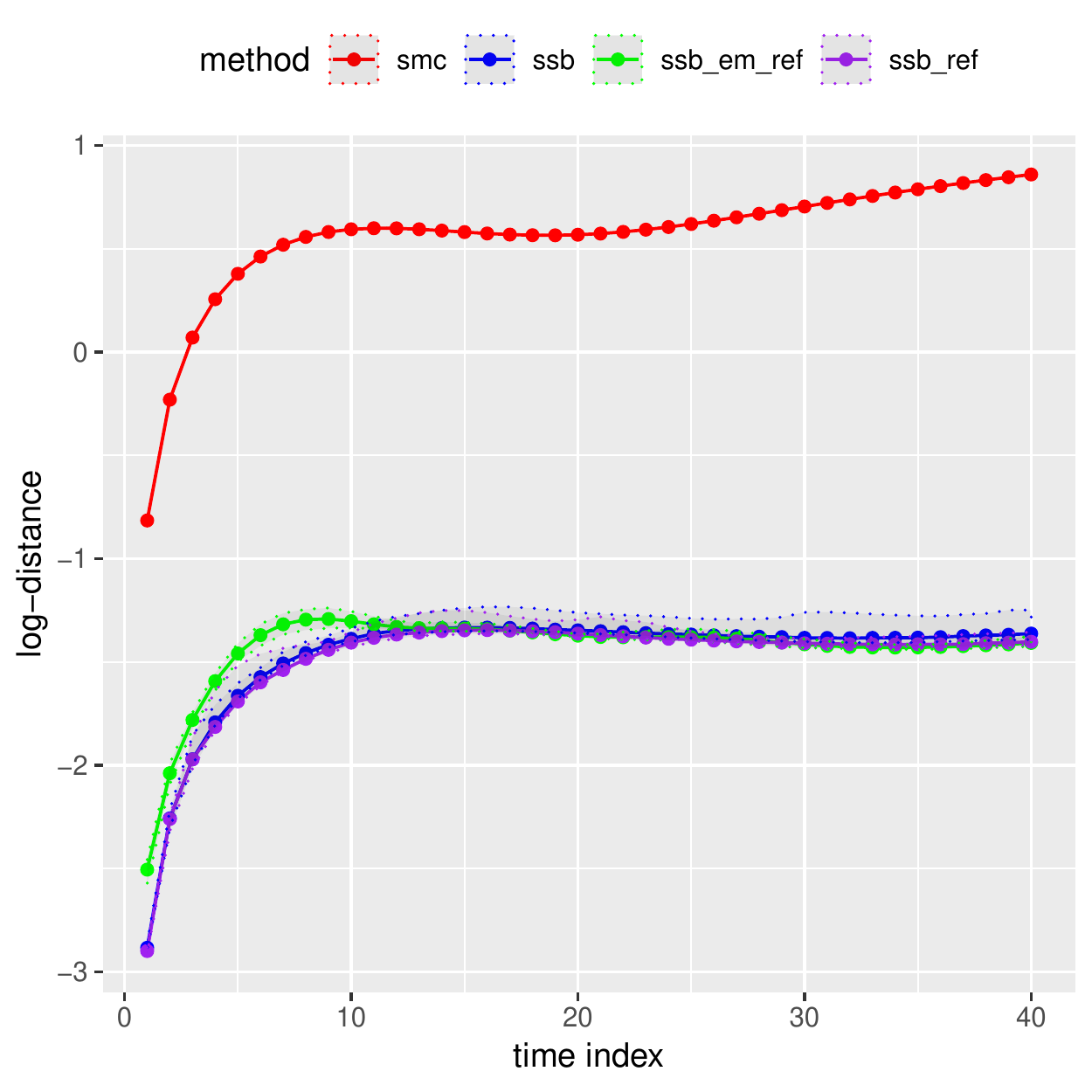}
            \caption{{\small $d = 64$, $\log \was_2(\pi_t,q_t^{(I)})$.}}
            \label{fig:langevin_ssb_logwas_IPF100_reps100_n2000_d64_T40_terminaltime2_diagonalTRUE}
            \end{subfigure}
            \hskip1cm
                    \begin{subfigure}[b]{0.4\textwidth}
            \centering
            \includegraphics[width=\textwidth]{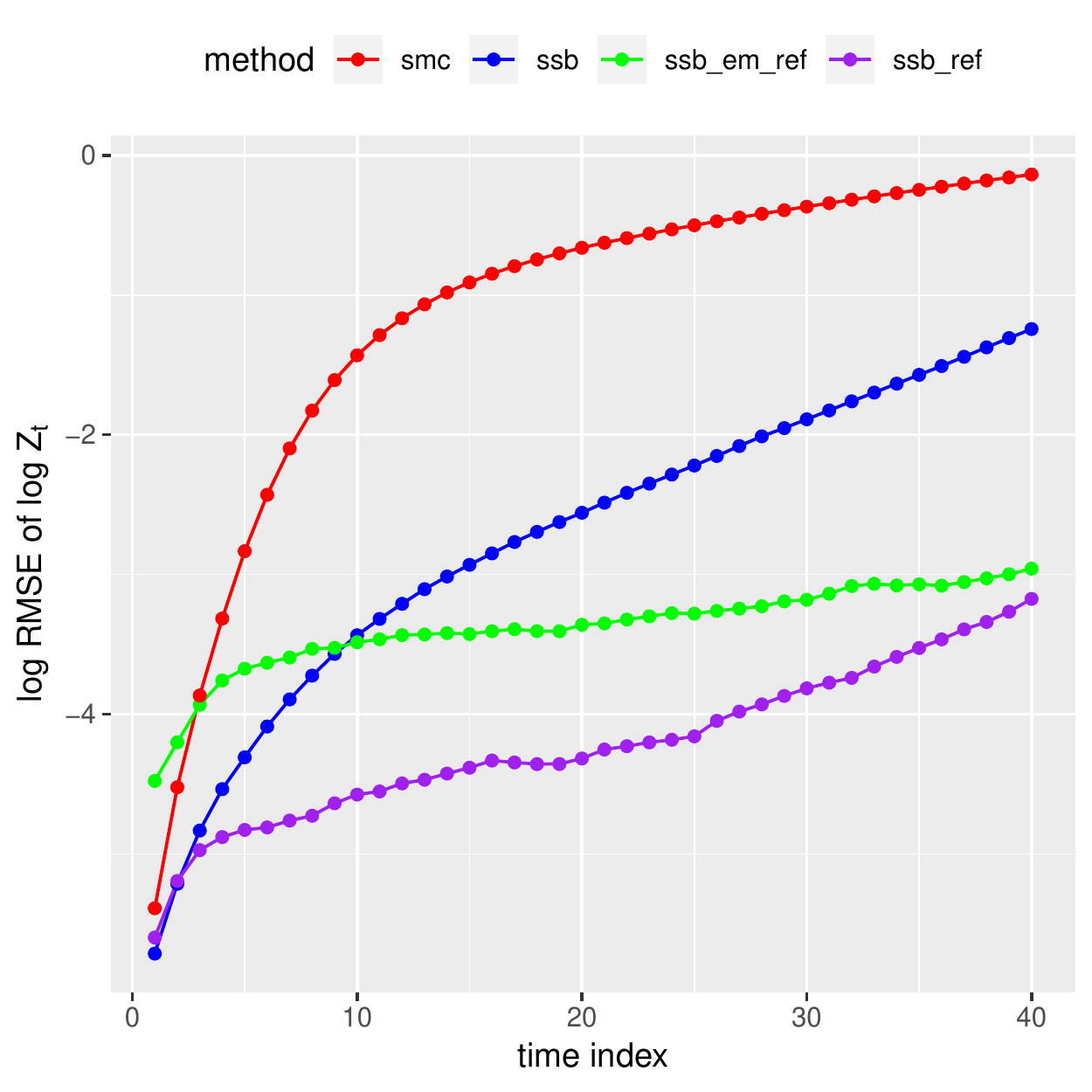}
            \caption{{\small $d = 64$, log-RMSE of $\log \hat{Z}_t$.}}
            \label{fig:langevin_ssb_RMSE_lognormconst_IPF100_reps100_n2000_d64_T40_terminaltime2_diagonalTRUE}
            \end{subfigure}
                       \caption[]{{\bf Part 3/3.}  {\small Distances between $\pi_t$ and marginals of the IPF-based approximations $q_t^{(I)}$ of the multi-marginal Schr\"odinger bridge, measured as $\log \was_2(\pi_t,q_t^{(I)})$, for the LQG setting of Section \ref{sec:lqg_highdim} with discretized Langevin diffusion reference dynamics (left column), and log-RMSE of the log-normalizing constant estimates $\log \hat{Z}_t$ (right column). The red lines correspond to the reference process and the estimators obtained with the corresponding SMC sampler. The blue and purple lines correspond to the SSB sampler with exact twisting with and without refreshment steps, respectively. The green line correspond to the SSB sampler using Euler--Maruyama approximate twisting and refreshment steps. }}
\end{figure}

\subsection{Bayesian logistic regression}
We fit a Bayesian logistic regression model to the Cleveland heart disease database in the UCI machine learning repository\footnote{Available at http://archive.ics.uci.edu/ml/datasets/heart+disease.}, which contains information about the presence of heart disease in 303 patients as well as measurements of 13 other predictors. We removed 6 entries with missing  covariates, and turned categorical variables with more than two categories into corresponding binary vectors, leaving us with a data set with $M = 297$ individuals, each with $d = 20$ binary or continuous covariates. Additionally, we made the response variable binary by only indicating the presence or absence of heart disease even though the data set contained four different categories of disease presence.

We use a prior distribution on the $d=20$ regression coefficients from the class of weakly informative default priors for logistic regression suggested by \citet{gelman2008weakly}. In particular, we first shift all the binary predictors to have mean $0$ and to differ by $1$ in their lower and upper conditions. Similarly, the non-binary predictors are shifted to have means of $0$ and normalized to have standard deviations equal to 0.5. This puts all the variables on the same scale. Next, we place independent $t$-distributions with centers $0$, scales $2.5$ and four degrees of freedom on each of the coefficients. The log-likelihood is defined via the logistic link function and is given by
\begin{equation}
\ell(x) = y^\top X x - \sum_{m=1}^M \log(1 + \exp(x^\top X_m)),
\end{equation}
where $y\in\{0,1\}^M$ is the response variable, $X_m \in \mathbb{R}^d$ is the $m$-th row of the design matrix $X\in\mathbb{R}^{M\times d}$. The gradient of the log-likelihood is in turn given by
\begin{equation}
\nabla \ell(x) = X^\top y - \sum_{m=1}^M (1 + \exp(-x^\top X_m))^{-1}X_m.
\end{equation}

To draw samples from the resulting posterior distribution, we obtain a reference process by discretizing the Langevin diffusion associated with the interpolation \eqref{eq:geometric_path}, setting $\tau = 2$, $T = 40$ and $h = \tau/T$ as in the previous sections. Unlike the earlier sections, however, we choose $\lambda_t = t^2/T^2$ for each $t\in[0:T]$. Compared to the linear interpolation, this makes consecutive distributions $\pi_t$ and $\pi_{t+1}$ closer when $t$ is small. Using the quadratic interpolation appeared to make the samplers better behaved, which is possibly because the prior distribution is disperse compared to the likelihood by virtue of being weakly informative. We used the SSB sampler with policies estimated within the Gaussian function class, which, unlike in the Gaussian setting, is no longer guaranteed to contain the optimal policy. Additionally, we restricted the matrix $A_t$ to be diagonal. Again, we used IPF with warm starts and early stopping, allowing a maximum of $I = 20$ IPF iterations per time step. For each IPF iteration, we refresh the particles using the same MALA kernel as in Section \ref{sec:lqg_highdim} with a step-size equal to $1/d^{1/3}$. The number of particles was chosen to be $N = 2,000$. We compare the SSB sampler with the standard SMC sampler induced by the reference process, for which we tuned the number of particles to be such that the two algorithms ran in the same amount of wall-clock time. Over 100 independent runs, log-normalizing constant estimators were on average -126.47 and -128.11 with standard deviations of 0.034 and 1.47 for SSB and SMC respectively, illustrating a large reduction in variance.

\section{Discussion}\label{sec:discussion}
In this paper, we have presented a new SMC algorithm based on an original method to sequentially approximate the solution of a multi-marginal Schr\"odinger bridge problem, which we termed the sequential Schr\"odinger bridge (SSB) sampler. The algorithm computes two-marginal Schr\"odinger bridges based on an approximate version of IPF. We showed that the first iteration of IPF can be seen as finding the optimal SMC auxiliary target distribution derived by \citet{del2006sequential} for a given set of  Markov kernels, and that further iterations modify the kernels in such a way that the induced joint distribution converges to the Schr\"odinger bridge.

By making use of the problem's equivalent formulation in terms of optimal control, we formulated the steps of the IPF algorithm as policy refinements and showed how to estimate them using an approximate dynamic programming methodology similar to the one developed in \citet{heng2017controlled}. Unlike the controlled SMC algorithm proposed therein, the SSB sampler is applicable when the initial distribution is not conjugate with respect to the chosen policy. Furthermore, by analogy with the continuous-time formulation of the Schr\"odinger bridge problem for a Langevin diffusion reference process, we proposed an Euler--Maruyama approximation to the ADP algorithm that can also alleviate the need for using policies that are conjugate with respect to the Markov transition kernels.

We illustrated our approach in various numerical experiments, including a linear quadratic Gaussian setting and a Bayesian logistic regression model. 
While controlling for computational time, SSB samplers outperform a standard SMC algorithm in the estimation of log-normalizing constants, in some cases reducing the RMSE of the estimators by several orders of magnitude. 
In the Gaussian setting, we also showed that the Euler--Maruyama approximation of the ADP algorithm provided a reasonable alternative to its exact counterpart. However, there are several methodological extensions left to consider. In particular, applying more general and flexible policy approximation methods, such as neural networks, could potentially lead to improvements. As done in \citet{genevay2018learning} for finite spaces, we could also unroll the IPF recursion over a few iterations and then learn the parameters of the policy using gradient techniques by maximizing the log-normalizing constant estimate as the algorithm is end-to-end differentiable. In the context of controlled SMC, such an approach has been employed in \citet{lawsontwisted}.

There are also several theoretical aspects that are left to consider, such as a more thorough analysis of the asymptotic properties of the Schr\"odinger bridge approximation as $N$ and $I$ (and also $M$ and $T$) grow, and how their relative sizes impact the efficiency of the method.
For instance, we have not formally studied the behavior of the IPF algorithm when the function classes used within ADP are misspecified in the sense that they do not contain the optimal policy given by the underlying system of Schr\"odinger equations.

The Schr\"odinger bridge problem falls at the intersection of many different literatures, including probability theory, optimal transport, control theory, and physics. We have discussed some of the connections between our methodology and computational approaches to approximate the 2-Wasserstein distance between two distributions, the flow transport problem, and the problem of constructing shortcuts to adiabaticity in thermodynamics. We anticipate that further exploration into these and other related problems, such as particle filtering and inference for diffusion processes, would  be interesting.

\singlespacing
\bibliographystyle{apalike}
\bibliography{references}

\begin{appendix}
\section{Proofs}\label{appendix:proofs}
\begin{proof}[Proof of Proposition \ref{prop:IPF_convergence}]
Recall that $\mathbb{S}^{(2i)}  = \mathbb{Q}^{(i)} \in \mathcal{P}_0(\pi_0)$ and $\mathbb{S}^{(2i+1)}  = \mathbb{P}^{(i+1)}\in \mathcal{P}_T(\pi_T)$, and that for any $\mathbb{H} \in \mathcal{P}_T(\pi_T)$ and $\mathbb{G} \in \mathcal{P}(\mathsf{E}^{T+1})$,
\begin{equation}
\mathrm{KL}(\mathbb{H}|\mathbb{G})  =  \mathrm{KL}(\pi_T | g_T) + \int_{\mathsf{E}} \mathrm{KL}\left(\mathbb{H}(\mathrm{d}x_{0:T-1} | x_T) | \mathbb{G}(\mathrm{d}x_{0:T-1} | x_T)\right) \pi_T(\mathrm{d}x_{T}).
\end{equation}
Hence, $\mathbb{P}^{(i+1)} = \argmin_{\mathbb{H} \in \mathcal{P}_T(\pi_T)} \mathrm{KL}(\mathbb{H} | \mathbb{Q}^{(i)})$ is such that $\mathbb{P}^{(i+1)}(\mathrm{d}x_{0:T-1} | x_T) = \mathbb{Q}^{(i)}(\mathrm{d}x_{0:T-1} | x_T)$ for $\pi_T$-almost every $x_T$, since the constraint pertains only to the first term in the decomposition above. So, for any $\mathbb{H} \in \mathcal{P}_T(\pi_T)$,
\begin{align}
\mathrm{KL}(\mathbb{H}|\mathbb{S}^{(2i)})  - \mathrm{KL}(\mathbb{H}|\mathbb{S}^{(2i+1)}) &= \mathrm{KL}(\pi_T | s^{(2i)}_T) \\
&= \mathrm{KL}(\pi_0 | s^{(2i)}_0) + \mathrm{KL}(\pi_T | s^{(2i)}_T),
\end{align}
since $s^{(2i)}_0 = \pi_0$. By analogous reasoning, we have that $\mathbb{Q}^{(i)}(\mathrm{d}x_{1:T} | x_0) = \mathbb{P}^{(i)}(\mathrm{d}x_{1:T} | x_0)$ for $\pi_0$-almost every $x_0$ and that for any $\mathbb{H} \in \mathcal{P}_0(\pi_0)$,
\begin{align}
\mathrm{KL}(\mathbb{H}|\mathbb{S}^{(2i-1)})  - \mathrm{KL}(\mathbb{H}|\mathbb{S}^{(2i)}) &= \mathrm{KL}(\pi_0 | s^{(2i-1)}_0) \\
&= \mathrm{KL}(\pi_0 | s^{(2i-1)}_0) + \mathrm{KL}(\pi_T | s^{(2i-1)}_T).
\end{align}

Hence, for any $\mathbb{H} \in \mathcal{P}_{0,T}(\pi_{0,T})$ and $k \in\mathbb{N}$ we have that
\begin{equation}
\mathrm{KL}(\mathbb{H}|\mathbb{S}^{(0)}) - \mathrm{KL}(\mathbb{H}|\mathbb{S}^{(k)}) =  \sum_{i=1}^{k-1} \mathrm{KL}(\pi_0 | s^{(i)}_0) + \mathrm{KL}(\pi_T | s^{(i)}_T).
\end{equation}
Let $\varepsilon > 0$. Let $k^\star$ be the first iteration such that $\mathrm{KL}(\pi_0 | s^{(i)}_0) + \mathrm{KL}(\pi_T | s^{(i)}_T) < \varepsilon$, the existence of $k^\star$ being 
guaranteed by \citet[][Proposition 2.1]{ruschendorf1995convergence}. By the definition of $k^\star$, we have
\begin{equation}
\mathrm{KL}(\mathbb{H}|\mathbb{S}^{(0)}) \geq \mathrm{KL}(\mathbb{H}|\mathbb{S}^{(0)}) - \mathrm{KL}(\mathbb{H}|\mathbb{S}^{(k^\star)}) \geq k^\star \varepsilon,
\end{equation}
from which it follows that
\begin{equation}
k^\star \leq  \frac{\mathrm{KL}(\mathbb{H}|\mathbb{S}^{(0)})}{\varepsilon} = \frac{\mathrm{KL}(\mathbb{H}|\mathbb{Q})}{\varepsilon}.
\end{equation}
In particular, $k^\star \leq \mathrm{KL}(\mathbb{S}|\mathbb{Q})/\varepsilon$ since $\mathbb{S} \in \mathcal{P}_{0,T}(\pi_{0,T})$. The result follows.
\end{proof}

\section{Linear quadratic Gaussian} \label{appendix:lqg}
Recall the setting introduced in Section \ref{sec:lqg}, in which we set $\pi_0(\mathrm{d}x_0) = \mathcal{N}(x_0; \mu_0, \Sigma_0)\mathrm{d}x_0$ and $\pi_T(\mathrm{d}x_T) = \mathcal{N}(x_T; \mu_T, \Sigma_T)\mathrm{d}x_T$ for some $\mu_0,\mu_T\in \mathbb{R}^d$ and $\Sigma_0,\Sigma_T\in \mathbb{R}^{d\times d}$, and for each $t\in[1:T]$, the kernel $M_t(x_{t-1},\mathrm{d}x_t) = \mathcal{N}(x_t; K_tx_{t-1} + r_t, H_t)\mathrm{d}x_t$ for some $r_t \in \mathbb{R}^d$ and $K_t, H_t\in \mathbb{R}^{d\times d}$. We derive the form of the exact policies $\psi^{(i)}$, which enables the comparison with the approximate method.

To find $\psi^{(i)}$ and the corresponding path measure $\mathbb{Q}^{\psi^{(i)}}$ defined by the IPF iterations, we proceed by induction. Suppose that $\mathbb{Q}^{\psi^{(i-1)}} (\mathrm{d}x_{0:T})= \pi_0(\mathrm{d}x_0)\prod_{t=1}^T M_t^{\psi^{(i-1)}}(x_{t-1},\mathrm{d}x_t)$, where $M_t^{\psi^{(i-1)}}(x_{t-1},\mathrm{d}x_t) = \mathcal{N}(x_t; K^{(i-1)}_tx_{t-1} + r^{(i-1)}_t, H^{(i-1)}_t)\mathrm{d}x_t$. The marginal distributions of $\mathbb{Q}^{\psi^{(i-1)}}$ are then Gaussian: for each $t\in[1:T]$, $q_t^{\psi^{(i-1)}}(\mathrm{d}x_t) = \mathcal{N}(x_t; \mu^{(i-1)}_t, \Sigma^{(i-1)}_t)\mathrm{d}x_t$, where $\mu^{(i-1)}_t$ and $\Sigma^{(i-1)}_t$ are given by the recursions
\begin{equation}
\mu^{(i-1)}_t = K_t^{(i-1)}\mu_{t-1}^{(i-1)} + r_t^{(i-1)}, \quad \Sigma^{(i-1)}_t = H_t^{(i-1)} + K_t^{(i-1)}\Sigma_{t-1}^{(i-1)}(K_t^{(i-1)})^\top, \quad t\in[1:T].
\end{equation}
In particular, this representation holds for $t=T$, which allows us to express $\psi_T^{(i)} = \mathrm{d}\pi_T/\mathrm{d}q_T^{\psi^{(i-1)}}$ on the form
$-\log\psi_T^{(i)}(x_T) = x_T^\top A_T^{(i)}x_T + x_T^\top b_T^{(i)} + c_T^{(i)}$, where
\begin{align*}
\begin{gathered}
A^{(i)}_T = \frac{1}{2}\left(\Sigma^{-1}-(\Sigma_T^{(i-1)})^{-1}\right), \qquad b^{(i)}_T = (\Sigma_T^{(i-1)})^{-1}\mu^{(i-1)}_T - \Sigma^{-1}\mu, \\
c^{(i)}_T = \frac{1}{2}\left[ \mu^\top \Sigma^{-1}\mu - \mu_T^{(i-1)\top} (\Sigma_T^{(i-1)})^{-1}\mu^{(i-1)}_T + \log \left(\det (\Sigma_T^{(i-1)})^{-1}\right) -\log\left( \det \Sigma^{-1}\right) \right].
\end{gathered}
\end{align*}

By induction on $t$ (while keeping $i$ fixed), one can show that $\psi^{(i)}_{t-1}(x_{t-1})$ can be written
$$-\log\psi^{(i)}_{t-1}(x_{t-1}) = x_{t-1}^{\top} A^{(i)}_{t-1}x_{t-1} + x_{t-1}^{\top}b^{(i)}_{t-1} + c^{(i)}_{t-1},$$
where $A^{(i)}_{t-1}, b^{(i)}_{t-1}$ and $c^{(i)}_{t-1}$  satisfy the backward recursions
\begin{align*}
A^{(i)}_{t-1} &= \frac{1}{2}(K_t^{(i-1)})^\top\left[(H_t^{(i-1)})^{-1} - \frac{1}{2}(H_t^{(i-1)})^{-1} \left(A^{(i)}_t + \frac{1}{2}(H_t^{(i-1)})^{-1} \right)^{-1}(H_t^{(i-1)})^{-1} \right]K^{(i-1)}_t,\\
b^{(i)}_{t-1} &= (K_t^{(i-1)})^\top (H_t^{(i-1)})^{-1} \left[ r^{(i-1)}_t - \frac{1}{2} \left(A^{(i)}_t + \frac{1}{2}(H_t^{(i-1)})^{-1}\right)^{-1}\left((H_t^{(i-1)})^{-1}r^{(i-1)}_t  - b^{(i)}_t\right)\right], \\
c^{(i)}_{t-1} &= c^{(i)}_t + \frac{1}{2}r_t^{(i-1)\top} (H_t^{(i-1)})^{-1} r^{(i-1)}_t \\
&- \frac{1}{4}\left((H_t^{(i-1)})^{-1}r^{(i-1)}_t  - b^{(i)}_t\right)^\top \left(A^{(i)}_t + \frac{1}{2}(H_t^{(i-1)})^{-1}\right)^{-1}\left((H_t^{(i-1)})^{-1}r^{(i-1)}_t  - b^{(i)}_t\right),
\end{align*}
for $t \in [1:T]$, initialized at the $A^{(i)}_T, b^{(i)}_T$ and $c^{(i)}_T$ given above.

The updated Markov kernels are given by $M_{t}^{\psi^{(i)}}(x_{t-1},\mathrm{d}x_{t})=\mathcal{N}\left(x_{t};K_{t}^{(i)}x_{t-1}+r_{t}^{(i)},H_{t}^{(i)}\right)\mathrm{d}x_{t}$, where the updated parameters $K_{t}^{(i)},H_{t}^{(i)}$ and $r_{t}^{(i+1)}$ satisfy
\begin{equation*}
\begin{gathered}
H_{t}^{(i)}=\left((H_{t}^{(i-1)})^{-1}+2A_{t}^{(i)}\right)^{-1},\qquad K_{t}^{(i)}=\left((H_{t}^{(i-1)})^{-1}+2A_{t}^{(i)}\right)^{-1}(H_{t}^{(i-1)})^{-1}K_{t}^{(i-1)},\\
r_{t}^{(i)}=\left((H_{t}^{(i-1)})^{-1}+2A_{t}^{(i)}\right)^{-1}\left((H_{t}^{(i-1)})^{-1}r_{t}^{(i-1)}-b_{t}^{(i)}\right).
\end{gathered}
\end{equation*}
The next IPF iterates can then be expressed
\begin{equation*}
\mathbb{P}^{\psi^{(i)}}(\mathrm{d}x_{0:T}) = p_{0}^{\psi^{(i)}}(\mathrm{d}x_{0})\prod_{t=1}^{T}M_{t}^{\psi^{(i)}}(x_{t-1},\mathrm{d}x_{t}),\quad \mathbb{Q}^{\psi^{(i)}}(\mathrm{d}x_{0:T}) =\pi_{0}(\mathrm{d}x_{0})\prod_{t=1}^{T}M_{t}^{\psi^{(i)}}(x_{t-1},\mathrm{d}x_{t}),
\end{equation*}
where $p_{0}^{\psi^{(i)}}(\mathrm{d}x_{0})=\mathcal{N}\left(x_{0};\gamma^{(i)},\Gamma^{(i)}\right)\mathrm{d}x_{0}$, with
$$\Gamma^{(i)}=\left(\Sigma_{0}^{-1}+2A_{0}^{(i)}\right)^{-1},\qquad\gamma^{(i)}=\left(\Sigma_{0}^{-1}+2A_{0}^{(i)}\right)^{-1}\left(\Sigma_{0}^{-1}\mu_{0}-b_{0}^{(i)}\right).$$

The potential function $\alpha^{(i)}$ defined by \eqref{eq:IPF_potential1} can be written
$-\log\alpha^{(i)}(x_{0}) =-\log\alpha^{(i-1)}(x_{0})+x_{0}^{\top}A_{\alpha}^{(i)}x_{0}+x_{0}^{\top}b_{\alpha}^{(i)}+c_{\alpha}^{(i)}$, where
\begin{equation*}
 \begin{gathered}
 A_{\alpha}^{(i)}=\frac{1}{2}\left(\Sigma_{0}^{-1}-(\Gamma^{(i)})^{-1}\right)=-A_{0}^{(i)},\qquad b_{\alpha}^{(i)}=(\Gamma^{(i)})^{-1}\gamma^{(i)}-\Sigma_{0}^{-1}\mu_{0}=-b_{0}^{(i)},\\
c_{\alpha}^{(i)}=\frac{1}{2}\left[\mu_{0}^{\top}\Sigma_{0}^{-1}\mu_{0}-\gamma^{(i)\top}(\Gamma^{(i)})^{-1}\gamma^{(i)}+\log\left(\det(\Gamma^{(i)})^{-1}\right)-\log\left(\det\Sigma_{0}^{-1}\right)\right]
\end{gathered}
\end{equation*}
In other words, $-\log\alpha^{(i)}(x_{0})=x_{0}^{\top}F_{\alpha}^{(i)}x_{0}+x_{0}^{\top}v_{\alpha}^{(i)}+d_{\alpha}^{(i)}$, where
$$F_{\alpha}^{(i)}=\sum_{j=1}^{i}A_{\alpha}^{(j)},\qquad v_{\alpha}^{(i)}=\sum_{j=1}^{i}b_{\alpha}^{(j)},\qquad d_{\alpha}^{(i)}=\sum_{j=1}^{i}c_{\alpha}^{(j)}.$$

Similarly, the potential $\beta^{(i)}$ defined by  \eqref{eq:IPF_potential2} can be written
\begin{align*}
-\log\beta^{(i)}(x_{T}) &=-\log\beta^{(i-1)}(x_{T})+x_{T}^{\top}A_{T}^{(i)}x_{T}+x_{T}^{\top}b_{T}^{(i)}+c_{T}^{(i)}\\
	&=x_{T}^{\top}F_{\beta}^{(i)}x_{T}+x_{T}^{\top}v_{\beta}^{(i)}+d_{\beta}^{(i)},
\end{align*}
where
$$F_{\beta}^{(i)}=\sum_{j=1}^{i}A_{T}^{(j)},\qquad v_{\beta}^{(i)}=\sum_{j=1}^{i}b_{T}^{(j)},\qquad d_{\beta}^{(i)}=\sum_{j=1}^{i}c_{T}^{(j)}.$$

\section{Additional algorithms}\label{appendix:algos}
 \begin{algorithm}{\small
\caption{\label{algorithm:aIPF_langevin} Approximate IPF for discretized Langevin dynamics}
\textbf{Input:} Number of time-steps $T\in\mathbb{N}$, terminal time $\tau>0$, function classes $\{\mathsf{F}_t\}_{t\in[0:T]}$, number of particles $N\in\mathbb{N}$, number of iterations $I\in\mathbb{N}$.
\begin{enumerate}
\item Initialize: Set $h = \tau/T$,  $\hat{\psi}^{(0)}_t = 1$ for $t \in [0:T]$, and
$$M_t(x_{t-1},x_t) = \mathcal{N}\left(x_t; x_{t-1} + \frac{h}{2}\nabla \log \pi_t(x_{t-1}), h\mathcal{I}_d \right), \quad t\in[1:T].$$
\item For $1\leq i \leq I$:
\begin{enumerate}
\item For each $t\in[0:T]$, define $$\log \bar{\psi}_t^{(i-1)}(x_{t-1},x_t) = \log \hat{\psi}^{(i-1)} _t (x_{t-1}) + \nabla \log \hat{\psi}_t^{(i-1)} (x_{t-1})^\top (x_t - x_{t-1}).$$
\item Sample trajectories $\{X_{0:T}^n\}_{n\in[1:N]}$ from $\mathbb{Q}^{\bar{\psi}^{(i-1)}}$: for each $n\in [1:N]$, sample $X^n_0 \sim \pi_0(\mathrm{d}x_0)$ and $$X^n_t \sim \mathcal{N}\left(X_{t-1}^n + \frac{h}{2}\nabla \log \pi_t(X_{t-1}^n) + h\nabla \log \psi_t^{(i-1)}(X_{t-1}^n), h\mathcal{I}_d \right), \quad t\in[1:T].$$
\item For each $n\in[1:N]$, compute an estimator $\tilde{\phi}_T^{(i-1)}(X_T^n)$ of $\phi_T^{(i-1)}(X_T^n) = \mathrm{d}\pi_T/\mathrm{d}q_T^{\bar{\psi}^{(i-1)}}(X^n_T)$ using Algorithm \ref{algorithm:csmc},
e.g. using the backward kernels for policy $\psi$
$$L_{t-1}^{\bar{\psi}}(x_t,x_{t-1}) =  \mathcal{N}\left(x_{t-1}; x_{t} + \frac{h}{2}\nabla \log \pi_{t-1}(x_{t}) - h\nabla \log \psi_t(x_{t}), h\mathcal{I}_d \right).$$
\item Approximate dynamic programming: perform the recursion
\begin{align*}
\hat{\phi}_T^{(i)} &= \argmin_{f\in\mathsf{F}_T} \sum_{n = 1}^N\left| \log f(X_T^n) - \log \tilde{\phi}_T^{(i-1)}(X_T^n)\right|^2, \\
\hat{\phi}_t^{(i)} &= \argmin_{f \in\mathsf{F}_t} \sum_{n = 1}^N\left| \log f(X_t^n) - \log M_{t+1}^{\bar{\psi}^{(i-1)}}(\bar{\phi}^{(i)}_{t+1})(X_{t}^n)\right|^2,\quad t\in[1:T-1],
\end{align*}
where $\bar{\phi}^{(i)}$ is defined analogously to $\bar{\psi}^{(i-1)}$.
\item Set $\hat{\psi}^{(i)}_0 = 1$ and $\hat{\psi}^{(i)}_t = \hat{\psi}^{(i-1)}_t \cdot \hat{\phi}^{(i)}_t$ for $t\in[1:T]$.
\end{enumerate}
\end{enumerate}
\textbf{Output:} Policy $\hat{\psi}^{(I)}$.}
\end{algorithm}

\end{appendix}

\end{document}